\numberwithin{equation}{section}
\setlist{nosep}
\setlist{noitemsep}
\newcommand{\Z}{\mathbb{Z}}
\newcommand{\R}{\mathbb{R}}
\newcommand{\C}{\mathbb{C}}
\newtheorem{theo}{Theorem}
\newtheorem{prop}{Proposition}[section]
\newtheorem{lem}[prop]{Lemma}
\newtheorem{coro}[prop]{Corollary}
\newtheorem{remark}[prop]{Remark}
\theoremstyle{plain}
\theoremstyle{definition}
\def \t0{\rightarrow 0} 
\def \be{\begin{equation}}
\def \ee{\end{equation}}
\def \hal{\frac{1}{2}}
\def \div{\mathrm{div} \,} 
\def \1{\mathbf{1}} 
\def \mc{\mathcal}
\def\nab{\nabla}
\def\({\left(}
\def\){\right)}
\def\Xint#1{\mathchoice
   {\XXint\displaystyle\textstyle{#1}}%
   {\XXint\textstyle\scriptstyle{#1}}%
   {\XXint\scriptstyle\scriptscriptstyle{#1}}%
   {\XXint\scriptscriptstyle\scriptscriptstyle{#1}}%
   \!\int}
\def\XXint#1#2#3{{\setbox0=\hbox{$#1{#2#3}{\int}$}
     \vcenter{\hbox{$#2#3$}}\kern-.5\wd0}}
\def\dashint{\Xint-}
\def \XN{X_N}
\def \YN{Y_N}
\def \be{\begin{equation}}
\def \ee{\end{equation}}
\newcommand{\dip}{\mathrm{dip}}
\newcommand{\dd}{\mathrm{d}}
\newcommand{\dR}{\mathbb{R}}
\def\Xint#1{\mathchoice
   {\XXint\displaystyle\textstyle{#1}}%
   {\XXint\textstyle\scriptstyle{#1}}%
   {\XXint\scriptstyle\scriptscriptstyle{#1}}%
   {\XXint\scriptscriptstyle\scriptscriptstyle{#1}}%
   \!\int}
\def\XXint#1#2#3{{\setbox0=\hbox{$#1{#2#3}{\int}$}
     \vcenter{\hbox{$#2#3$}}\kern-.5\wd0}}
\def\dashint{\Xint-}
\def \XN{X_N}
\def \YN{Y_N}
\def\ZN{Z_{2N}}
\def \Fluct{\mathrm{Fluct}}
\def\Esp{\mathbb{E}} 
\def \ZNbeta{Z_{N,\beta}}
\def\g{\mathsf{g}}
\def \C{\mathcal{C}}
\def\indic{\mathbf{1}}
\def\namedlabel#1#2{\begingroup
    #2%
    \def\@currentlabel{#2}%
    \phantomsection\label{#1}\endgroup
}
\def \epsilon{\varepsilon}
\def \C{\mathcal{C}} 
\def\F{\mathsf{F}}
\def \rr{\mathsf{r}}
\def\F{\mathsf{F}}
\def\K{\mathsf{K}}
\def\Esp{\mathbb{E}}
\def\rr{\mathsf{r}}
\def\P{\mathbb{P}_{N,\beta}^{\lambda}}
\def\Z{\mathsf{Z}_{N,\beta}^{\lambda}}
\def\K{\mathsf{K}}
\def\pair{\mathrm{pair}}
\def\nn{\mathrm{nn}}
\newcommand{\dE}{\mathbb{E}}
\begin{document}
\title[Dipoles for the 2CP]{Dipole Formation in the Two-component Plasma}
\author{Jeanne Boursier}
\address{Department of Mathematics, Columbia University, 2990 Broadway, New York, NY 10027.}
\email{jb4893@columbia.edu}
\author{Sylvia Serfaty}
\address{Courant Institute of Mathematical Sciences, 251 Mercer street, New York, NY 10012.\\
\& Sorbonne Universit\'e,  CNRS, Laboratoire Jacques-Louis Lions.}
\thanks{S.S. is supported by NSF grant DMS-2247846 and by the Simons Foundation through the Simons Investigator program.}
\email{serfaty@cims.nyu.edu}
\maketitle


\begin{abstract}
We consider the two-dimensional two-component plasma, or Coulomb gas, consisting of $N$ positive and $N$ negative charges with logarithmic interaction. We introduce a suitable regularization of the interaction by smearing the charges over a small length scale $\lambda$, which allows us to give meaning to the system in the continuum at any temperature. We provide an expansion of the free energy in terms of the inverse temperature $\beta$, as $N \to \infty$ and $\lambda \to 0$. Doing so allows us to show that, for $\beta \geq 2$, the charges (for the most part) pair into neutral dipoles of very small size as $\lambda \to 0$. This complements the prior work of Leblé, Zeitouni, and the second author, which proved that this does not happen for $\beta < 2$, thereby implying a transition at $\beta = 2$ from free charges to dipole pairs. Moreover, we obtain an estimate on the size of linear statistics.

The description in terms of dipoles is made via a decomposition into nearest-neighbor graphs of the point configurations, \`a la Gunson-Panta. This is combined with new energy estimates obtained via an electric reformulation of the interaction energy and a ball-growth method, which are expressed in terms of the nearest-neighbor graph distances only. In this way, the model is compared to a reduced nearest-neighbor interaction model, by showing the relative smallness of the dipole-dipole interactions.
\end{abstract}

\section{Introduction}

\subsection{Setting of the problem}
In the 1970's, Kosterlitz and Thouless \cite{Kosterlitz1974,Kosterlitz1973} and independently Berezinsky \cite{Berezinsky1970fr} predicted a completely new type of phase transitions without long range order  in two-dimensional  systems, now called Berezinsky-Kosterlitz-Thouless (BKT) transition. This celebrated transition (see \cite{mkp} for a review) was predicted to happen in a whole range of models which exhibit quantized vortices in a neutral ensemble, more specifically the XY or ``rotator" spin model, models of dislocations and superfluids, and it  has important consequences for condensed matter physics.

The transition in the XY model is probably the one that has attracted the most attention in the mathematical physics community. 
 In this model unit spins are sampled on a lattice, constituting a $\mathbb{U}(1)$ analogue of the Ising model.  The BKT transition consists in  that the correlation function between distant spins decays exponentially above the transition temperature, and decays in power law below \cite{mcbryanspencer,frohlichspencer,mkp,bricmont}.
 
  This transition is explained by the formation of topological vortices, which are points around which the spin field has a nonzero degree or winding number. Below the transition temperature, vortices are bound into dipole pairs (i.e.,~pairs of vortices with opposite winding numbers), while above the transition temperature, vortices are like ``free particles".

In the original papers, as well as in subsequent research, it is expected that in the XY model (or its simplified variant, the Villain model) the energy of the system can be split into a vortex-gas energy and a spin-wave contribution, corresponding to the fluctuations around the vortex configurations \cite{Kosterlitz1974,Kosterlitz1973,kennedy}.
This statement turns out to be delicate to prove rigorously, and this has attracted the attention of researchers, even recently \cite{garban2020statistical,garban2020quantitative}. 

Once the spin-wave contribution can be separated, the model reduces to a (2D) gas of dipoles with logarithmic interaction, which can also be called a two-component plasma, or (two-component) Coulomb gas. The Coulomb gas is thus a fundamental model on which to understand the BKT transition, as  seen in the original paper  of Kosterlitz \cite{Kosterlitz1973}.

The lattice (two-component) Coulomb gas was  studied in the seminal work of Fr\"ohlich-Spencer \cite{frohlichspencer} via the sine-Gordon representation and expansions into multipole ensembles, allowing to analyze the decay rate of correlation functions, thus giving the first proof of the BKT transition, see also \cite{mcbryanspencer,mkp}.

The Coulomb gas may as well be studied in the continuum rather than on a lattice, and is expected to exhibit the same transition between a situation with free vortices and a situation with vortices of opposite sign bound in dipole pairs. There is a subtlety however, due to the fact that this ``dipole transition" should happen at inverse temperature $\beta=2$ in the units we use, while the KT transition between exponentially and algebraically decaying correlations is expected to happen at $\beta=4$ in this setting. Also, it is a little delicate to directly compare the situation of the Coulomb gas in the continuum where one takes a fixed number $N$ of charges  of each sign, corresponding to a canonical ensemble, and the situation of the XY model, corresponding to a grand-canonical ensemble where the number of vortices is not prescribed.

Here we will focus on the continuum Coulomb gas or ``two component plasma" in the canonical case and we  will provide a proof of the ``dipole transition" based simply on the analysis of dipoles pairs  via large deviations techniques that allow to weigh their energy and entropy costs, in some sense very close to the arguments  and computations found in the original papers \cite{Kosterlitz1973,Kosterlitz1974} and also in the seminal paper \cite{GunPan}.

\subsection{Model}
 We first consider the continuum Coulomb gas, defined as an ensemble with  configurations $(\XN, \YN)$ (with $X_N=(x_1, \dots, x_N) \in \Lambda^N $ and $Y_N=(y_1, \dots, y_N)\in \Lambda^N$  of $N$ positive and $N$ negative particles  (or vortices with degrees $+1$ or $-1$) in the blown-up cube $\Lambda = [0,\sqrt{N}]^2$ of $\R^2$, having  energy
\begin{equation}\label{1.1} \F(\XN, \YN)= \hal \( \sum_{i\neq j} - \log |x_i-x_j|-\log |y_i-y_j|+ 2 \sum_{i,j} \log |x_i-y_j|\),\end{equation} and 
consider the (canonical) ensemble 
\begin{equation}\label{ZGP} \frac{1}{Z_{N,\beta}} \exp\(- \beta \F(X_N, Y_N) \) \dd X_N \dd Y_N ,\end{equation}
with $\dd X_N$ and $\dd Y_N$ the uniform Lebesgue measures on $\Lambda^N$.
This model was studied in particular in \cite{GunPan,DeutschLavaud}, and more recently in \cite{LSZ}.
The integral defining $\ZNbeta$ diverges as soon as $\beta \ge 2$, due to the fact that the energy of very short dipoles diverges in a nonintegrable way, which corresponds to the dipole transition. 
The ensemble \eqref{ZGP} was thus studied only in the regime $\beta <2$ in the aforementioned works \cite{GunPan,DeutschLavaud,LSZ}.
The latest results of \cite{LSZ}, building on important insights from \cite{GunPan} and techniques developed for the study of the one-component Coulomb gas in \cite{SS2d,RougSer,LS1,LS2}, show an expansion of $\log \ZNbeta$ as $N \to \infty$, as well as a large deviations principle on point processes, which characterize a situation with free interacting particles, with competition between the attraction of opposite charges and the entropic repulsion. This corresponds to the situation of temperature larger than the critical temperature.

In order to study such a system for $\beta \ge 2$, a truncation of the interaction is needed, as already recognized in \cite{Kosterlitz1974,Kosterlitz1973} and analyzed in \cite{frohlichspencer}, see also the discussion in \cite{lacoin2019probabilistic}. 
Let us for shortcut always denote 
\begin{equation} 
\g(x)=-\log |x|,
\end{equation}
and we will abuse notation by considering $\g$ as either of function of $\R^2$ or of $\R$ depending on the context.

Truncating the interaction involves introducing a small lengthscale $\lambda$ and {\it renormalizing} the divergent part of the free energy as $\lambda\to 0$. A natural proposed way is to truncate the energy at a distance $\lambda$ and consider
\begin{equation}\label{eq:F2}   \hal \sum_{i, j} \g(x_i-x_j) \wedge \g(\lambda)+ \g(y_i-y_j)\wedge \g(\lambda) -2   \g(x_i-y_j) \wedge \g(\lambda),
\end{equation}
where $\wedge $ denotes the minimum of two numbers.

The precise method of truncation of the interaction is  not really important,  and here we propose a variant of this which is convenient for our techniques: instead of truncating $\g$ we consider charges smeared on discs of radius $\lambda$, with $\lambda$ small, interacting otherwise in the normal Coulomb fashion: letting $\delta_{z}^{(\lambda)}$ denote the uniform measure of mass $1$ supported on $ B(z, \lambda)$, we let 
\begin{equation} \label{defkappa}
 \kappa:= \iint \g(x-y) \delta_0^{(1)}(x)\delta_0^{(1)}(y),\end{equation}
 and observe, by scaling, that 
 \begin{equation}\label{geta0}\iint \g(x-y) \delta_0^{(\lambda)} (x) \delta_0^{(\lambda)}(y)= \g(\lambda)+ \kappa.\end{equation} 
 
 We then consider the energy
\begin{equation}\label{eq:F3}
\F_\lambda(\XN, \YN)= \hal \iint \g(x-y) \dd\( \sum_{i=1}^N \delta_{x_i}^{(\lambda)}-\delta_{y_i}^{(\lambda)}\)(x) 
 \dd\( \sum_{i=1}^N \delta_{x_i}^{(\lambda)}-\delta_{y_i}^{(\lambda)}\)(y) -  N   ( \g( \lambda)+\kappa).
 \end{equation}
Here, compared to \eqref{1.1} we have reinserted the self-interaction terms which are no longer infinite  but equal to $\g(\lambda) + \kappa$,
 and then subtracted them off.

 We will denote by 
 \begin{equation} \label{eq:defgeta} \g_\lambda(z)= \iint \g(x-y) \delta_0^{(\lambda)}(x)\delta_z^{(\lambda)} (y),
 \end{equation} the effective interaction between two points at distance $|z|$.  Since the convolution  $\g* \delta_0^{(\lambda)}$ is  harmonic outside of $B(0, \lambda)$, it follows from the mean-value theorem (or Newton's theorem) that 
 \begin{equation}  \g_\lambda(z)=  \int \g* \delta^{(\lambda)}_0 \delta_z^{(\lambda)} = \g(z) \quad \text{for } |z|\ge 2\lambda,\end{equation}  so $\g_\lambda$ coincides with $\g$ at large enough distance.
 
 Thus we see that $\F_\lambda$ is the same as \eqref{eq:F2} except with 
$\g(x_i-x_j)\wedge \g(\lambda)$ replaced by $\g_\lambda(x_i-x_j)$, and if the  distances between points are  larger than $\lambda$,  the interactions coincide and  $\F_\lambda$ coincides with $\F$.
Also if $\lambda=0$ then the definition in \eqref{eq:F3}  coincides with $\F$ of \eqref{1.1}, as proved in \cite{LSZ} -- this is essentially Newton's theorem and Green's formula.
Let us point out that the choice of $\delta_z^{(\lambda)}$ to be the uniform measure in the unit ball is unimportant, we could replace it by any distribution of the form 
$\frac{1}{\lambda^2} \rho(\frac{x-z}{\lambda})$  with $\rho $ radial, as was done in \cite{RougSer}. Newton's theorem would still apply and nothing else would change, except for the precise value of the constant $\kappa$.
Finally, we could in principle use charges smeared on circles instead of discs as in previous works \cite{petrache2017next,LS1,LS2}, it does make the initial computations simpler but the potential generated a circle is too singular for our needs here.
\smallskip

We  thus work with \eqref{eq:F3} and study 
\begin{equation} \label{defP}
\dd\P= \frac{1}{\Z} \exp\(- \beta \F_\lambda(\XN, \YN)\) \dd X_N \dd Y_N
\end{equation}
in the limit where $\lambda$ tends to zero, where 
\begin{equation} \label{eq:ZRV}\Z= \int_{\Lambda^{N}\times \Lambda^{N}} \exp\(-\beta \F_\lambda(\XN, \YN)\)  \dd \XN \dd\YN.
\end{equation}
When $\beta <2$ one can set $\lambda=0$ and recover the model studied in \cite{LSZ}, but when $\beta>2$ one expects $\log \Z$ to diverge as  $\lambda \to 0$.
The picture that emerges from the literature, mostly based on the sine-Gordon representation, 
 is well-described in \cite{lrv}: 
  for $\beta>2$, the divergence of the system as $\lambda\to 0$ corresponds to the pairing of short dipoles (of lengthscale $\lambda$), and the transition at $\beta=2$ is followed as $\beta$ increases by a sequence of transitions corresponding to the formation of a subdominant proportion of  multipoles (quadrupoles, sextupoles etc) as the temperature is decreased and the entropic repulsion becomes less strong \cite{frohlichspencer}. This is due to weakly attractive nature of the dipole-dipole interaction. When $\beta $ reaches $4$, in the grand canonical setting (when the number of particles is not fixed) the system is expected to collapse under the attraction of the dipoles, as first shown in \cite{Frohlich} via a Euclidean Field Theory approach, however we will see that it is not the case in the canonical setting  here.  
   
    Our main goal here is to analyze  \eqref{eq:F3}--\eqref{defP} via a simple and direct approach based solely on  energy and entropy, precisely via electrostatic estimates for the energy. We obtain below a precise free energy expansion as $N \to \infty$ and $\lambda \to 0$, and use it to prove that configurations mostly form free dipoles for all $\beta \in [2,+\infty)$ (this shows that  the multipoles, though present, concern only a vanishing fraction of the particles). Combined with the description of \cite{LSZ}, it constitutes a  proof of the dipole transition, and we hope this point of view will also inform the understanding of the BKT transition. We also address the important question of the fluctuations of the (two-component) Coulomb gas. 
 
Note that the two-component Coulomb gas or plasma is quite different from the one-component Coulomb gas or plasma, which consists only of positively charged particles repelling each other and confined by an external potential, or equivalently a uniform negative background charge (this is then called a jellium). The one-component Coulomb gas never diverges nor forms dipole pairs, but rather the particle density converges macroscopically to an equilibrium measure limit (dictated by the external potential) while at the microscopic scale the particles arrange themselves in more and more ordered point patterns as temperature decreases. In fact the system is expected to crystallize at zero temperature, at least in low dimensions. There has been much progress on the one-component plasma in recent years, including free energy expansion \cite{Lieb1997,LS1,Armstrong2019LocalLA,serfaty2020gaussian}, local laws for the distribution of points down to the microscale \cite{leble2017local,Armstrong2019LocalLA}, variational characterization of the limiting point processes \cite{LS1}, and CLTs for the fluctuations of linear statistics \cite{ahm,rider2007noise,bauerschmidt2016two,LS2,serfaty2020gaussian}.  We refer to \cite{lecturenotes} for a comprehensive description.

 The main points in common with  \cite{SS2d,LS1,leble2017local,LS2,Armstrong2019LocalLA,serfaty2020gaussian} but also with \cite{LSZ} will be the general philosophy of electrostatic energy and  large deviations techniques, as well as the {\it electric formulation} of the energy that we present just below.

\subsection{Main results}

The electric formulation mentioned above consists in reexpressing the energy in terms of the {\it electric potential} $h_\lambda$ generated by the configuration $(X_N, Y_N)$ and defined as a function {\it over all} $\R^2$  by 
\begin{equation}\label{eq:heta}h_\lambda[X_N, Y_N]:= \g* \( \sum_{i=1}^N \delta_{x_i}^{(\lambda)}-\delta_{y_i}^{(\lambda)}\),\end{equation} where $*$ denotes the convolution.
In the sequel, we will most often drop the $[X_N, Y_N]$ dependence in the notation.

Note that by definition of $\g$, $h_\lambda$ satisfies  the Poisson equation
\begin{equation}\label{eq:Deltaheta}
-\Delta h_\lambda[X_N, Y_N]= 2\pi \( \sum_{i=1}^N \delta_{x_i}^{(\lambda)}-\delta_{y_i}^{(\lambda)}\).\end{equation}
A direct insertion into \eqref{eq:F3} and integration by parts using \eqref{eq:Deltaheta} yields the following electric rewriting of the energy
\begin{equation}\label{eq:rewrF}\F_\lambda(\XN, Y_N)= \frac{1}{4\pi} \int_{\R^2} \left|\nab h_\lambda[X_N, Y_N]\right|^2 -  N (\g(\lambda)+\kappa).\end{equation}

Before stating our main result, let us introduce some more notation.
Let us define the probability measure 
\begin{equation}\label{eq:defmu}
   \dd \mu_\beta(r):=\frac{2\pi}{\mathcal Z_\beta}\exp\Bigr(\beta\g_1(r)\Bigr)\indic_{\R^+}(r)r \dd r,
\end{equation}
where $\g_1$ was defined in \eqref{eq:defgeta} and $\mathcal Z_\beta$ stands for the normalizing constant
\begin{equation}\label{def:Cbeta}
    \mathcal Z_{\beta}:=\begin{cases}
      2\pi\int_0^{\infty}\exp\Bigr(\beta\g_1(r)\Bigr)r \dd r & \text{if $\beta>2$}\\
      2\pi & \text{if $\beta=2$}.
    \end{cases}
\end{equation}
This constant is related to our precise way of smearing the Dirac charges and corresponds to the interaction of overlapping disc charges.

We will denote $\{z_1, \dots, z_{2N}\}= \{x_1, \dots, x_N, y_1, \dots, y_N\}$ the collection of all points (positive or negative) and denote their charge $d_i=1$ if $i\in [1,N]$  $d_i=-1$ if $i\in [N+1, 2N]$. 
We will also denote by $\phi_1(i)$ the index for the/a first nearest neighbor to $z_i$ among all the points $z_j, j\neq i$ (if there are multiple nearest neighbors, we just pick one).
 A generic configuration can be quite complicated, in particular it is not obvious how to extract its dipoles.
 In this paper, dipoles are defined  via the nearest neighbor graph of a configuration, following important ideas of \cite{GunPan}. This definition ignores, in the first pass, the sign of the particle, by considering only for each point   $z_i$, its first nearest neighbor $\phi_1(z_i)$ of arbitrary sign, and its nearest neighbor distance denoted  $\rr_1(z_i)$.
 Other definitions are possible, and we will  pursue another approach in forthcoming work.

The nearest-neighbor graph is a (directed) graph in $\{1, \dots, 2N\}$ with an edge between $i$ and $j=\phi_1(i)$, i.e.~if $z_j$ is the nearest neighbor to $z_i$.  It is a general fact (see \cite{harary_palmer_1973}) that the connected components of such nearest-neighbor graphs are all formed of a 2-cycle with trees possibly attached to them. Note that points in a 2-cycle are characterized by the fact that $\phi_1\circ \phi_1(i)=i$.

\begin{figure}[htbp]  
    \centering
    \includegraphics[width=0.6\textwidth]{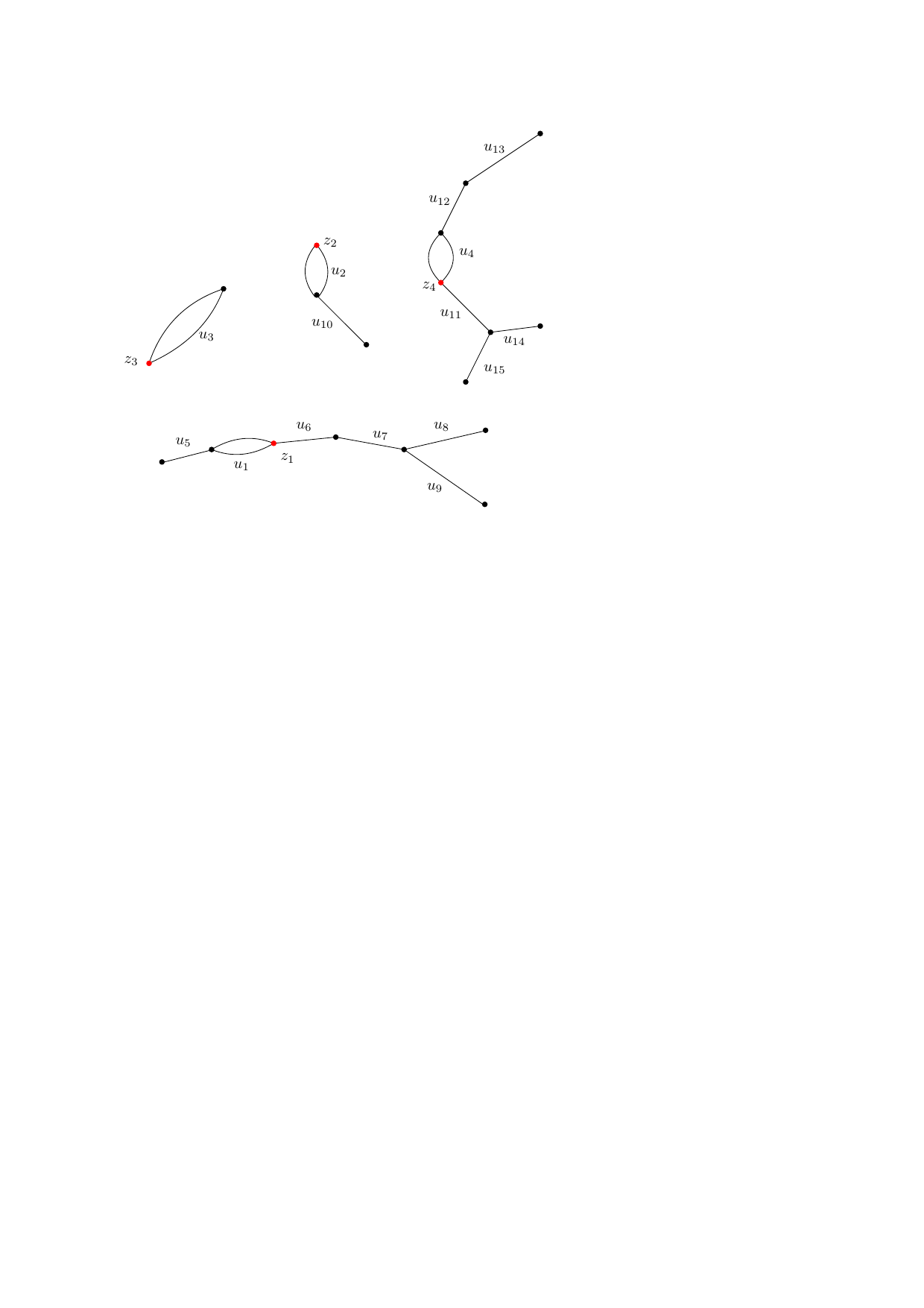}
    \caption{A nearest-neighbor graph with Gunson-Panta variables}
\end{figure}

We will say that a 2-cycle is isolated if it has no tree attached to it, i.e.~forms a whole connected component of the graph. We will say that such a 2-cycle is an isolated neutral 2-cycle if in addition it connects two points of opposite sign, i.e.,~it connects $i$ and $j$ with $d_i d_j=-1$. We also call such 2-cycles ``isolated  dipoles."

Our first theorem provides a free energy expansion and a concentration on dipoles configurations. 
Here, and in all the paper, $C_\beta, c_\beta$ denote positive constants depending only on $\beta$. Also, $M_0$ is the square of a geometric constant (the kissing number in two dimensions), that one can take to be equal to 36.

\begin{theo}\label{theorem:th1}
For all $\lambda>0$, let
\begin{equation}\label{eq:defgammal}
    \omega_\lambda:=\begin{cases}
    |\log \lambda|^{-\frac{1}{5+M_0}} & \text{if $\beta=2$}\\
    \lambda^{\frac{2(\beta-2)}{12-\beta+M_0} } &\text{if $\beta\in (2,4)$}\\
    \lambda^{\frac{1}{2+M_0} }|\log \lambda|^{\frac{1}{2+M_0}} & \text{if $\beta=4$}\\
   \lambda^{\frac{1}{2+M_0}} & \text{if $\beta>4$}, 
    \end{cases}
\end{equation} and observe that $\omega_\lambda = o_\lambda(1)$ as $\lambda \to 0$.
For each $\beta \in [2,+\infty) $ the following hold as $\lambda \to 0$.
\begin{enumerate}
\item We have the free energy expansion
\begin{equation}\label{eq:devlogz}
\log \Z = 2N \log N + N  \Big( (2-\beta) (\log \lambda)\indic_{\beta>2}+(\log|\log\lambda|)\indic_{\beta=2}+ (\log \mathcal Z_{\beta}-1)+  O_\beta(\omega_\lambda)\Big), \end{equation} 
\item  The energy essentially reduces to the nearest neighbor interactions of dipoles (or neutral 2-cycles) in the sense that
\begin{equation}\label{eq:momexp}
\log \Esp_{\P}\Bigr[ \exp\Bigr(\beta\Bigr(\F_\lambda-\tilde{\F}_\lambda)\Bigr)\Bigr) \Bigr]\leq C_\beta N\omega_\lambda
\end{equation}
where
\begin{equation*}
  \tilde{\F}_\lambda=-\frac{1}{2}\sum_{i=1}^{2N}\g_\lambda(z_i-z_{\phi_1(i)})\indic_{\phi_1\circ\phi_1(i)=i,d_id_{\phi_1(i)}=-1}.
\end{equation*}
\item The Gibbs measure is concentrated on configurations made  mostly  of  isolated dipoles of size $O(\lambda)$   in the following sense:\\
$\bullet$  letting $\mc{A}(p_0)$ be the event on which there are exactly $p_0$ isolated dipoles, we have
\begin{equation}\label{eq:b1}
    \log \P(\mc{A}(p_0))\leq -c_\beta N \Bigr(1-\frac{p_0}{N}\Bigr)^2+O_\beta(N\omega_\lambda).
\end{equation}
$\bullet$ denoting $\mc{B}(R,p_0')$ the event on which there are $p_0'$ isolated dipoles of length larger than $R$. Let $x_0:=\frac{p_0'}{N}$ and $\alpha:=(\frac{\lambda}{R})^{\beta-2}$. There exists $c_\beta>0$ such that if $x_0\geq c_\beta\alpha$, then 
\begin{equation}\label{eq:b2}
 \log \P(\mc{B}(R,p_0'))\leq  -C_\beta N x_0\log \Bigr(\frac{x_0}{\alpha}\Bigr)+O_\beta(N\omega_\lambda^{\frac{1}{2}}).
\end{equation}
 \end{enumerate}
\end{theo}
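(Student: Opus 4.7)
My plan is to treat the three parts as interconnected, with the electric energy comparison in (2) serving as the main engine driving both the partition-function expansion (1) and the concentration statements (3).

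To prove (2), I would establish, for each configuration, a near-pointwise lower bound $\F_\lambda \geq \tilde{\F}_\lambda - \mathrm{Error}$ using the electric rewriting $\F_\lambda = \frac{1}{4\pi}\int_{\R^2}|\nabla h_\lambda|^2 - N(\g(\lambda)+\kappa)$. The plan is to decompose each configuration via its nearest-neighbor graph in the Gunson--Panta spirit, isolating neutral $2$-cycles (the \emph{true} dipoles). Around each isolated dipole of size $\rr$, one draws a ball of radius slightly larger than $\rr$; inside these balls, Newton's theorem together with a ball-growth estimate extracts the contribution $-\g_\lambda(\rr)$ to $\F_\lambda$ up to a boundary error controlled by the nearest-neighbor distances of surrounding clusters. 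The complementary electric energy over the exterior of these balls is nonnegative and absorbs the contributions of bad points (same-sign $2$-cycles, $2$-cycles with trees, multipoles) plus residual dipole--dipole interactions. The various $\beta$-dependent exponents appearing in $\omega_\lambda$ arise from optimizing the ball radii against cluster density and the quadratic-in-distance decay of dipole--dipole fields.

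For (1), the lower bound on $\log \Z$ comes from restricting integration to configurations with $N$ well-separated, small dipoles: parametrize by a pairing $\sigma\in S_N$ (giving $N!$), $N$ dipole centers in $\Lambda$ (giving $|\Lambda|^N = N^N$), and a separation integral $\int\exp(\beta\g_\lambda(|z|))\,\dd z$ which, by the scaling $\g_\lambda(r)=\g_1(r/\lambda)-\log\lambda$, equals $\lambda^{2-\beta}\mathcal{Z}_\beta$ per dipole for $\beta>2$ (and $2\pi|\log\lambda|$ for $\beta=2$) up to a macroscopic cutoff error. The $N^N$ volume factor contributes $N\log N$, Stirling applied to $\log N!$ contributes a further $N\log N - N$, and combined with the per-dipole integral this yields the full expansion \eqref{eq:devlogz}. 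The matching upper bound on $\log \Z$ follows from the pointwise energy comparison of (2): $\F_\lambda \geq \tilde{\F}_\lambda - O(N\omega_\lambda)$ gives $\Z \leq \exp(O(N\omega_\lambda))\int\exp(-\beta\tilde{\F}_\lambda)$, and the partition function of $\tilde{\F}_\lambda$ factorizes exactly over nearest-neighbor components into the same product computed for the lower bound. Note that \eqref{eq:momexp} is equivalent to the reverse ratio inequality $\int\exp(-\beta\tilde{\F}_\lambda)/\Z \leq \exp(O(N\omega_\lambda))$ and could alternatively be used here.

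For (3), both concentration statements come from computing the partition function restricted to the event and forming a ratio with $\Z$. On $\mathcal{A}(p_0)$, the non-dipole points suffer an entropic/energetic deficit since bad components do not benefit from the full $\mathcal{Z}_\beta$-weight per pair; a combinatorial argument counting $O((N-p_0)^2/N)$ effective pair defects produces the quadratic rate $(1-p_0/N)^2$. For $\mathcal{B}(R,p_0')$, each long dipole contributes $\int_R^\infty\exp(\beta\g_\lambda(r))\,2\pi r\,\dd r\sim R^{2-\beta}$ instead of the full $\lambda^{2-\beta}\mathcal{Z}_\beta$, producing a per-dipole suppression $\alpha=(\lambda/R)^{\beta-2}$; the binomial entropy for the event that $Nx_0$ out of $N$ dipoles are long then yields the $x_0\log(x_0/\alpha)$ rate. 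Throughout, the main obstacle will be proving the electric energy inequality with the sharp, $\beta$-dependent exponent $\omega_\lambda$: the ball-growth method must be tuned carefully across the entire range $\beta\in[2,\infty)$, with the $\beta=2$ case critically divergent, the intermediate regime $(2,4)$ requiring a balance between ball volumes and residual dipole--dipole interactions, and the $M_0$-dependence encoding the two-dimensional kissing-number bound on how many clusters can pile up at a single scale.
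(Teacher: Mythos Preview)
Your high-level architecture is right and matches the paper: ball-growth on the electric formulation for an energy lower bound, an explicit dipole construction for the free-energy lower bound, and the identity $\Esp_{\P}[\exp(\beta(\F_\lambda-\tilde\F_\lambda))]=\K_{N,\beta}^\lambda/\Z$ for item (2). But there is a genuine gap in how you propose to get the \emph{upper} bound on $\log\Z$.

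You write that the ball-growth argument yields a pointwise inequality $\F_\lambda\ge\tilde\F_\lambda-O(N\omega_\lambda)$, after which $\Z\le e^{O(N\omega_\lambda)}\K_{N,\beta}^\lambda$ and the latter ``factorizes exactly.'' Neither claim is correct. The pointwise lower bound one actually gets (this is Corollary~\ref{coromino}) has \emph{configuration-dependent} error terms: a dipole--dipole term $\sum_{i\in I^{\dip}}(\rr_1(z_i)/\rr_2(z_i))^2$, a positive correction for non-neutral isolated pairs, and a term $C(N-K)$ counting non-trivial components. Each of these can be order $N$ on atypical configurations; the $\omega_\lambda$ rate only appears \emph{after} integration. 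Controlling the integrated contribution of these errors is the bulk of the work and is not a matter of tuning ball radii: one must partition phase space by nearest-neighbor graph type, perform the Gunson--Panta change of variables, and then estimate Dirichlet-type integrals with constraints. The paper does this through an auxiliary nearest-neighbor model (Proposition~\ref{prop:auxk}) that controls $\int\prod_i\rr_1(z_i)^{-\beta'/2}$ for a small subset of points, a two-species variant (Proposition~\ref{prop:two spiecies}), and then a delicate optimization over the graph parameters $(K,p,p_0)$ (Propositions~\ref{prop:isolated} and~\ref{prop:upper bound}). The $M_0$ dependence in $\omega_\lambda$ enters precisely here, via the kissing-number bound $K-p\le 2(M_0+1)(N-K)$ relating twice-isolated $2$-cycles to the total component count.

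Likewise, $\K_{N,\beta}^\lambda=\int e^{-\beta\tilde\F_\lambda}$ does not factorize: $\tilde\F_\lambda$ depends on the nearest-neighbor graph of all $2N$ points, so even computing its partition function (Lemma~\ref{lemma:part nn}) requires the same graph decomposition and optimization. Your lower-bound construction and your treatment of item~(3), especially the second bullet, are essentially on target; but without the integrated error analysis above, the upper bound and hence items (1) and (2) do not close.
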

The error rate $\omega_\lambda$ in the present proof is not sharp and could be optimized. However, we do not pursue this goal here. In a forthcoming work, we will provide an optimal expansion of the free energy for $\beta \in (2,4)$.
Note that we could also prove without too much effort that at leading order in $\lambda$, the dipoles follow a Poissonian point process, but we do not focus on this here.

\medskip

The formula \eqref{eq:devlogz} can be compared with the formula  for $\beta<2$ obtained in \cite{LSZ} \footnote{after noting that the scaling used in  \cite{LSZ} is different in the sense that the particles are in a box of size $1$.} and this already exhibits a transition at $\beta=2$, since the divergence in $\lambda$ is present only for $\beta \ge 2$.
The {\it screening method} of \cite{LSZ} would in fact allow us to prove an almost additivity of the free energy $\log \Z $  for any $\beta$, and the existence of a thermodynamic limit 
\begin{equation}\label{thermolim}
f(\beta, \lambda) = \lim_{N\to \infty}  \frac{1}{N}\( \log \Z- 2N \log N\)
\end{equation} with an explicit rate of convergence independent of $\lambda$. We believe the rate can be made to be $O(N^{-1/2}\log N)$ by analogy with \cite{Armstrong2019LocalLA} but we defer this to future work, in any case to obtain a rate independent of $\lambda$ it suffices to apply almost verbatim the proof  in \cite{LSZ}. 
For $\beta<2$, \eqref{thermolim}  is proven in \cite{LSZ} (with $\lambda=0$) and a variational characterization of $f$ is also provided there: 
\be f(\beta, 0)=  -\min_{P\in \mathcal P} \mathcal F_\beta(P)\end{equation}
where $\mathcal P $ corresponds to the space of stationary signed point processes of intensity $1$ (each species has intensity 1), and 
$\mathcal F_\beta$ (the rate function in the Large Deviations Principle proven there) is the sum of $\beta$ times a suitable ``renormalized energy" of point processes (an infinite volume Coulomb interaction energy) and a specific relative entropy with respect to the Poisson  point process of intensity 1. We see that in that regime we have free particles of positive and negative charges, whose positions is governed by the minimization of $\mathcal F_\beta$.

The result \eqref{eq:devlogz} then completes this picture by proving that 
for $\beta \in [2,+\infty)$
\begin{equation}\label{valeurf}
f(\beta, \lambda) = (2-\beta) \log \lambda + (\log |\log \lambda|) \indic_{\beta=2}+  \log \mathcal Z_{\beta} -1+ o_\lambda(1).\end{equation}
Here the energy is dominated by pure dipole energy, demonstrating the transition from free particles to bound pairs.
The question of the sharp rate of convergence $o_\lambda(1)$ is very important as it encodes the multipole transitions. 

We finally address the important question of the fluctuations of linear statistics. We provide an energetic control, similar in spirit to \cite[Prop 2.5]{LS2}, showing that Lipschitz functions fluctuate less than $N^{1/2} o_\lambda(1)$ as $\lambda$ tends to $0$. One could obtain a better rate of fluctuations in $o(N^{1/2})$ at fixed $\lambda$, but this would require more involved arguments.

\begin{theo}\label{prop:linear}
Consider a compactly supported Lipschitz test-function $\xi_0 :\R^2 \to \R$ and set $\xi=\xi_0(N^{-\nicefrac{1}{2}}\cdot)$. Defining  $$\Fluct_N(\xi):= \int_{\Lambda} \xi \, \dd \Big(\sum_{i=1}^N \delta_{x_i}-\delta_{y_i}\Big) ,$$
there exist constants $M_\beta>0$, $c_\beta>0$ and $C_\beta>0$ such that for all $M>M_\beta$,
\begin{equation*}
    \P\( \Fluct_N(\xi)^2 \geq M N \omega_\lambda\Vert\nabla\xi_0\Vert_{L^{\infty}}^2 \)\leq C(\beta) e^{-c(\beta)M N^{\nicefrac{1}{2}}},
\end{equation*}
where $\omega_\lambda$ is as in  \eqref{eq:defgammal}.
\end{theo}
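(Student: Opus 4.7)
The plan is to bound $\Fluct_N(\xi)$ via an electric Cauchy--Schwarz argument, leveraging the exponential moment bound \eqref{eq:momexp} from Theorem \ref{theorem:th1}. Since $\xi = \xi_0(N^{-\nicefrac{1}{2}}\cdot)$ satisfies $\|\nabla \xi\|_{L^\infty} = N^{-\nicefrac{1}{2}}\|\nabla\xi_0\|_{L^\infty}$ and $\mathrm{supp}(\nabla\xi)$ has area $O(N)$, one has $\|\nabla\xi\|_{L^2} \leq C\|\nabla\xi_0\|_{L^\infty}$. Replacing each Dirac $\delta_{z_i}$ by the smeared $\delta^{(\lambda)}_{z_i}$ in the definition of $\Fluct_N(\xi)$ costs at most $2N\lambda\|\nabla\xi\|_{L^\infty} = 2N^{\nicefrac{1}{2}}\lambda\|\nabla\xi_0\|_{L^\infty}$, which a case-by-case check against \eqref{eq:defgammal} shows is subdominant to the target $\sqrt{MN\omega_\lambda}\|\nabla\xi_0\|_{L^\infty}$. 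Invoking the Poisson equation \eqref{eq:Deltaheta} and integrating by parts yields
\begin{equation*}
\Fluct_N(\xi) = \frac{1}{2\pi}\int_{\R^2}\nabla\xi\cdot\nabla h_\lambda + O\bigl(N^{\nicefrac{1}{2}}\lambda\|\nabla\xi_0\|_{L^\infty}\bigr).
\end{equation*}

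I would then decompose along the dipole structure. Let $\mathcal{D}$ denote the set of isolated neutral 2-cycles of the nearest-neighbor graph (as in Theorem \ref{theorem:th1}), and write $h_\lambda = h^{\mathrm{dip}} + h^{\mathrm{res}}$ with $h^{\mathrm{dip}} := \sum_{(i,j)\in \mathcal D}\g\ast(\delta^{(\lambda)}_{z_i}-\delta^{(\lambda)}_{z_j})$. For each dipole $(i,j)\in \mathcal{D}$, integration by parts and the Lipschitz property give $\bigl|\int\nabla\xi\cdot\nabla\g\ast(\delta^{(\lambda)}_{z_i}-\delta^{(\lambda)}_{z_j})\bigr| = 2\pi|\xi_{B(z_i,\lambda)}-\xi_{B(z_j,\lambda)}| \leq 2\pi r_{ij}\|\nabla\xi\|_{L^\infty}$, so Cauchy--Schwarz on pairs produces
\begin{equation*}
\Bigl|\int\nabla\xi\cdot\nabla h^{\mathrm{dip}}\Bigr| \leq C\|\nabla\xi\|_{L^\infty}\sqrt{|\mathcal D|}\,\sqrt{\sum_{(i,j)\in \mathcal D}r_{ij}^2}.
\end{equation*}
Using $|\mathcal D|\leq N$ and writing $\sum r_{ij}^2 = \int_0^{\infty}2R\,\#\{r_{ij}\geq R\}\,dR$, the dipole-length tail bound \eqref{eq:b2} produces $\sum r_{ij}^2 \leq C_\beta MN\omega_\lambda$ except on an event of probability $\leq C_\beta e^{-c_\beta MN^{\nicefrac{1}{2}}}$; the exponents in \eqref{eq:defgammal} are in fact tuned precisely so that this matching holds in each regime of $\beta$.

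For the residual piece, Cauchy--Schwarz together with $\|\nabla\xi\|_{L^2}^2 \leq C\|\nabla\xi_0\|_{L^\infty}^2$ gives $\bigl|\int\nabla\xi\cdot\nabla h^{\mathrm{res}}\bigr| \leq C\|\nabla\xi_0\|_{L^\infty}\sqrt{\int|\nabla h^{\mathrm{res}}|^2}$. Expanding the $L^2$ identity $\int|\nabla h_\lambda|^2 = \int|\nabla h^{\mathrm{dip}}|^2 + 2\int\nabla h^{\mathrm{dip}}\cdot\nabla h^{\mathrm{res}} + \int|\nabla h^{\mathrm{res}}|^2$ and comparing with the electric rewriting \eqref{eq:rewrF} after computing $\int|\nabla h^{\mathrm{dip}}|^2 = 4\pi\sum_{(i,j)\in \mathcal D}(\g(\lambda)+\kappa - \g_\lambda(r_{ij})) + \mathrm{(dip\text{-}dip)}$, one identifies $\int|\nabla h^{\mathrm{res}}|^2$, up to cross terms and contributions from non-dipole points, with $4\pi(\F_\lambda - \tilde{\F}_\lambda)$. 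Markov applied to \eqref{eq:momexp} then gives $\F_\lambda - \tilde{\F}_\lambda \leq c_\beta MN\omega_\lambda$ outside an event of probability $\leq C_\beta e^{-c_\beta MN\omega_\lambda}$, and the bound \eqref{eq:b1} on non-dipole points handles the remaining contribution on a similarly small event.

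The main obstacle is turning the $L^2$ identity into the inequality $\int|\nabla h^{\mathrm{res}}|^2 \leq C(\F_\lambda - \tilde{\F}_\lambda) + \mathrm{(LOT)}$, because the cross term $\int\nabla h^{\mathrm{dip}}\cdot\nabla h^{\mathrm{res}}$ is not sign-definite and because two-dimensional dipole-dipole interactions decay only like $1/d^2$. Controlling it requires a localization and ball-growth argument analogous to the one already used in the paper to prove \eqref{eq:momexp}, combined with \eqref{eq:b1} to control the support of $h^{\mathrm{res}}$. Once this is in hand, assembling the dipole, residual, and smearing contributions, squaring, and adjusting constants produces $\Fluct_N(\xi)^2 \leq MN\omega_\lambda\|\nabla\xi_0\|_{L^\infty}^2$ with probability $\geq 1 - C_\beta e^{-c_\beta MN^{\nicefrac{1}{2}}}$; the suboptimal $N^{\nicefrac{1}{2}}$ (rather than $N$) in the exponent is the price of the Markov step applied to the exponential moment \eqref{eq:momexp}.
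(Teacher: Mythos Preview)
Your decomposition $h_\lambda = h^{\mathrm{dip}} + h^{\mathrm{res}}$ and the subsequent Cauchy--Schwarz bound on the residual piece has a genuine gap. Write out the $L^2$ identity carefully: with $p_0 = |\mathcal D|$,
\[
\frac{1}{4\pi}\int|\nabla h^{\mathrm{res}}|^2 \;=\; (\F_\lambda - \tilde\F_\lambda) \;+\; (N-p_0)(\g(\lambda)+\kappa) \;-\; \tfrac{1}{2\pi}\!\int\nabla h^{\mathrm{dip}}\cdot\nabla h^{\mathrm{res}} \;-\; (\text{dip--dip}).
\]
Even granting your control of the cross term and the dipole--dipole interactions, the self-energy term $(N-p_0)(\g(\lambda)+\kappa)\sim (N-p_0)|\log\lambda|$ remains. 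The bound \eqref{eq:b1} only yields $N-p_0 \le CN\sqrt{\omega_\lambda}$ with high probability, so this term is of order $N\sqrt{\omega_\lambda}\,|\log\lambda|$, which is \emph{not} $O(N\omega_\lambda)$ since $\omega_\lambda\to 0$. In other words, $h^{\mathrm{res}}$ still carries the full scale-$\lambda$ singularity of each non-dipole charge, and there are too many of them for the straight $L^2$ bound to close. Your sentence ``the bound \eqref{eq:b1} on non-dipole points handles the remaining contribution'' only controls their \emph{number}, not this $|\log\lambda|$-per-point cost.

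The paper circumvents exactly this obstacle by \emph{not} splitting the potential, but instead applying the ball-growth method uniformly: it replaces $h_\lambda$ by $h_{\vec\alpha}$ with each radius inflated to an intermediate scale $\alpha_i = (\tau_i\wedge\gamma)\vee\rr_1(z_i)$ for a parameter $\gamma$ to be optimized. The fluctuation is then bounded by $\int|\nabla h_{\vec\alpha}|^2$ plus a smearing error $N^{-1}(\sum_i\alpha_i)^2$. The point is that the energy lower bound of Proposition~\ref{prop:mino2}/Corollary~\ref{coromino}, rerun with this truncated $\vec\alpha$, shows $\int|\nabla h_{\vec\alpha}|^2 \le A_1 + C(A_2+A_3)$ where $A_1$ is essentially $\F_\lambda$ minus the nearest-neighbor energy (controlled by the exponential moment as in \eqref{eq:momexp}), $A_2$ collects the $(\rr_1/\rr_2)^2$ errors, and $A_3 = \sum (\rr_1/\gamma)^2\wedge 1$ is a new term coming from the truncation at $\gamma$, with exponential moment $O(N\varphi_\beta(\gamma))$. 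Growing all balls simultaneously is what kills the $|\log\lambda|$ self-energies that your $h^{\mathrm{res}}$ retains. The smearing error $N^{-1}(\sum\alpha_i)^2$ is then handled by the dipole-length tail bounds \eqref{eq:b1}--\eqref{eq:b2}, much as you do for your dipole piece; finally one optimizes $\gamma$ to balance $\gamma^2$ against $\varphi_\beta(\gamma)$. Your treatment of $h^{\mathrm{dip}}$ is in spirit the same as this last step, but the residual requires the ball-growth input, not a direct $L^2$ bound.
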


\subsection{Method}
In order to prove Theorem \ref{theorem:th1} 
we need to obtain sharp upper and lower bounds on the energy of a configuration in terms of its dipoles, and good corresponding ``volume estimates". 
This amounts to comparing the original model to a reduced nearest neighbor model 
and the main work is to prove that the two models are close. Their closeness then allows to show that the Gibbs measure concentrates on dipole configurations, control their sizes and control linear statistics.

 Combining the description in terms of nearest neighbor graph \`a la Gunson-Panta with the electric formulation \eqref{eq:rewrF} turns out to be an efficient way of  obtaining energy lower bounds. This is done by a {\it ball-growth method}, which consists in expanding the circular charges $\delta_{z_i}^{(\lambda)} $ into a charge  $\delta_{z_i}^{(\rr_1(z_i))}$ of same mass but supported in the disc $ B(z_i, \rr_1(z_i))$. This way the discs remain disjoint and Newton's theorem applies to show the interaction energy has essentially not changed during that growth process. This however misses an order 1 in the interaction energy of each dipole.  
 In order to avoid this loss, we push the method further by examining second nearest neighbor distances $\rr_2(z_i)$ and grow the circular charges until size $\rr_2(z_i)$. 
 In the works on the one-component plasma \cite{LS2,Armstrong2019LocalLA,serfaty2020gaussian}, see also \cite[Chap. 4]{lecturenotes}, we were able to take advantage of the fact that when all the charges are positive, the interaction of  disc charges decreases when the radii are increased.  This is no longer the case in a situation with different charge signs, and so instead of using monotonicity, we proceed to a direct estimation of the change in the interaction when the discs are increased. We then obtain an estimate which bounds from below the energy {\it in terms of nearest neighbor interactions only}.
  This part contains the most delicate estimates as we need to control the contributions of the dipole-dipole interactions and the errors they create, something not handled in \cite{GunPan,LSZ}. We   provide more precise estimates than in those papers  via  new ideas.

An important feature of this dipole decomposition  lower bound is that it is amenable to integration in phase-space with the method of Gunson-Panta, as revisited in \cite{LSZ}, of separating the integral over types of nearest neighbor graphs.  
A matching lower bound is provided, which leverages again on the electric formulation to compute the energy as a sum of noninteracting dipoles (they are made noninteracting by solving for local electric potentials satisfying zero Neumann boundary condition).
Once matching upper and lower bounds are obtained, it must follow that the Gibbs measure does concentrate on dipole configurations, as deduced in \eqref{eq:b1}.

The bound on fluctuations of Theorem \ref{prop:linear} is obtained by leveraging on the electric formulation of the energy which allows to show that the energy directly controls the fluctuations,  and combining it with  the  ball-growth method which is this time used by inflating the balls to  an intermediate  lengthscale $\gamma $ much larger than $\lambda$ and much smaller than $1$.

\medskip

{\bf Plan of the paper:}  Section \ref{sec2} is devoted to the proof of the energy lower bound in terms of nearest neighbor interactions  with dipolar interaction errors,  via the ball-growth method. In Section \ref{sec3}, this lower bound is inserted into the Gibbs measure to produce, via suitable decomposition of the phase-space and large deviation estimates, the free energy upper bound. 
Section \ref{sec4} provides a matching lower bound  by explicit construction of configurations and estimates of their free energy. Finally, Section \ref{sec5} provides the proof of Theorem \ref{prop:linear}.

\subsection{Notation}
\begin{itemize}
    \item Throughout the paper we denote by $C_\beta$ a constant depending only on $\beta$. This constant is allowed to vary from line to line. 
    \item We denote by $O_\beta(k_N)$ a sequence $u_N=O(k_N)$ depending on $\beta$.
    \item We denote by $a\vee b$ the number $\max(a,b)$ and $a \wedge b$ the number $\min (a,b)$.
\end{itemize}


\section{Nearest neighbor lower bound}\label{sec2}
\subsection{Definitions}
\subsubsection*{Signed point configurations} With the shortcut $Z_{2N}$ for $(X_N, Y_N)$ and $d_i=\pm 1$ for their signs,
we are able to rewrite \eqref{eq:heta} as 
\begin{equation*} 
h_\lambda=\g* \( \sum_{i=1}^{2N} d_i \delta_{z_i}^{(\lambda)}\)
\end{equation*}
with 
\begin{equation} \label{Deltah} -\Delta h_\lambda= 2\pi \sum_{i=1}^{2N} d_i \delta_{z_i}^{(\lambda)}.\end{equation}

When increasing the discs we will also denote similarly
for any vector $\vec{\alpha}=(\alpha_1, \dots , \alpha_{2N})$ in $\R^{2N}$
\begin{equation}\label{defhalpha} h_{\vec{\alpha}} = \g* \( \sum_{i=1}^{2N} d_i \delta_{z_i}^{(\alpha_i)} \).\end{equation}

\subsubsection*{Successive nearest neighbor distances}
First we set
\begin{equation}\label{nn2}\rr_1(z_i):=  \(\frac14 \min_{j\neq i} |z_i-z_j| \)\vee \lambda,  
  \end{equation}   then for each $p \ge 2$,
  \be\label{nn3}  \rr_p(z_i) := \(\frac14 \min_{j\notin\{i, \phi_1(i) ,\dots \phi_{p-1}(i)\}} |z_i-z_j|\)\vee \lambda\end{equation}
  where  $\phi_0(i)=i$ and for each $k\ge 1$, $z_{\phi_k(  i)}$ denotes some point (it is in general not unique) of the configuration $\notin\{z_i, \dots, z_{\phi_{k-1}(i)}\}$ such that $|z_i- \phi_k(i)|$ achieves the min that arises in the definition of $\rr_k(z_i)$. We call $z_{\phi_k(i)}$ the $k$-th nearest neighbor to $z_i$. 
    We note that we always have 
  \begin{equation*} \left(\frac 14 |z_i -z_{\phi_k(i)}|\right) \vee \lambda=  \rr_k(z_i).\end{equation*}

  \subsubsection*{Nearest neighbor graphs}
As discussed in the introduction, the dipole decomposition  estimate will be used in conjunction with the method of Gunson-Panta \cite{GunPan} which breaks the configuration into nearest neighbor graphs. It is worth noting however that \cite{GunPan} builds the nearest neighbor graphs of all particles, irrespective of their sign, whereas for us the sign will  play an important role later.

  The nearest-neighbor graph $\gamma_{2N}(Z_{2N})$ of $Z_{2N}$ is a directed graph  on $\{1,\ldots,2N\}$, with an edge from $p$ to $q$ if $z_q$ is the nearest-neighbor of $z_p$. The graph $\gamma_{2N}(Z_{2N})$ has between $1$ and $N$ connected components and each of its connected components contains a $2$-cycle with trees attached to each vertex of the $2$-cycle. We denote $D_{2N,K}$ the set of nearest-neighbor graphs on $\{1,\ldots,2N\}$ with $K$ connected components. Note that each labeling of points gives rise to a different digraph. 

Let $\gamma\in D_{2N,K}$. Let us denote $\mathcal{ I}_1,\ldots,\mathcal{ I}_K$ the connected components of $\gamma$ and for each $k\in \{1,\ldots,K\}$, let us label $m_k$ and $m_k'$ the two vertices of the 2-cycle in $\mathcal{ I}_k$ and call $\mathcal C_k= \{m_k, m_k'\}$ the corresponding 2-cycle.
We also let 
\begin{equation}
\label{defIdip}
I^{\pair} :=  \{\cup_{k} \mathcal C_k, \C_k=\{i,\phi_1(i)\} \}, \quad  I^{\dip}:= \{\cup_{k} \mathcal C_k, \C_k=\{i,\phi_1(i)\}, d_i d_{\phi_1(i)} =-1\}\end{equation}
$I^\pair$ corresponds to   points in a  2-cycle  which forms a connected component of the graph, i.e. have no tree attached. We call such pairs {\it isolated}. 
Among the pairs we denote by  $I^\dip$ the indices corresponding to  
 isolated (neutral) dipoles.
 We note that $N-K$ bounds the number of connected components that are not reduced to a 2-cycle, i.e.,~have trees attached to them.

\begin{remark}\label{remark:maximal5}
Note that in $\mathbb{R}^2$, a point can be the nearest neighbor of at most six points. Indeed, if $OAB$ is a triangle such that the angle between $\vec{OA}$ and $\vec{OB}$ is strictly smaller than $\frac{\pi}{3}$, and if $|OA| \leq |OB|$, then $A$ is the nearest neighbor of $B$. It follows that if $0$ is the nearest neighbor of $z_1, \ldots, z_p$, then all consecutive angles between $z_i$ and $z_j$ are larger than $\frac{\pi}{3}$, and hence $p \leq 6$. Moreover, the event that $0$ is the nearest neighbor of 6 points has Lebesgue measure 0. Therefore, almost surely, it is the nearest neighbor of at most 5 points.
In the same way, suppose that $0$ is the second nearest neighbor to $p$ points $z_1, \dots, z_p$ in the configuration, and let $z_1', \dots, z_p'$ be the first nearest neighbor to $z_1, \dots, z_p$ respectively. This means that when deleting $z_1', \dots, z_p'$ from the configuration, the nearest neighbor to $z_1, \dots, z_p$ is again $0$. This implies by the above that $p\le 6$, thus a point cannot be second nearest neighbor to more than $6$ points.

Note that the maximal number of points which have the origin as  their  nearest neighbor is also the kissing number (which equals 6 in $\mathbb{R}^2$) since the normalized vectors $\frac{z_i}{\Vert z_i \Vert}$ are the centers of a kissing configuration.

\end{remark}

  \subsubsection*{Additional results on $\g_\lambda$}
  Returning to \eqref{eq:defgeta}, we  note that 
\begin{multline} \label{idg1} \g_1(z)= \kappa + \iint \g(x-y) \delta_0^{(1)}(y)\(\delta_z^{(1)}-\delta_0^{(1)} \)(x)= \kappa- \dashint_{\partial B(0,1)} z\cdot \nu +  O(|z|^2 )\\
=\kappa +O(|z|^2)\end{multline} as $|z|\to 0$, 
where we used that  $\g*\delta_0^{(1)}=\g=0$ on $\partial B(0,1)$ by Newton's theorem and $\nab \g = - \nu$ on $\partial B(0,1)$. Here  $\nu$ denotes the outer unit normal and $\dashint $ denotes the average. 

By scaling, we also have
 \begin{equation}\label{approxgeta} \g_\lambda(z)= \g(\lambda)+ \g_1(z/\lambda), \qquad
 |\g_\lambda(z)-\g(|z|\vee \lambda) |\le C\end{equation}
 where $C$ is some universal constant.

\subsection{Nearest neighbor expansion of the energy}
For $i\in \mathcal C_k$ for some $k=1, \dots, K$, we let
\begin{equation}\label{defD} D_i= d_i + d_{\phi_1(i)} .\end{equation}   
If the $2$-cycle is neutral, as in the most typical case, then $D_i=0$.

The following proposition will be proven below.
\begin{prop}[Nearest neighbor decomposition of the energy]\label{prop:mino2}
Let $(X_N, Y_N)$ be any configuration in $\Lambda^{2N}$ and consider its nearest neighbor graph decomposition as above. We have 
 \begin{multline} 
  \label{minoF02}
\F_\lambda(X_N,Y_N) \\ 
\ge \hal \sum_{k=1}^K \Bigg( \sum_{i\in \mathcal C_k}  d_i d_{\phi_1(i)} \g_\lambda(z_i-z_{\phi_1(i)} ) -d_i D_i\(  \g\( \rr_2(z_i) \wedge \rr_2(z_{\phi_1(i)} )  \)+\kappa\) -C \( \frac{\rr_1(z_i)}{\rr_2(z_i)}\)^2
\\-\sum_{i\in \mathcal{I}_k \backslash \mathcal C_k} \(\g(\rr_2(z_i)) +\kappa \)\Bigg)
 \end{multline}
 where $C$ is universal.
 \end{prop}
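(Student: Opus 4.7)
\textbf{My plan} is to start from the electric formulation $\F_\lambda = \frac{1}{4\pi}\int_{\R^2}|\nabla h_\lambda|^2 - N(\g(\lambda)+\kappa)$ and perform a \emph{ball-growth} (inflation) argument in which each smeared charge $\delta^{(\lambda)}_{z_i}$ is replaced by $\delta^{(\alpha_i)}_{z_i}$ with the choice $\alpha_i:=\rr_2(z_i)$, giving rise to the modified potential $h_{\vec\alpha}$ of \eqref{defhalpha}. The key geometric fact, immediate from \eqref{nn3}, is that the only particle of the configuration that can sit inside the inflated disc $B(z_i,\alpha_i)$ is the first nearest neighbor $z_{\phi_1(i)}$, since every other particle sits at distance at least $4\rr_2(z_i)$ from $z_i$. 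Moreover, whenever $j$ is not the first nearest neighbor of $i$ and $i$ is not the first nearest neighbor of $j$, one has $|z_i-z_j|\ge 4\max(\rr_2(z_i),\rr_2(z_j))\ge 2(\rr_2(z_i)+\rr_2(z_j))$, so the two inflated discs are disjoint. This is what will localize all ``non-nearest-neighbor'' electrostatic interactions after inflation.

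Using the direct expansion $\F_\lambda=\tfrac12\sum_{i\ne j}d_id_j\g_\lambda(z_i-z_j)$ (obtained by inserting \eqref{eq:heta} into \eqref{eq:rewrF}), I would then bound from below the ``far'' contributions by comparing them to the electric energy of the inflated configuration. Concretely, integration by parts and Newton's theorem -- applied on each disc $B(z_i,\alpha_i)$ to the harmonic function $\g(\cdot-z_j)$ for $j\notin\{i,\phi_1(i)\}$ -- yield that, up to a controlled error involving only the nearest-neighbor pairs, the ``far'' part of $\F_\lambda$ equals
\[
\frac{1}{4\pi}\int|\nabla h_{\vec\alpha}|^2-\sum_i(\g(\alpha_i)+\kappa),
\]
and the first term is nonnegative and can be dropped.

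For each $i$ in a $2$-cycle $\mathcal{C}_k$, the pair $(i,\phi_1(i))$ is treated separately: one keeps the original interaction $d_id_{\phi_1(i)}\g_\lambda(z_i-z_{\phi_1(i)})$ (the main dipole term) and expresses the discrepancy created by the inflation using the Taylor expansion \eqref{idg1}, which after rescaling gives $\g_\alpha(z)-\g(\alpha)-\kappa=O((|z|/\alpha)^2)$. Since $|z_i-z_{\phi_1(i)}|=4\rr_1(z_i)$, combining this with the self-energy correction above produces the renormalization term $-d_iD_i(\g(\alpha_i\wedge\alpha_{\phi_1(i)})+\kappa)$ -- which automatically vanishes on neutral dipoles since $D_i=0$ -- together with the quadratic remainder $C(\rr_1(z_i)/\rr_2(z_i))^2$. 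For indices $i\in\mathcal{I}_k\setminus\mathcal{C}_k$ lying on a tree attached to the $2$-cycle, $z_i$ is not the nearest neighbor of its own nearest neighbor $z_{\phi_1(i)}$, so no compensating dipole term is produced: one is simply left with the self-energy cost $-(\g(\rr_2(z_i))+\kappa)$, which is exactly the last sum in \eqref{minoF02}.

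The most delicate technical step will be the explicit computation of the inflated pair interaction $\iint\g(x-y)\,d\delta^{(\alpha_i)}_{z_i}(x)\,d\delta^{(\alpha_{\phi_1(i)})}_{z_{\phi_1(i)}}(y)$ in the overlapping regime $\rr_1\ll\rr_2$, which has to be handled at second order using \eqref{idg1} (a first-order expansion would give only an $\rr_1/\rr_2$ error, not the sharp $(\rr_1/\rr_2)^2$ stated in the proposition) while carefully tracking the potentially very asymmetric radii $\alpha_i$ and $\alpha_{\phi_1(i)}$. One also needs Remark~\ref{remark:maximal5} to keep the combinatorics under control when turning these pointwise identities into a sum over $i$: each point can be the first (respectively second) nearest neighbor of at most five (respectively six) others, which prevents any single term from being overcounted when gathering the contributions coming from tree edges and from cross-interactions between distinct components.
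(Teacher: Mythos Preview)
Your strategy is correct and is essentially the paper's: inflate each smeared charge to second-nearest-neighbor scale, show by Newton's theorem that all but the mutual-nearest-neighbor pair interactions are unchanged, drop the nonnegative $\int|\nabla h_{\vec\alpha}|^2$, and estimate the remaining 2-cycle overlap integrals via \eqref{idg1}. Two points, however, are not resolved in your sketch and are precisely where the paper does the real work.

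\textbf{Symmetric radii on 2-cycles.} You take $\alpha_i=\rr_2(z_i)$ uniformly and then flag the overlap integral $\iint\g(x-y)\,d\delta^{(\alpha_i)}_{z_i}d\delta^{(\alpha_{\phi_1(i)})}_{z_{\phi_1(i)}}$ with \emph{asymmetric} radii as ``the most delicate technical step,'' without saying how to handle it. The paper avoids this entirely by taking, for $i\in\mathcal C_k$, the symmetric radius $\tau_i:=\rr_2(z_i)\wedge\rr_2(z_{\phi_1(i)})$ (and $\tau_i=\rr_2(z_i)$ otherwise). Then $\tau_i=\tau_{\phi_1(i)}$ and the overlap integral is exactly $\g_{\tau_i}(z_i-z_{\phi_1(i)})=\g(\tau_i)+\g_1((z_i-z_{\phi_1(i)})/\tau_i)$, to which \eqref{idg1} applies directly; together with the elementary bound $|\rr_2(z_i)-\rr_2(z_{\phi_1(i)})|\le\rr_1(z_i)$ (triangle inequality), this gives the $(\rr_1/\rr_2)^2$ error with no further effort. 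Note also that the statement you are proving already contains $\g(\rr_2(z_i)\wedge\rr_2(z_{\phi_1(i)}))$, which is $\g(\tau_i)$ for this choice; with your $\alpha_i$ you would get $\g(\rr_2(z_i))$ instead and would need an extra argument to pass to the minimum.

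\textbf{Tree-edge disjointness.} Your disjointness claim covers only pairs with \emph{neither} $j=\phi_1(i)$ nor $i=\phi_1(j)$. The remaining case is a tree edge, say $i=\phi_1(j)$ but $j\neq\phi_1(i)$, and you need that the inflated discs are still disjoint here---otherwise an uncontrolled cross term survives. The paper supplies the short contradiction: since $j\neq\phi_1(i)$ one has $\rr_2(z_i)\le\tfrac14|z_i-z_j|$; and since $z_{\phi_1(i)}$ is distinct from $z_j$ and $z_{\phi_1(j)}=z_i$, one has $\rr_2(z_j)\le\tfrac14|z_j-z_{\phi_1(i)}|\le\tfrac14(|z_i-z_j|+|z_i-z_{\phi_1(i)}|)\le\tfrac12|z_i-z_j|$, so $\rr_2(z_i)+\rr_2(z_j)\le\tfrac34|z_i-z_j|$, contradicting overlap. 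This reduces the whole analysis to diagonal terms and 2-cycle pairs, and is what makes the final sum in \eqref{minoF02} over $i\in\mathcal I_k\setminus\mathcal C_k$ consist of self-energy costs only.

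Finally, Remark~\ref{remark:maximal5} is not used in the paper's proof of this proposition; once the overlap analysis is reduced to 2-cycles there is no overcounting to control. That remark enters only later, in the derivation of Corollary~\ref{coromino}.
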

Here the error term  $C (\frac{\rr_1}{\rr_2})^2$ corresponds to dipole-dipole interaction and  is small when a dipole is well-separated from other points so that $\rr_1 \ll \rr_2$, which we can expect for a large proportion of the small dipoles, but not for long dipoles nor dipoles which belong to a quadrupole or more generally a multipole.
The error term in $d_i D_i$ corresponds to  additional energy of the nonneutral pairs.
This  excess energy  can be retrieved from the main interaction term, but has to be limited by the distance to second nearest neighbors.

  We now rephrase this inequality into one that is less sharp but will be more convenient for our purposes and which will be the basis for our free energy upper bound. 
\begin{coro}\label{coromino}
For any configuration $Z_{2N}$ in $\Lambda^{2N}$, using the above notation, we have
 \begin{multline}
  \label{minoF02 bis}
\F_\lambda(Z_{2N})  
\ge - \hal \sum_{ i=1}^{2N} \g_\lambda(z_i-z_{\phi_1(i)} )+  \sum_{\substack{i\in I^{\pair}\backslash I^{\dip}\\ \phi_2(i) \in I^\pair, \phi_2(\phi_1(i)) \in I^\pair}}\( \log 
  \frac{\rr_2(z_i) \wedge \rr_2(z_{\phi_1(i)})}{\rr_1(z_i)}-C\)
\\
 -  C \sum_{i \in I^{\dip}, \phi_2(i)\in I^\pair , \phi_2(\phi_1(i)) \in I^\pair}
   \( \frac{\rr_1(z_i)}{\rr_2(z_i)}\)^2
- C (N-K),
 \end{multline}where $C$ is universal, $I^\pair$ and $I^\dip$ are as in \eqref{defIdip} and $K$ is the number of components of the nearest neighbor graph.
\end{coro}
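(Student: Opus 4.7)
The plan is to start from Proposition \ref{prop:mino2} and rewrite each connected-component contribution so that the main energy reduces to $-\tfrac12 \sum_i \g_\lambda(z_i - z_{\phi_1(i)})$, with the remaining pieces either appearing as the log and $(\rr_1/\rr_2)^2$ terms of the target or absorbed into the $C(N-K)$ error. The central identity used throughout is
$$\g_\lambda(z_i - z_{\phi_1(i)}) = \g(\rr_1(z_i)) + O(1),$$
which follows from \eqref{approxgeta} together with the definition $\rr_1(z_i) = (\tfrac14 |z_i - z_{\phi_1(i)}|) \vee \lambda$.

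I would treat the three types of vertices separately. For tree vertices $i \in \mathcal I_k \setminus \mathcal C_k$, monotonicity of $\g$ combined with $\rr_2(z_i) \ge \rr_1(z_i)$ gives $-\tfrac12(\g(\rr_2(z_i)) + \kappa) \ge -\tfrac12 \g_\lambda(z_i - z_{\phi_1(i)}) - C$; since the total number of tree vertices is $\sum_k(|\mathcal I_k|-2) \le 2(N-K)$, these contribute an error of order $N-K$. For dipole $2$-cycles ($i \in I^\dip$, $d_i d_{\phi_1(i)} = -1$, $D_i = 0$), the per-vertex contribution is already in the right form $-\tfrac12 \g_\lambda(z_i - z_{\phi_1(i)}) - \tfrac{C}{2}(\rr_1(z_i)/\rr_2(z_i))^2$. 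For same-sign $2$-cycles ($i \in I^\pair \setminus I^\dip$, $d_i d_{\phi_1(i)} = 1$, $d_i D_i = 2$), the per-vertex contribution
$$\tfrac12 \g_\lambda(z_i-z_{\phi_1(i)}) - (\g(\rr_2(z_i) \wedge \rr_2(z_{\phi_1(i)})) + \kappa) - \tfrac{C}{2}(\rr_1/\rr_2)^2$$
rewrites, via the central identity, as
$$-\tfrac12 \g_\lambda(z_i - z_{\phi_1(i)}) + \log \frac{\rr_2(z_i) \wedge \rr_2(z_{\phi_1(i)})}{\rr_1(z_i)} - C,$$
matching the log term in the target.

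It remains to restrict the log and $(\rr_1/\rr_2)^2$ sums to indices $i$ satisfying $\phi_2(i), \phi_2(\phi_1(i)) \in I^\pair$. When either condition fails, $\phi_2(i)$ (resp.\ $\phi_2(\phi_1(i))$) is a tree vertex of some component. By the kissing-number bound of Remark \ref{remark:maximal5}, every point is the second nearest neighbor of at most six others, so the number of excluded indices is at most $12(N-K)$. For such indices I simply discard the log term (which is $\ge 0$ since $\rr_2 \ge \rr_1$) or the error $(\rr_1/\rr_2)^2 \le 1$, losing only $O(1)$ per index, which folds into the final $C(N-K)$.

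The main obstacle is the bookkeeping: ensuring the signs and constants align correctly after the substitution $\g_\lambda = \g(\rr_1) + O(1)$ and verifying that the $\phi_2$-restriction sacrifices nothing beyond $O(N-K)$, via Remark \ref{remark:maximal5}. No new analytic estimate beyond \eqref{approxgeta} and that kissing-number observation is needed.
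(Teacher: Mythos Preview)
Your approach is the same as the paper's and the main computations (tree vertices via $\rr_2 \ge \rr_1$, neutral cycles with $D_i=0$, same-sign cycles via the substitution $\g_\lambda = \g(\rr_1)+O(1)$, and the $\phi_2$-restriction via Remark~\ref{remark:maximal5}) are all correct. There is, however, one omission in your case analysis: you treat tree vertices, $i\in I^\dip$, and $i\in I^\pair\setminus I^\dip$, but this does not cover $2$-cycle vertices of \emph{non-isolated} components, i.e., $i\in\C_k$ with $i\notin I^\pair$. These must also contribute $-\tfrac12\g_\lambda(z_i-z_{\phi_1(i)})$ to the main sum in \eqref{minoF02 bis}. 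The fix is routine: the same per-vertex rewriting applies, and since each non-isolated component carries at least one tree vertex while there are $2N-2K$ tree vertices in total, there are at most $4(N-K)$ such $2$-cycle indices; the resulting $\log$ or $(\rr_1/\rr_2)^2\le 1$ terms are then absorbed into $C(N-K)$, which is exactly what the paper does.

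The same conflation appears in your restriction step: ``$\phi_2(i)\notin I^\pair$'' does not force $\phi_2(i)$ to be a tree vertex; it could be a $2$-cycle vertex of a non-isolated component. Your kissing-number counting still goes through, since the total number of points outside $I^\pair$ is $O(N-K)$ either way, but the constant is larger than the $12(N-K)$ you wrote.
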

The right-hand side thus reduces the interaction to the nearest-neighbor interaction energy $\F_\lambda^\nn := - \hal \sum_i \g_\lambda(z_i-z_{\phi_1(i)})$, except for isolated 2-cycles whose nearest neighbor is itself an isolated 2-cycle. The error terms are of two kinds: one, which is negative,   restricted to the connected components that consist of just an isolated neutral 2-cycle, and one corresponding to components which are a non-neutral isolated 2-cycle, which typically contribute positively to the energy.
\begin{proof}[Proof of the corollary]
First remark that since each connected component of the nearest neighbor graph contains a 2-cycle, 
$2N-2K$ bounds the number of indices that are not in a 2-cycle.

$\bullet$ The case $i\in \C_k$ and $d_id_{\phi_1(i)}=-1$. In that case we have $D_i=0$ and the corresponding term in the sum \eqref{minoF02} reduces to 
$$- \g_\lambda(z_i-z_{\phi_1(i)}) - C  \( \frac{\rr_1(z_i)}{\rr_2(z_i)}\)^2 .$$ If $i\notin I^{\pair}$  then we can absorb the errors  $(\frac{\rr_1}{\rr_2})^2\le 1$ into $C(N-K)$.  Moreover,  by Remark \ref{remark:maximal5} a point can be the second nearest neighbor to at most 6 points, hence  the number of $i$'s such that  $\phi_2(i) \notin I^\pair$ is bounded by $C(N-K)$, so for such $i$'s we can again absorb the error  $(\frac{\rr_1}{\rr_2})^2\le 1$ into $C(N-K)$.

$\bullet$ The case $i\in \C_k$ and $d_i d_{\phi_1(i)}=1$. Then $d_i D_i= 2$. By definition of $\rr_2$ in \eqref{nn3} and the fact that $i\in \C_k$ hence $z_i$ and $z_{\phi_1(i)}$ are each other's nearest neighbor, 
we have   $$\rr_2(z_i) \wedge \rr_2(z_{\phi_1(i)})\ge \rr_1(z_i)=\rr_1(z_{\phi_1(i)}) $$ and by definition \eqref{nn2} and \eqref{approxgeta}, we deduce that   
\begin{equation}
\label{gr2r1}
 -\g(\rr_2(z_i)\wedge \rr_2(z_{\phi_1(i)} ) )\ge -\g_\lambda(z_i-z_{\phi_1(i)}) -C
 \end{equation}
 for $C>0$ a universal constant. Thus, since $\rr_1\le \rr_2$,  we have
\begin{multline*}\g_\lambda(z_i-z_{\phi_1(i)} ) -2 \(  \g\( \rr_2(z_i) \wedge \rr_2(z_{\phi_1(i)} ) \)+\kappa\)- C \(\frac{\rr_1(z_i)}{\rr_2(z_i)}\)^2
\\
\ge - \g_\lambda(z_i-z_{\phi_1(i)} ) -  2\g\(\frac{\rr_2(z_i) \wedge \rr_2(z_{\phi_1(i)})}{\rr_1(z_i)}\) - C\end{multline*} for some universal constant $C$.
Moreover,  $-  2\g\(\frac{\rr_2(z_i) \wedge \rr_2(z_{\phi_1(i)})}{\rr_1(z_i)}\) - C\ge - C$  since $\rr_2\ge \rr_1$, and thus,
arguing as in the previous case, we may absorb such error terms into $C(N-K)$ if we do not have both $i\in I^\pair $ and $\phi_2(i)\in I^\pair$.

$\bullet$ The case $i \notin \C_k$.   In that case the term in the sum is just 
$$-\(\g(\rr_2(z_i)) +\kappa \)\ge - \g_\lambda(z_i-z_{\phi_1(i)}) 
$$ after using that $\rr_2(z_i) \ge \rr_1(z_i)$.

Combining all the cases,  the result follows.
    \end{proof}
    
    We now turn to the proof of Proposition \ref{prop:mino2}.
 As explained in the introduction, the proof relies on an enlargement of the disc charges. To evaluate the change of energy, we use the following lemma. 
 

 \begin{lem}\label{lemchr} For any $2N$-tuples $\vec{\tau}$ and $\vec{\alpha}$,  we have
\begin{equation}
\label{llemchr}
\frac{1}{2\pi}\(\int_{\R^2} |\nab h_{\vec{\tau}}|^2-|\nab h_{\vec{\alpha}}|^2\) =\sum_{i,j} \int_{\R^2} d_i   d_j 
\(\g*\delta_{z_i}^{(\tau_i)} - \g* \delta_{z_i}^{(\alpha_i)} \)
 \( \delta_{z_j}^{(\tau_j)}+\delta_{z_j}^{(\alpha_j)}  \).\end{equation}

\end{lem}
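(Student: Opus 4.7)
The plan is to apply the polarization identity $|A|^2-|B|^2 = (A-B)\cdot(A+B)$ with $A=\nab h_{\vec\tau}$ and $B=\nab h_{\vec\alpha}$, then integrate by parts, and finally invoke the Poisson equation \eqref{Deltah} to convert Laplacians back into smeared Diracs. So I would write
$$|\nab h_{\vec\tau}|^2 - |\nab h_{\vec\alpha}|^2 = \nab(h_{\vec\tau}-h_{\vec\alpha})\cdot \nab(h_{\vec\tau}+h_{\vec\alpha})$$
and integrate over $\R^2$.

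Before integrating by parts I would record a key preliminary observation that removes any issue at infinity. By Newton's theorem, for every $i$ and every point $z$ with $|z-z_i|\ge \tau_i\vee\alpha_i$, one has $\g*\delta_{z_i}^{(\tau_i)}(z)=\g(z-z_i)=\g*\delta_{z_i}^{(\alpha_i)}(z)$. Consequently the function
$$h_{\vec\tau}-h_{\vec\alpha} = \sum_{i=1}^{2N} d_i\bigl(\g*\delta_{z_i}^{(\tau_i)}-\g*\delta_{z_i}^{(\alpha_i)}\bigr)$$
vanishes outside the compact set $\bigcup_i \overline{B(z_i,\tau_i\vee\alpha_i)}$. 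It is therefore legitimate to integrate by parts against $\nab(h_{\vec\tau}+h_{\vec\alpha})$, with no boundary contribution at infinity (and no need to invoke the global neutrality $\sum d_i=0$).

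Integration by parts then gives
$$\int_{\R^2}\nab(h_{\vec\tau}-h_{\vec\alpha})\cdot\nab(h_{\vec\tau}+h_{\vec\alpha}) = -\int_{\R^2} (h_{\vec\tau}-h_{\vec\alpha})\,\Delta(h_{\vec\tau}+h_{\vec\alpha}).$$
Substituting the two Poisson equations $-\Delta h_{\vec\tau} = 2\pi\sum_j d_j \delta_{z_j}^{(\tau_j)}$ and $-\Delta h_{\vec\alpha} = 2\pi\sum_j d_j\delta_{z_j}^{(\alpha_j)}$ on the right-hand side, and using the definition of $h_{\vec\tau}-h_{\vec\alpha}$ as a sum over $i$, I obtain
$$\int_{\R^2}\bigl(|\nab h_{\vec\tau}|^2-|\nab h_{\vec\alpha}|^2\bigr) = 2\pi\sum_{i,j} d_i d_j \int_{\R^2}\bigl(\g*\delta_{z_i}^{(\tau_i)}-\g*\delta_{z_i}^{(\alpha_i)}\bigr)\bigl(\delta_{z_j}^{(\tau_j)}+\delta_{z_j}^{(\alpha_j)}\bigr),$$
and dividing by $2\pi$ yields \eqref{llemchr}.

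There is no serious obstacle here: the identity is essentially a clean bilinear expansion of the Dirichlet energy. The only point one must be slightly careful about is the justification of the integration by parts, which is exactly what the Newton-theorem observation above provides. The absence of boundary terms at infinity, together with the fact that all measures involved are smooth and compactly supported (so the convolutions with $\g$ are $C^1$), means every step goes through without further regularization.
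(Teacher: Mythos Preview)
Your proof is correct and follows essentially the same approach as the paper: write the difference of Dirichlet energies via the polarization identity, integrate by parts, and substitute the Poisson equations. You add the explicit justification via Newton's theorem that $h_{\vec\tau}-h_{\vec\alpha}$ is compactly supported, which the paper leaves implicit; this is a welcome clarification but not a different route.
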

\begin{proof}
Observe that 
\begin{equation*} h_{\vec{\tau}}-h_{\vec{\alpha}} = 2\pi\sum_{i=1}^{2N} d_i \( \g*\delta_{z_i}^{(\tau_i)} - \g* \delta_{z_i}^{(\alpha_i)} \) ,\end{equation*}
and then expand using integrations by parts and $-\Delta h_{\vec{\alpha}}= 2\pi  \sum_{i=1}^{2N} d_i \delta_{z_i}^{(\alpha_i)}.$\end{proof}

  \begin{proof}[Proof Proposition \ref{prop:mino2}]
  We are going to define for each point $z_i$ in the configuration, an appropriate radius $\tau_i$.
  Each index $i$ belongs to one connected component $\mathcal{I}_k$ of the nearest-neighbor graph $\gamma$ of the configuration.  We let 
 \begin{equation}\label{deftau} \tau_i= \begin{cases}
\rr_2(z_{i})\wedge  \rr_2(z_{\phi_1(i) })  & \text{if $i\in \mathcal C_k$}\\
 \rr_2(z_i) & \text{otherwise}.\end{cases}\end{equation}

  We  then apply Lemma \ref{lemchr} and  increase the balls  from $\alpha_i= \lambda $ to $\tau_i$.
   We obtain that  
  \begin{multline}\label{peq}
    \int_{\R^2} |\nab h_{\lambda}|^2-
 \int_{\R^2}  |\nab h_{\vec{\tau}}|^2\\=
  2\pi \sum_{i,j} d_i d_j \(\iint \g(x-y) \delta_{z_i}^{(\lambda)} (x) 
 \delta_{z_j}^{(\lambda)} (y)-  \iint \g(x-y) \delta_{z_i}^{(\tau_i)} (x) 
 \delta_{z_j}^{(\tau_j)} (y) 
 \)
 .    \end{multline}
First, if $\tau_i=\tau_j=\lambda$, the terms in parenthesis cancel. We may thus restrict the sum to the situation where $\max (\tau_i, \tau_j) > \lambda$, which also means that $\rr_2(z_i)$ or $\rr_2(z_j)$ is the true (quarter of the) second neighbor distance.
Next, if $B(z_i, \lambda) $ and $B(z_j, \lambda)$ intersect, so do $B(z_i, \tau_i) $ and $B(z_j, \tau_j)$ since by definition and \eqref{nn3}, $\tau_i\ge \lambda$, $\tau_j \ge \lambda$.
If on the other hand $B(z_i,\lambda)$ and $B(z_j, \lambda)$ are disjoint, and $B(z_i, \tau_i)$ and $B(z_j, \tau_j)$ as well, then 
 by Newton's theorem and mean value theorem the two terms in the parenthesis are equal to $\g(z_i-z_j)$ hence cancel.
 We may thus restrict the sum to the situation where $\max(\tau_i, \tau_j) > \lambda$ and $B(z_i, \tau_i)$ and $B(z_j, \tau_j)$ intersect, that is 
\begin{equation}\label{disz}
|z_i-z_j|\le \tau_i+\tau_j \le \rr_2(z_i) + \rr_2(z_j) \le  2  \rr_2(z_i) \vee  \rr_2(z_j)  \end{equation} in view of the definitions \eqref{deftau}. 

 Since  $\rr_2(z_i)$ or $\rr_2(z_j)$ is the true (quarter of the) second neighbor distance, this implies that $j\in \{i, \phi_1(i)\} $ or $i \in\{j, \phi_1(j)\} $.
 Moreover, let us show that \eqref{disz} implies that we have both  $j\in \mathcal \{i, \phi_1(i)\}$ and  $i\in \{  j, \phi_1(j)\} $. If $i=j$ this is obvious. If not, then say $j\neq i$ and  $j\notin \{i,\phi_1(i)\}$, this means that $j=\phi_k(i)$ with $k\ge 2$. In particular $ z_i, z_{\phi_1(i)} $ and $z_j$ are distinct and we thus have, by definition \eqref{nn3}
 $$\rr_2(z_i) \le \frac 14 |z_i-z_j|.$$
On the other hand, we know that  $z_i=z_{\phi_1(j)}$, and $z_{\phi_1(i)}$ is a point distinct from $z_j$ and $z_{\phi_1(j)}$ thus by triangle inequality and definition \eqref{nn3}
$$\rr_2(z_j) \le \frac 14 |z_j-z_{\phi_1(i)}|\le \frac 14 |z_j-z_i|+\frac 14 |z_i- z_{\phi_1(i)}|
\le \frac 14 |z_i-z_j|+ \frac 14 |z_i-z_j|$$
 from which it follows that 
 $$\rr_2(z_i)+\rr_2(z_j) \le  \frac34 |z_i-z_j|,$$ 
 a contradiction with \eqref{disz}, thus, as claimed the sum  reduces to terms for which $i=j$ or 
  $i$ and $j$ are both nearest neighbor to each other, which we denote by $\sim$, i.e. $i$ and $j$ belong to a $2$-cycle of  the nearest neighbor graph. 
 As a result, when the balls intersect, we can then upgrade \eqref{disz} into 
 $$|z_i-z_j|\le 2\tau_i=2 \rr_2(z_i) \wedge \rr_2(z_{\phi_1(i)}).$$


 With the definition \eqref{eq:defgeta} and \eqref{geta0}, we thus get from \eqref{peq} that
 \begin{multline}\label{trcomp}
\int_{\R^2} |\nab h_{\lambda}|^2 \\
\ge  2\pi \sum_{i=1}^{2N} \( \g_\lambda(0)  - \g(\tau_i)-\kappa \)+
 2\pi \sum_{i \neq j: j \sim i }
d_i d_j \Bigr(  \g_\lambda ( z_i-z_j)    -\iint \g(x-y) \delta_{z_i}^{(\tau_i)}(x) \delta_{z_j}^{(\tau_j)} (y)\Bigr).
\end{multline}

We  examine the contribution of the $2$-cycles. If $i\in \mathcal C_k$, then $\tau_i=\tau_j $ by definition and thus by \eqref{approxgeta} the contribution of the parenthesis in \eqref{trcomp} is 
$$2\pi d_id_j\Bigr( \g_\lambda(z_i-z_j)- \g(\tau_i)  - \g_1(\frac{z_i-z_j}{\tau_i})\Bigr) .$$
 This term appears twice due to the two edges between the vertices of the cycle and to the equality $\tau_i=\tau_j$.
 
 We also note that $\bar B(z_i ,  4\rr_{1}(z_i)) $ contains at least $2$ points, hence by triangle inequality, we find that if $i\sim j$,    we have 
   \begin{equation}\label{ppj}\rr_2 (z_j) \le \frac14|z_i-z_j|+\rr_2(z_i).\end{equation}
   Reversing the roles of $i$ and $j$ this implies that if $i \sim j$, we have 
   \begin{equation}\label{rrj}|\rr_2(z_i)-\rr_2(z_j)|\le \frac14|z_i-z_j| \le \rr_{1} (z_i) . \end{equation}
We deduce that $\frac{z_i-z_j}{\tau_i}= O(\frac{\rr_1(z_i)}{\rr_2(z_i)} )$.

In view of \eqref{idg1} 
we may  replace  $\g_1 (   \frac{z_i-z_j}{\tau_i})$ by $ \kappa+O\Bigr(\frac{|z_i-z_j|^2}{\tau_i^2} \Bigr)$, and then by $\kappa+O\Bigr(\Bigr( \frac{\rr_1(z_i) }{\rr_2(z_i)}\Bigr)^2\Bigr)$.
Inserting these facts into \eqref{trcomp}, we obtain that 
\begin{multline}\label{trcomp b}
\int_{\R^2} |\nab h_{\lambda}|^2 
\ge  4\pi N \g_\lambda(0)  -  2\pi
 \sum_{i=1}^{2N} ( \g(\tau_i)+\kappa)\\+
 2\pi \sum_{i\neq j:j \sim i }
d_i d_j \Bigg(  \g_\lambda ( z_i-z_j)    - \g(\tau_i) -\kappa +   O\Bigr(\Bigr( \frac{\rr_1(z_i) }{\rr_2(z_i)}\Bigr)^2\Bigr)   \Bigg).
 \end{multline}
 We may now split this over the connected components of the nearest neighbor graph $\mathcal {I}_k$, and obtain by regrouping terms
 \begin{multline*}
\int_{\R^2} |\nab h_{\lambda}|^2 
\ge  4\pi N \g_\lambda(0) \\ +
2\pi\sum_{k=1}^K \Bigg( \sum_{i\in \mathcal C_k}  d_i d_{\phi_1(i)} \g_\lambda(z_i-z_{\phi_1(i)} ) -d_i \( d_i+ d_{\phi_1(i)} \)(  \g(\tau_i) +\kappa) + O\Bigr( \Bigr( \frac{\rr_1(z_i)}{\rr_2(z_i)}\Bigr)^2\Bigr)
\\
-\sum_{i\in \mathcal{I}_k \backslash \mathcal C_k} \g(\rr_2(z_i)) +\kappa \Bigg)
 \end{multline*}
In view of  the  rewriting  of $\F_\lambda$ \eqref{eq:rewrF}  we obtain the result.

\end{proof}

\section{Free energy upper bound}\label{sec3}

This section is devoted to the proof of the free energy upper bound. 
This will be based on the energy lower bound of Corollary \ref{coromino} which reduces the interaction to nearest neighbor terms, together with a quadratic error depending on second nearest neighbor distances. The main difficulty is to partition the phase-space efficiently to integrate the exponential of this reduced energy. This is based on refinements of the Gunson-Panta change of variables and approach \cite{GunPan}.
We will start by proving  upper bounds for the simpler nearest neighbor model $\F_\lambda^\nn$, then for simpler models with errors,  and build up to the upper bound for the full model including the second nearest neighbors errors.

\subsection{Preliminaries: Dirichlet-type integrals}
We start by presenting some tools needed to implement the integrations.
\subsubsection*{Dirichlet integrals}We first recall a result on computing ``multiple Dirichlet integrals of type 1", see \cite[p. 258]{whittaker_watson_1996}.
\begin{lem}[Dirichlet integrals]\label{lemdirichlet}
Given an integer $k\geq 1$, $\alpha_1,\ldots,\alpha_k>0$ and $s>0$, let 
\begin{equation}\label{eq:indis}
    I_{k,s}(\alpha_1, \dots, \alpha_k) :=\int_{(\R^+)^k} \indic_{0<t_1+\ldots+t_k<s}\, t_1^{\alpha_1-1}\ldots t_k^{\alpha_k-1}\dd t_1\ldots \dd t_k.
\end{equation}
We have 
\begin{equation}
\label{eq:ind I}
I_{k,s}(\alpha_1, \dots, \alpha_k) =s^{\alpha_1+\dots+ \alpha_k}  \frac{\Gamma(\alpha_1)\ldots \Gamma(\alpha_k)}{\Gamma(\alpha_1+\ldots+\alpha_k)}\frac{1}{\alpha_1+\ldots+\alpha_k}.
\end{equation}

If the $k$-tuple $(\alpha_i)_{i=1}^k$ is defined by $\alpha_i=\alpha>0$ for $1\le i\le k_0$ and $\alpha_i=1$ for $k_0+ 1\le i \le k$, then
\begin{equation}\label{eq:estimate Dirichlet}
\log I_{k,s}=(\alpha k_0+(k-k_0))\( \log s- \log(\alpha k_0+k-k_0)\)+k+O(k_0)
\end{equation}
hence 
\be I_{k,s}\le \( \frac{s}{\alpha k_0 +k-k_0} \)^{\alpha k_0+k-k_0}e^k C^{k_0}.\ee
\end{lem}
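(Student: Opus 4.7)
My plan has two stages: first establish the exact identity \eqref{eq:ind I}, then extract the asymptotic \eqref{eq:estimate Dirichlet} via Stirling's formula.

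For \eqref{eq:ind I}, I would begin by reducing to $s=1$: the rescaling $t_i = s u_i$ gives $I_{k,s}(\alpha_1,\dots,\alpha_k) = s^A \, I_{k,1}(\alpha_1,\dots,\alpha_k)$ with $A := \alpha_1+\cdots+\alpha_k$. It then suffices to prove $I_{k,1} = \prod_i \Gamma(\alpha_i)/\Gamma(A+1)$, which I would do by induction on $k$. The base case $k=1$ reduces to $\int_0^1 t^{\alpha_1-1}\,\dd t = 1/\alpha_1 = \Gamma(\alpha_1)/\Gamma(\alpha_1+1)$. For the inductive step, I would integrate out $t_k$ last: for each $t_k \in (0,1)$ the inner integral over the remaining variables equals $I_{k-1,1-t_k}(\alpha_1,\dots,\alpha_{k-1})$, which by the inductive hypothesis is $(1-t_k)^{A-\alpha_k} \prod_{i<k} \Gamma(\alpha_i)/\Gamma(A-\alpha_k+1)$. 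What remains is the beta integral $\int_0^1 t_k^{\alpha_k-1}(1-t_k)^{A-\alpha_k}\,\dd t_k = \Gamma(\alpha_k)\Gamma(A-\alpha_k+1)/\Gamma(A+1)$, and telescoping together with $\Gamma(A+1)=A\Gamma(A)$ yields the claimed closed form.

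For the asymptotic \eqref{eq:estimate Dirichlet}, I would substitute the prescribed pattern $\alpha_i=\alpha$ for $i\le k_0$ and $\alpha_i=1$ for $i>k_0$, so that $A = \alpha k_0 + (k-k_0)$ and
$$\log I_{k,s} = A \log s + k_0 \log \Gamma(\alpha) - \log A - \log \Gamma(A).$$
Applying Stirling's expansion $\log \Gamma(A) = A\log A - A + O(\log A)$ and rearranging gives
$$\log I_{k,s} = A(\log s - \log A) + A + k_0 \log \Gamma(\alpha) + O(\log A).$$
Since $A = k + (\alpha-1)k_0$, the additive $A$ contributes $k$ up to an $O(k_0)$ correction; the term $k_0 \log \Gamma(\alpha)$ is itself $O(k_0)$; and the subleading $O(\log A)$ is absorbed into the combined remainder (consistently with the regime in which the estimate will be applied in the sequel). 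The final multiplicative bound then follows by exponentiating and bounding the resulting constant-times-$k_0$ exponent by $C^{k_0}$ for an appropriate $C$.

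There is no genuine conceptual obstacle here: the lemma is a standard Dirichlet-type computation (as the citation to Whittaker--Watson acknowledges), recast in an additive form convenient for the entropy estimates carried out later in Sections~\ref{sec3}--\ref{sec4}. The only subtlety is the careful bookkeeping of the additive $O(k_0)$ remainder, namely that the $\log A$ subleading Stirling term and the $k_0$-dependent constants $\log \Gamma(\alpha)$ and $(\alpha-1)$ are lumped together without creating a stray $\log k$ contribution that would later pollute the entropy--volume estimates.
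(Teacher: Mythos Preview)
Your proposal is correct and follows essentially the same approach as the paper: the paper reduces to $s=1$ by scaling and says the identity ``can be checked by successive integration by parts or a Fourier transform argument'' (your induction via the beta integral is precisely such a successive integration), and then applies Stirling's formula exactly as you do. Your explicit remark that the $O(\log A)$ Stirling remainder is absorbed into the $O(k_0)$ term ``consistently with the regime in which the estimate will be applied'' is in fact slightly more careful than the paper's proof, which simply writes the $O(\log(\alpha k_0+k-k_0))$ term and asserts that \eqref{eq:estimate Dirichlet} follows.
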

\begin{proof}
The identity 
\be\int_{(\R^+)^k} \indic_{0<t_1+\ldots+t_k<1 }\ t_1^{\alpha_1-1}\ldots t_k^{\alpha_k-1}\dd t_1\ldots \dd t_k
 = \frac{\Gamma(\alpha_1)\ldots \Gamma(\alpha_k)}{\Gamma(\alpha_1+\ldots+\alpha_k)}\frac{1}{\alpha_1+\ldots+\alpha_k}\ee
can be checked by successive integration by parts or a Fourier transform argument.
It follows by the change of variables $t_i=s u_i$ that \eqref{eq:ind I} holds.

 For the particular choice of $\alpha_i$'s stated in the lemma, we thus find  that
\begin{equation*}
    I_{k,s}=s^{ \alpha k_0+k-k_0}\frac{\Gamma(\alpha)^{k_0}}{\Gamma(\alpha k_0+k-k_0)}\frac{1}{\alpha k_0+k-k_0}.
\end{equation*}

Applying Stirling's formula, we obtain 
\begin{multline*}\log I_{k,s}= (\alpha k_0+k-k_0)\log s+ k_0\log \Gamma(\alpha) \\ - (\alpha k_0+k-k_0) \log (\alpha k_0+k-k_0)+ (\alpha k_0+k-k_0)+O( \log (\alpha k_0+k-k_0)))\end{multline*}

The relation \eqref{eq:estimate Dirichlet}  follows.\end{proof}

We now estimate other types of Dirichlet integrals.

\begin{lem}\label{lemma:Dir trunc}
For $\delta \in \{0, 1\}$ and $\lambda p/s \le \hal$, we have
\begin{equation}\label{eq:Dir log} \log\int_{(\dR^+)^p} \indic_{t_1+\ldots+t_p<s }\prod_{i=1}^p \frac{1}{t_i}\(1+\log \frac{t_i}{\lambda}\)^\delta \indic_{t_i\geq \lambda}\dd t_i
\leq p\( \log| \log \frac{\lambda p}{s}| +O\( \frac{\log |  \log \frac{\lambda p}{s}|}{|\log \frac{\lambda p}{s}|}\)\).
\end{equation}
Moreover for $\delta\in \{1,2\}$ and $s>p$,
\begin{equation}\label{eq:Dir log 2}
    \log \int_{(\dR^+)^p}\indic_{t_1+\ldots+t_p<s}\prod_{i=1}^p (1+|\log t_i|)^{\delta}\dd t_i \leq p\Bigr(\log \Bigr(\frac{s}{p}\Bigr(\log\frac{s}{p}\Bigr)^\delta \Bigr) +O(p).
\end{equation}
\end{lem}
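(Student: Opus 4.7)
Both estimates can be proved by the same Laplace-tilt strategy: replace the simplex constraint $\sum t_i<s$ by an exponential weight $e^{-\beta\sum t_i}$, factor the integral as the $p$-th power of a one-dimensional integral, and choose $\beta>0$ optimally. The starting point is the elementary bound $\mathbf{1}_{t_1+\cdots+t_p<s}\le e^{\beta(s-t_1-\cdots-t_p)}$ valid for $\beta>0$, which yields for any non-negative $f$
\[
\int_{(\mathbb{R}^+)^p}\mathbf{1}_{\sum t_i<s}\prod_{i=1}^p f(t_i)\,dt_i\ \le\ e^{\beta s}\left(\int_0^\infty e^{-\beta t}f(t)\,dt\right)^p.
\]

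For \eqref{eq:Dir log 2}, where $f(t)=(1+|\log t|)^\delta$, the scaling $u=\beta t$ together with the sub-multiplicative bound $1+|\log u-\log\beta|\le(1+|\log u|)(1+|\log\beta|)$ reduces the one-dimensional integral to $\frac{C_\delta}{\beta}(1+|\log\beta|)^\delta$, with $C_\delta:=\int_0^\infty e^{-u}(1+|\log u|)^\delta du<\infty$. Taking $\beta=p/s$ (so $\beta s=p$ and $|\log\beta|=\log(s/p)$) then produces the total log-bound $p\log(s/p)+p\delta\log\log(s/p)+O(p)$, which is exactly the claim.

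For \eqref{eq:Dir log}, where $f(t)=\frac{1}{t}(1+\log(t/\lambda))^\delta\mathbf{1}_{t\ge\lambda}$, the scaling $u=\beta t$ turns the one-dimensional integral into a weighted exponential integral; the workhorse is the small-argument expansion $E_1(a)=\int_a^\infty e^{-u}/u\,du=|\log a|-\gamma+O(a)$ as $a\to 0$, while the factor $(1+\log(t/\lambda))^\delta$ contributes additional powers of $|\log(\beta\lambda)|$ handled by the same expansion. A naive choice $\beta=p/s$ gives the correct leading term $p\log|\log(\lambda p/s)|$ but only with additive error $O(p)$. To obtain the refined error $O(p\log|\log|/|\log|)$, one solves the optimality relation $\beta|\log(\beta\lambda)|=p/s$ exactly, which produces $\beta\sim(p/s)/|\log(\lambda p/s)|$; then $\beta s=O(p/|\log(\lambda p/s)|)$ is itself absorbed into the claimed correction term, and the remaining $p\log E_1(\beta\lambda)$ expands as $p\log|\log(\lambda p/s)|+O(p\log|\log|/|\log|)$.

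The main technical care lies in the sharp optimization underlying \eqref{eq:Dir log}: one must keep two orders in the small-argument expansion of $E_1$ so that the contribution $e^{\beta s}$ drops below $O(p)$ and can be swallowed by the refined error. The hypothesis $\lambda p/s\le 1/2$ is precisely what ensures $\beta\lambda$ stays small throughout the admissible range of $\beta$, so these asymptotics remain valid.
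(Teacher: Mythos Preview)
Your proposal is correct and follows essentially the same approach as the paper: replace the simplex indicator by an exponential tilt $\mathbf 1_{\sum t_i<s}\le e^{u(s-\sum t_i)}$, factor into a $p$-th power of a one-dimensional integral, estimate the latter (the paper via direct integration by parts, you via the $E_1$ expansion and the sub-multiplicative bound on $1+|\log|$), and optimize $u$, choosing $u=p/s$ for \eqref{eq:Dir log 2} and $u\sim \frac{p}{s\,|\log(\lambda p/s)|}$ for \eqref{eq:Dir log}. The only cosmetic difference is in how the one-dimensional integrals are bounded; the optimization and the resulting error terms are identical.
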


\medskip

\begin{proof}
We first prove \eqref{eq:Dir log}. For $\delta\in \{0,1\}$, set
\begin{equation*}
    I:=\int_{(\dR^+)^p} \indic_{t_1+\ldots+t_p<s }\prod_{i=1}^p \frac{1}{t_i}\Bigr(1+\log \frac{t_i}{\lambda}\Bigr)^{\delta} \indic_{t_i\geq \lambda}\dd t_i.
\end{equation*}
Let $u> 0$. One can notice that
\begin{equation*}
  \indic_{t_1+\ldots+t_p<s }\leq e^{-u(t_1+\ldots+t_p-p \frac{s}{p})}.  
\end{equation*}
It follows that
\begin{equation*}
    \log I\leq p\Bigr(\log \int_\lambda^{\infty}\frac{1}{x}\Bigr(1+\log \frac{x}{\lambda}\Bigr)^{\delta}e^{-ux }\dd x+u \frac{s}{p}\Bigr).
\end{equation*}

$\bullet$ The case $\delta=0$. By a change of variables,
\begin{equation*}
 \int_\lambda^{\infty}\frac{1}{x}e^{-ux }\dd x=\int_{\lambda u}^{\infty}\frac{1}{x}e^{-x}\dd x.
\end{equation*}
Next, by integration by parts, provided $\lambda u\leq \frac{1}{2}$,  we have
\begin{multline*}
\int_{\lambda u}^{\infty}\frac{1}{x}e^{-x}\dd x= -\log(\lambda u)e^{-\lambda u}+\int_{\lambda u}^{\infty}e^{-x}\log x \dd x=-\log(\lambda u)e^{-\lambda u}+O(1)
=-\log(\lambda u)+O(1).
\end{multline*}
Hence, for any $u>0$ such that $\lambda u\leq 1$,  we have
\begin{equation*}
   \log I\leq p\Bigr( \log(-\log(\lambda u) +O(1)) +u\frac{s}{p}\Bigr).
\end{equation*}
Choosing $u = \frac{p}{s|\log \frac{\lambda p}{s}|}$, we find 
$$ \log I\leq p\( \log| \log \frac{\lambda p}{s}|   +O\( \frac{\log |  \log \frac{\lambda p}{s}|}{|\log \frac{\lambda p}{s} |}\)\).$$

$\bullet$ The case $\delta=1$. By scaling and integration by parts, provided $\lambda u\leq \frac{1}{2}$, since $\frac{d}{dx} \(\log \frac{x}{\lambda u}\)^2 = \frac{2}{x}\log \frac{x}{\lambda u}$, 
\begin{equation*}
    \int_\lambda^{\infty}\frac{1}{x}\log \frac{x}{\lambda}e^{ -ux}\dd x=
   \int_{\lambda u}^{\infty}\frac{1}{x}\log \frac{x}{\lambda u}e^{-x}\dd x= \int_{\lambda u}^\infty  \hal \( \log \frac{x}{\lambda u}\)^2 e^{-x} \dd x= O(1), 
\end{equation*}
hence in view of the case $\delta =0$, 
\begin{equation*}
    \int_\lambda^{\infty}\frac{1}{x}\(1+\log \frac{x}{\lambda}\) e^{ -ux}\dd x=- \log (\lambda u)+O(1).
\end{equation*}
The same optimization as above yields the same estimate.

 We turn to the proof of \eqref{eq:Dir log 2}. Let $\delta\in\{1,2\}$. Let
 \begin{equation*}
     J:=\int_{(\dR^+)^p}\indic_{t_1+\ldots+t_p\leq s}\prod_{i=1}^p(1+|\log t_i|^\delta)\dd t_i.
 \end{equation*}
Arguing as above, for any $u>0$,
\begin{equation*}
\log J\leq p\Bigr( \log \int_0^{\infty}(1+|\log x|)e^{-ux}\dd x+u\frac{s}{p}\Bigr).
\end{equation*}
Let $u\in (0,1)$. First, by integration by parts,
\begin{equation*}
    \int_0^1 |\log x|e^{-ux}\dd x=O(1).
\end{equation*}
It follows that
\begin{equation*}
   \int_0^{\infty}(1+|\log x|^\delta)e^{-ux}\dd x=\int_0^{\infty}(\log x)^{\delta} e^{-ux}\dd x+O\Bigr(\frac{1}{u}\Bigr).  
\end{equation*}
By change of variables,
\begin{equation*}
  \int_0^{\infty}(\log x)^{\delta} e^{-ux}\dd x=\frac{1}{u}\int_0^{\infty} (\log x-\log u)^\delta e^{-x}\dd x.
\end{equation*}
Hence
\begin{equation*}
    \int_0^{\infty}(\log x)e^{ -ux}\dd x=-\frac{\log u}{u}+O\Bigr(\frac{1}{u}\Bigr)
\end{equation*}
and
\begin{equation*}
    \int_0^{\infty}(\log x)^2 e^{-ux}\dd x=\frac{(\log u)^2}{u}+O\Bigr(\frac{\log u}{u}\Bigr).
\end{equation*}
It follows that
\begin{equation*}
   \log J\leq p\Bigr( \log\Bigr(\frac{|-\log u|^\delta}{u}+O\Bigr(\frac{|\log u|^{\delta-1}}{u}\Bigr)\Bigr)+u\frac{s}{p}\Bigr).
\end{equation*}
Thus taking $u=\frac{p}{s}<1$ we obtain 
\begin{equation*}
    \log J\leq p\Bigr(\log \Bigr(\frac{s}{p}\Bigr(\log\frac{s}{p}\Bigr)^\delta \Bigr) +O(p).
\end{equation*}
\end{proof}

\subsubsection*{Gunson-Panta change of variables}
The Gunson-Panta method relies on considering the nearest neighbor graph of a configuration and partitioning the phase-space accordingly. We use the notation  introduced already in Section \ref{sec2}, but we can apply it more generally to $p$ points, and not necessarily an even number of points.
Given a set of $p$ points $z_1, \dots, z_p$,  let $\mathcal{I}_1, \dots , \mathcal{I}_K$ denote the connected components of its nearest neighbor graph, and denote by $m_k$ and $m_k'$ the two points of the 2-cycle $\mathcal C_k$. 
We then define the Gunson-Panta change of variables via the map
\begin{equation}\label{defphi}
\Phi^{GP}_p (z_1, \dots, z_p) = (u_1, \dots, u_p)\end{equation} where 
\begin{equation}
    u_i=\left\{
    \begin{array}{ll}
        z_i-z_{\phi_1(i)} & \text{if }i\in \mathcal{I}_k,j\neq m_k' \\
        z_{i} & \text{if } i=m_k'.
    \end{array}
    \right.
\end{equation}

\subsubsection*{Number of graphs}
It will be important to count the number of nearest-neighbor graph  types.
The number of nearest neighbor graphs on $\{1,\ldots,p\}$ with $K$ connected components is given by
\begin{equation}\label{eq:numberD}
    |D_{p,K}|=\frac{2(p-1)!p^{p-2K} }{2^K (K-1)! (p-2K)!}.
\end{equation}
This identity can be found for instance in \cite{GunPan}. One can check that
\begin{equation}\label{numberD2}
    \log |D_{p,K}|=p\log p -K\log K+(p-2K)(\log p -\log(p-2K))-K-K\log 2+O(\log p).
\end{equation}
Note that the upper bound $\leq $ in \eqref{numberD2} can be obtained by trivial counting as in the proof of proof of Proposition \ref{prop:upper bound} (Step 2).

\begin{remark}[Typical number of connected components]
Assume that $z_1,\ldots,z_p$ are $p$ i.i.d variables drawn uniformly on the square $\Lambda=[0,1]^2$. Then, the number of connected components of $\gamma_p$ satisfies
\begin{equation*}
    \Esp[K]=\frac{3\pi}{8\pi+3\sqrt{3}}p+o(p).
\end{equation*}
We refer to \cite[Theorem 2]{eppstein1997} for a proof of this statement.
\end{remark}

\subsection{Upper bound for a nearest neighbor model}

Corollary \ref{coromino} tells us that, up to an error term involving ratios of nearest and second-nearest neighbor distances, one can bound $F_\lambda$ from below by  $\F_\lambda^\nn(Z_{2N})$ where 
\begin{equation}\label{eq:defdipoleen}
  \F_\lambda^\nn(Z_{2p}):=  - \hal \sum_{ i=1}^{2p}
   \g_\lambda(z_i-z_{\phi_1(i)} ).
\end{equation}
We  examine in the next two lemmas this reduced nearest neighbor interaction  model, which is the same that was studied in \cite{GunPan}.  The proof will be a model for what follows.

\begin{lem}[Expansion for the nearest neighbor model]\label{lemma:nearest neigh}
Let $\beta\in [2,+\infty)$ and $\mathcal Z_\beta$ be the constant defined in \eqref{def:Cbeta}. Let $p\in \{1,\ldots,N\}$, $K\in \{1,\ldots,p\}$ and $\gamma\in D_{2p,K}$. 

$\bullet$ For $\beta=2$ we have 
\begin{multline}\label{eq:est2}
    \log \int_{\gamma_{2p}=\gamma} \exp\(-\beta \F_\lambda^{\nn}(Z_{2p})\)  \dd Z_{2p} \leq K \log N 
\\+ K\log \Bigr( |\log \lambda|+\log \frac{N}{K}\Bigr)+K\log 2\pi +\Bigr(1-\frac{\beta}{4}\Bigr)(2p-2K)\log \frac{N}{p-K}+C_\beta(p-K) +O\(\frac{N \log \log \frac1\lambda}{\log\frac1 \lambda}\).
\end{multline}

$\bullet$ For $\beta\in (2,4)$, we have
\begin{multline}
    \log \int_{\gamma_{2p}=\gamma} \exp\(-\beta \F_\lambda^{\nn}(Z_{2p})\)  \dd Z_{2p} \\ \leq K \log N 
  + (2-\beta)K\log\lambda+K \log \mathcal Z_{\beta}+\Bigr(1-\frac{\beta}{4}\Bigr)(2p-2K)\log \frac{N}{p-K}+C_\beta(p-K).
\end{multline}

$\bullet$ For $\beta=4$, we have
\begin{multline}\label{eq:estb4}
    \log \int_{\gamma_{2p}=\gamma} \exp\(-\beta \F_\lambda^{\nn}(Z_{2p})\)  \dd Z_{2p} \leq K \log N 
\\ +(2-\beta)K\log \lambda+K\log \mc{Z}_\beta+  (2p-2K)\log\Bigr(|\log \lambda|+\log \frac{N}{p-K}\Bigr)+C_\beta(p-K).
\end{multline}

$\bullet$ For $\beta>4$, we have
\begin{multline}\label{eq:estb4 bis}
    \log \int_{\gamma_{2p}=\gamma} \exp\(-\beta \F_\lambda^{\nn}(Z_{2p})\)  \dd Z_{2p}
 \leq K \log N  \\   +(2-\beta)K\log \lambda+K\log \mc{Z}_\beta+(2p-2K)\Bigr(2-\frac{\beta}{2}\Bigr)\log \lambda+C_\beta(p-K).
\end{multline}
\end{lem}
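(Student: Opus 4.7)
The plan is to apply the Gunson-Panta change of variables $\Phi^{GP}_{2p}$ from \eqref{defphi} and then bound the resulting integral using the Dirichlet-type estimates of Lemmas \ref{lemdirichlet} and \ref{lemma:Dir trunc}. For a fixed graph $\gamma \in D_{2p,K}$, I would keep the $K$ root positions $z_{m_k'}$ as independent variables and replace each non-root $z_i$ by the displacement $u_i = z_i - z_{\phi_1(i)}$ (Jacobian $=1$). Since both $m_k$ and $m_k'$ have each other as nearest neighbor, the sum defining $\F_\lambda^{\nn}$ counts the 2-cycle edge twice and each tree edge once, so
\[
-\beta \F_\lambda^{\nn} = \beta \sum_{k=1}^K \g_\lambda(u_{m_k}) + \frac{\beta}{2} \sum_{i \in \mathrm{trees}} \g_\lambda(u_i).
\]

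For the upper bound I would drop the angular constraints hidden in $\gamma_{2p}=\gamma$, keeping only (a) $z_{m_k'} \in \Lambda$, which gives $N^K$; and (b) the global bound $\sum_k |u_{m_k}|^2 + \sum_i |u_i|^2 \leq C_0 N$, a consequence of the disjointness of the balls of radius (half the nearest-neighbor distance) centered at each $z_i$, combined with Remark \ref{remark:maximal5}. Writing each $u$-integral in polar coordinates and setting $x_k=|u_{m_k}|^2$, $y_i=|u_i|^2$ (so that $du = \pi\,dx$), the bound factors into a 2-cycle part with integrand $\pi\,e^{\beta \g_\lambda(\sqrt{x_k})} \sim \pi x_k^{-\beta/2}$, and a tree part with integrand $\pi\, e^{(\beta/2) \g_\lambda(\sqrt{y_i})} \sim \pi y_i^{-\beta/4}$; splitting the global constraint, both sides live under $\sum_k x_k \leq C_0 N$ and $\sum_i y_i \leq C_0 N$ respectively.

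It then remains to evaluate these two Dirichlet-type integrals in each temperature regime. For $\beta>2$, the 2-cycle integral converges on $[\lambda^2,+\infty)$ and equals $\lambda^{2-\beta}\mathcal Z_\beta$ exactly by the scaling relation $\g_\lambda(r)=\g(\lambda)+\g_1(r/\lambda)$ from \eqref{approxgeta}; for $\beta=2$ one instead invokes \eqref{eq:Dir log} (with effective cutoff $\lambda^2$ and $\delta=0$), producing the promised $K\log(|\log\lambda|+\log(N/K))$ term. The tree integral is handled by Lemma \ref{lemdirichlet} for $\beta\in (2,4)$ (all $\alpha_i=1-\beta/4 \in (0,1)$), by \eqref{eq:Dir log} at $\beta=4$ (exponent $\alpha=0$, i.e.\ integrand $y^{-1}$), and for $\beta>4$ by a direct factor-by-factor upper bound $\int_{\lambda^2}^\infty y^{-\beta/4}\,dy = \lambda^{2-\beta/2}/(\beta/4-1)$ on each variable separately.

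The main technical point is obtaining sharp constants through the Dirichlet machinery and making sure the cutoff $\lambda^2$ introduced by the square change of variables is tracked correctly, especially in the borderline cases $\beta=2$---where the finer $\log\log$-type bound of \eqref{eq:Dir log} is what upgrades a na\"ive $K(|\log\lambda|+\log N)$ into the sharp $K\log(|\log\lambda|+\log(N/K))$---and $\beta=4$, where a similar trade-off occurs for the tree integral.
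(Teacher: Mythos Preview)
Your proposal is correct and follows essentially the same approach as the paper: Gunson--Panta change of variables, relaxation to the global volume constraint $\sum |u_i|^2 \le CN$ from disjointness of nearest-neighbor balls, then separate evaluation of the $K$ 2-cycle integrals (via scaling for $\beta>2$, via \eqref{eq:Dir log} for $\beta=2$) and the $2p-2K$ tree integrals (via Lemma~\ref{lemdirichlet} for $\beta<4$, via \eqref{eq:Dir log} for $\beta=4$, and factor-by-factor for $\beta>4$). The only minor quibble is that Remark~\ref{remark:maximal5} is not needed for the volume bound---disjointness of the balls $B(z_i,\tfrac12|z_i-z_{\phi_1(i)}|)$ alone suffices---and the tree integral at $\beta=2$ is also handled by Lemma~\ref{lemdirichlet} (with $\alpha_i=\tfrac12$), which your sketch implicitly allows.
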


Because $\beta\geq 2$, the free energy of a neutral dipole of size $\lambda$ diverges in $\log(\lambda^{2-\beta}\indic_{\beta>2}+|\log \lambda|\indic_{\beta=2})$ as $\lambda$ tends to $0$. As a consequence, pairs of particles of opposite charges are formed and most of them are of size $\lambda$.

\begin{proof}{\bf Step 1: change of variables.}
Let $\gamma\in D_{2p,K}$. As in Section 2, let us denote $\mathcal{I}_k$, $k=1,\ldots,K$ the connected components of $\gamma$ and $\mc{C}_k:=\{m_k,m_k'\}\subset \mathcal{I}_k$ the $2$-cycle in each connected component.  

Let us denote for shortcut
\begin{equation*}
    J(\gamma):=\int_{\Lambda^{2p}\cap\{\gamma_{2p}=\gamma\}}\exp\(-\beta \F_\lambda^{\nn}(\ZN)\)\dd Z_{2p},
\end{equation*}
 where again $\Lambda= [0, \sqrt N]^2$. Performing the Gunson-Panta change of variables \eqref{defphi}, we may write
\begin{equation}\label{eq:red neutral}
   J(\gamma)
   =\int_{\Phi^{GP}_{2p}(\{\gamma_{2p}=\gamma \})}\prod_{k=1}^K \prod_{\substack{i\in \mathcal{I}_k\backslash \mathcal C_k }} \exp\Bigr(\frac{\beta}{2}\g_{\lambda}(u_i)\Bigr) \exp(\beta \g_{\lambda}(u_{m_k}))\dd U_{2p}.
\end{equation}
Indeed, if $\{m_k,m_k'\}$ is a $2$-cycle, the interaction $\g_\lambda(u_{m_k})$ is counted twice, which explains why this term appears in the above integral with a factor $\beta$ instead of $\beta/2$. This turns out to be crucial:
 since $\beta\geq 2$, when $\{m_k,m_k'\}$ is a 2-cycle, the term $\exp(\beta \g_\lambda(u_{m_k}))$ diverges. On the other hand   the weight of pairs of points which are not in a 2-cycle gives a convergent integral  if $\beta<4$ and a  divergent one  if $\beta \ge 4$. 
 \smallskip
 
 \noindent
{\bf Step 2: separation of variables.}
The domain of integration in \eqref{eq:red neutral} is a complicated subset of $\Lambda^{2p}$ but one can approximate it by only keeping the volume constraint:
\begin{equation*}
 \Phi^{GP}_{2p}(\{\gamma_{2p}=\gamma\})\subset B:=\Bigr\{U_{2p}\in \R^{4p}: \sum_{k=1}^{K}\sum_{i\in I_k,i\neq m_k'} |u_i|^2 \leq \frac{ 36 N}{\pi}, \forall k\in \{1,\ldots,K\}, u_{m_k'}\in \Lambda \Bigr\}.
\end{equation*}
This takes advantage of the fact that the balls $B(z_i, \frac{|z_i-z_{\phi_1(i)}|}{2})$ are disjoint balls in $ [-\sqrt N, 2 \sqrt N]^2$.

Splitting the variables into two categories (those corresponding to $i =m_k'$, i.e.~the reference points, and the others) and integrating a reference point in each connected component in the box $\Lambda$ of volume $N$, we find
\begin{multline}\label{eq:cutt}
\int_{B}\prod_{k=1}^K \prod_{i\in I_k \backslash \mathcal C_k  } \exp\Bigr(\frac{\beta}{2}\g_\lambda(u_i) \Bigr) \exp(\beta  \g_{\lambda}(u_{m_k}) )\dd U_{2p}\\ \leq \Bigr(\int_{B_1}  
\prod_{k=1}^K \exp(\beta \g_\lambda(u_{m_k}))
\dd u_{m_k}\Bigr) N^K C_\beta^{2p-2K}\int_{B_2}\exp\Bigr(\frac{\beta}{2}\g_\lambda(u_1)\Bigr)\ldots \exp\Bigr(\frac{\beta}{2}\g_\lambda(u_{2p-2K})\Bigr)\dd U_{2p-2K},
\end{multline}
where
\begin{equation}\label{def:D1}
    B_1:=\Bigr\{U_{K}\in (\dR^2)^K:|U_{K}|^2\leq N\Bigr\},
\end{equation}
\begin{equation}\label{def:D2}
    B_2:=\Bigr\{U_{2p-2K}\in (\R^2)^{2p-2K}:|U_{2p-2K}|^2\leq \frac{ 36 N}{\pi}\Bigr\}.
\end{equation} 

\noindent
{\bf{Step 3: integration of the $2$-cycles distances.}}
\

$\bullet$ The case $\beta>2$. Performing a polar change of coordinates and using \eqref{approxgeta}, we get
\begin{equation}\label{3.20b} 
\int_{\R^2}\exp(\beta \g_\lambda(u))\dd u=2\pi\int_0^{\infty}\exp(\beta \g_\lambda(r))r\dd r=2\pi\lambda^{2-\beta}\int_0^{\infty}\exp(\beta \g_1(r))r\dd r=\lambda^{2-\beta}\mathcal Z_{\beta}>0
\end{equation}
by definition of $\mc{Z}_\beta$, see \eqref{def:Cbeta}. Hence, removing the restriction to the set $B_1$, we obtain
\begin{equation}\label{eq:dipa}
  \log\int_{B_1} \prod_{k=1}^K \exp(\beta \g_\lambda(u_{m_k}))\dd u_{m_k}\leq K\log (\lambda^{2-\beta}\mc{Z}_\beta).
\end{equation}

$\bullet$ The case $\beta=2$. Performing a polar change of variables and the change of variables $r_i'=r_i^2$,  by inserting \eqref{eq:Dir log} with $s=\frac{36 N}{\pi}$ we obtain
\begin{equation*}
  \log \int_{B_1}\prod_{k=1}^K \exp(\beta \g_\lambda(u_{m_k}))\dd u_{m_k}\leq K \log 2\pi + K\log\Bigr(|\log \lambda|+\log \frac{N}{K}\Bigr)+O_\beta\Bigr(\frac{K \log \log \frac{N}{\lambda K}}{\log  \frac{N}{\lambda K}}\Bigr).
\end{equation*}
The last error term into $ O(\frac{N \log \log \frac1\lambda}{\log\frac1 \lambda})$ since for $\frac{K}{N}=x\in [0,1]$, we can check that (distinguishing for instance $ x\ge \lambda^2$ and $x\le \lambda^2$)  
\begin{equation*}
    \frac{x \log\log \frac{1}{\lambda x}}{\log \frac{1}{\lambda x}}\le C  \frac{\log \log \frac1\lambda}{|\log \lambda|}.
\end{equation*}

\noindent
{\bf{Step 4: Integration of the other distances and conclusion.}}
\

$\bullet$ The case $\beta\in [2,4)$. For $i\in \{1,\ldots,2p-2K\}$, the weight $\exp(\frac{\beta}{2}\g_\lambda(u_i))$ is not divergent, and therefore the cutoff $\lambda$ can be removed. Using that $\g_\lambda\leq \g+C$ from \eqref{approxgeta} we find
\begin{multline*}
\int_{B_2}\exp\Bigr(\frac{\beta}{2}\g_\lambda(u_1)\Bigr)\cdots \exp\Bigr(\frac{\beta}{2}\g_\lambda(u_{2p-2K})\Bigr)\dd U_{2p-2K}\\ \leq C_\beta^{2p-2K}\int_{B_2}\frac{1}{|u_1|^{\frac{\beta}{2}}}\cdots \frac{1}{|u_{2p-2K}|^{\frac{\beta}{2}}}\dd U_{2p-2K}.
\end{multline*}
By performing again a polar change of coordinates  and the change of variables $r_i'= r_i^2$, we can rewrite this as an integral over a simplex of $\R^{2p-2K}$:
\begin{equation}\label{eq:reD}
   \int_{B_2}\frac{1}{|u_1|^{\frac{\beta}{2}} }\cdots \frac{1}{|u_{2p-2K}|^{\frac{\beta}{2} } }\dd U_{2p-2K}\leq (2\pi C)^{2p-2K} \int_{B_2'}{r_1^{-\frac{\beta}{4}}}\ldots{r_{2p-2K}^{-\frac{\beta}{4}} }\dd R_{2p-2K},
\end{equation}
where 
\begin{equation*}
   B_2':=\left\{R_{2p-2K}\in (\R^{+})^{2p-2K}: r_1+\ldots+r_{2p-2K}\leq \frac{36 N}{\pi}\right\}. 
\end{equation*}

Since $\beta\in [2,4)$, we can set $\alpha=1-\frac{\beta}{4}>0$ and $k_0=2p-2K$ and insert \eqref{eq:estimate Dirichlet} with $ s= \frac{36 N}{\pi}$  into \eqref{eq:reD} to obtain
\begin{multline}\label{eq:bound dir}
    \log \int_{B_2}
    \frac{1}{|u_1|^{\frac{\beta}{2}} }\cdots \frac{1}{|u_{2p-2K}|^{\frac{\beta}{2} } }\dd U_{2p-2K}
    \\\leq \Bigr(1-\frac{\beta}{4}\Bigr) (2p-2K) \log N  -\Bigr(1-\frac{\beta}{4}\Bigr)   (2p- 2K)\log\Bigr(\Bigr(1-\frac\beta4\Bigr) (2p-2K)\Bigr)+C_\beta(p-K)\\
    = \Bigr( 1-\frac\beta4\Bigr)(2p-2K)\log \frac{N}{p-K}+C_\beta(p-K).
\end{multline} 

$\bullet$ The case $\beta=4$. In view of \eqref{eq:Dir log} and arguing as in the case $\beta=2$ of Step 3, we find
\begin{multline*}
   \log \int_{B_2}\exp\Bigr(\frac{\beta}{2}\g_\lambda(u_1)\Bigr)\cdots \exp\Bigr(\frac{\beta}{2}\g_\lambda(u_{2p-2K})\Bigr)\dd U_{2p-2K}\leq (2p-2K)\log\Bigr(|\log \lambda|+\log \frac{N}{p-K}\Bigr)
   \\+C_\beta(p-K).
\end{multline*}

$\bullet$ The case $\beta>4$. Arguing as in \eqref{3.20b}, one can check that
\begin{equation*}
   \log \int_{B_2}\exp\Bigr(\frac{\beta}{2}\g_\lambda(u_1)\Bigr)\cdots \exp\Bigr(\frac{\beta}{2}\g_\lambda(u_{2p-2K})\Bigr)\dd U_{2p-2K}\leq (2p-2K)\Bigr(2-\frac{\beta}{2}\Bigr)\log \lambda+C_\beta(p-K). 
\end{equation*}

Inserting the estimates of Step 3 and 4 into \eqref{eq:cutt} proves the result.

\end{proof}

We now expand the partition function of the reduced dipole models by splitting the phase according to the nearest-neighbor graph of points, then optimizing over the number of connected components  $K$.

\begin{lem}[Free energy of the reduced dipole model]\label{lemma:part nn}
Let us define
\begin{equation*}
    \tilde{\F}_\lambda(Z_{2N}):= -\frac{1}{2}\sum_{i=1}^{2N} \g_\lambda(z_i-z_{\phi_1(i)}) \indic_{\phi_1\circ\phi_1(i)=i, d_i d_{\phi_1(i) }=-1}
\end{equation*}
and
\begin{equation*}
    \K_{N,\beta}^\lambda:=\int_{\Lambda^{2N} } \exp(-\beta \tilde{\F}_\lambda(Z_{2N}))\dd Z_{2N}. 
\end{equation*}
We have 
\begin{equation*}
    \log \K_{N,\beta}^\lambda=2N\log N+N\Big((2-\beta) (\log \lambda) \indic_{\beta>2}+(\log|\log\lambda|)\indic_{\beta=2}+\log \mc{Z}_\beta-1\Big)+O_\beta(N\bar \omega_\lambda)
\end{equation*}
where
\begin{equation}\label{def:o nn}
   \bar \omega_\lambda=\begin{cases}
      |\log\lambda|^{-\nicefrac{1}{2}} & \text{if $\beta=2$}\\
       \lambda^{\beta-2} & \text{if $\beta>2$}.
    \end{cases}
\end{equation}
\end{lem}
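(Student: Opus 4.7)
The general strategy is the Gunson--Panta decomposition of $\Lambda^{2N}$ by nearest-neighbor graph, exactly as in the proof of Lemma~\ref{lemma:nearest neigh}. Writing $\K_{N,\beta}^\lambda = \sum_K \sum_{\gamma\in D_{2N,K}} \int_{\gamma_{2N}=\gamma} e^{-\beta \tilde{\F}_\lambda}\, dZ_{2N}$, the key observation is that $\tilde{\F}_\lambda$ only depends on the 2-cycles of $\gamma$ whose two endpoints carry opposite charges. After the change of variables $u_i = z_i - z_{\phi_1(i)}$ (with $u_{m_k'}=z_{m_k'}$), the integrand factorizes over the connected components of $\gamma$: a neutral 2-cycle contributes $\exp(\beta\g_\lambda(u_{m_k}))$ while same-sign 2-cycles and all tree vertices contribute weight $1$. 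The dominant contribution should come from graphs with $K=N$ and all $N$ components being neutral 2-cycles; the number of such graphs is exactly $N!$, one per bijection between positives and negatives.

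For the upper bound, one follows Steps 2--4 of the proof of Lemma~\ref{lemma:nearest neigh}: references $u_{m_k'}$ range over $\Lambda$ (giving $N^K$), non-reference coordinates over a ball of squared-radius $O(N)$, each neutral relative coordinate integrates over $\R^2$ to $\lambda^{2-\beta}\mathcal Z_\beta$ for $\beta>2$ (and to $2\pi(|\log\lambda|+O(\log N))$ for $\beta=2$, via the $\delta=0$ case of Lemma~\ref{lemma:Dir trunc}), and non-neutral 2-cycles or tree vertices are bounded by polynomial-in-$N$ volume factors. Restricted to $K=N$ with all 2-cycles neutral the total contribution is $N!\cdot N^N\cdot(\lambda^{2-\beta}\mathcal Z_\beta)^N$, whose logarithm is exactly the claimed main term by Stirling's formula ($\log N! + N\log N = 2N\log N - N + O(\log N)$). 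Each ``defect''---a 2-cycle that is same-sign, or a non-2-cycle vertex---replaces one $\lambda^{2-\beta}\mathcal Z_\beta$ factor with a constant, losing $\lambda^{\beta-2}$ (respectively $|\log\lambda|^{-1}$ for $\beta=2$) and, after summation over defect patterns weighted by their combinatorial count using \eqref{numberD2}, contributes at most $O(N\bar\omega_\lambda)$ to the logarithm.

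For the matching lower bound, I would construct explicit good configurations: fix a bijection $\sigma\in S_N$, place the positives $x_1,\ldots,x_N$ in a well-separated way (for example, one per cell of a partition of $\Lambda$ into $N$ sub-cells of area $1$, together with an additional $N!$ factor counting the bijection between positive labels and cells), and each $y_{\sigma(i)}$ in $B(x_i,R)$ at some intermediate scale $R$. For such configurations the nearest-neighbor graph is automatically the desired all-neutral matching, $\tilde{\F}_\lambda=-\sum_i\g_\lambda(x_i-y_{\sigma(i)})$, and each $y$-integral yields $\lambda^{2-\beta}\mathcal Z_\beta$ up to a tail of order $(\lambda/R)^{\beta-2}$. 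The main difficulty is choosing $R$ to simultaneously balance the exclusion cost $\sim NR^2$, the dipole tail, and the target $\bar\omega_\lambda$; the most delicate case is $\beta=2$, where the dipole integral is only logarithmically divergent and $\mathcal Z_2=2\pi$ is set by convention rather than as a convergent integral, so Lemma~\ref{lemma:Dir trunc} must be applied carefully to recover the claimed $|\log\lambda|^{-1/2}$ rate.
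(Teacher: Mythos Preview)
Your upper bound sketch is essentially the paper's argument: the paper partitions by $(K,p_0)$ (number of connected components and number of isolated neutral $2$-cycles), bounds $|D'_{2N,K,p_0}|$ combinatorially, integrates out the non-neutral part trivially in volume $N^{K-p_0}$, applies Lemma~\ref{lemma:nearest neigh} with $p=K=p_0$ to the neutral $2$-cycles, and optimizes the resulting function $J(K,p_0)$ over $(K/N,p_0/N)$. Your ``defect'' bookkeeping is the same optimization in informal language.

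The lower bound, however, has a genuine gap. Your cell construction yields
\[
\K_{N,\beta}^\lambda \ \ge\ (N!)^2 \cdot 1^N \cdot \bigl(\lambda^{2-\beta}\mathcal Z_\beta\bigr)^N\bigl(1+o(1)\bigr),
\]
whose logarithm is $2N\log N - 2N + N\bigl((2-\beta)\log\lambda+\log\mathcal Z_\beta\bigr)+\ldots$, i.e.\ off by exactly $-N$ from the claimed main term $2N\log N - N + \ldots$. The loss is structural: constraining the positives to one-per-cell reduces the available volume from $N^N$ to $N!\cdot 1^N = N^N e^{-N(1+o(1))}$. The alternative you hint at (free positives with a uniform separation radius $R$) fails for the same reason you cannot balance the two costs: the dipole tail $(\lambda/R)^{\beta-2}\le C\bar\omega_\lambda$ forces $R\ge c$, but then the exclusion cost $\sim NR^2$ is of order $N$, not $N\bar\omega_\lambda$.

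The fix is to use \emph{adaptive} radii as in the proof of Proposition~\ref{prop:lower bound}: integrate the positives freely over $\Lambda^N$, and for each configuration $X_N$ place $y_{\sigma(i)}\in B\bigl(x_i,\tfrac12\rr(x_i)\bigr)$ with $\rr(x_i)=\tfrac12\min_{j\ne i}|x_i-x_j|$. This recovers the full factor $N!\cdot N^N$, and the per-dipole relative error $O\bigl((\lambda/\rr(x_i))^{\beta-2}\wedge 1\bigr)$ is then handled by Jensen and the explicit nearest-neighbor density $f(r)=2\pi r e^{-\pi r^2}$, giving the required $O(N\bar\omega_\lambda)$. (Incidentally, the paper's written proof of this lemma only displays the upper bound; the lower bound is implicitly the same construction as Proposition~\ref{prop:lower bound}, which is in fact easier here since $\tilde\F_\lambda$ carries no dipole--dipole interaction.) Your worry about $\beta=2$ is misplaced once you use adaptive radii: with $\rr(x_i)$ typically of order $1$, the relative error $\log\rr(x_i)/|\log\lambda|$ averages to $O(|\log\lambda|^{-1})$, which is even better than $\bar\omega_\lambda=|\log\lambda|^{-1/2}$.
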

Note that the expansion of $\log \K_{N,\beta}^\lambda$ is the same as \eqref{eq:devlogz} up to the error terms.
\smallskip

\begin{proof}
Let us denote here $D'_{2N,K,p_0}$ the set of functional digraphs on $\{1,\ldots,2N\}$ with $K$ connected $p_0$ isolated (neutral) dipoles, and let us start by enumerating it.

$\bullet$ To account for the indistinguishability of connected components, we must divide by $K!$.
    
$\bullet$ We then select the connected components that form neutral cycles, which results in
\begin{equation*}\frac{K!}{p_0!(K - p_0)!}
\end{equation*}
    distinct configurations.

$\bullet$ Labeling the vertices of the $p_0$ doubly isolated neutral 2-cycles produces
\begin{equation*}
    \frac{(N!)^2}{((N - p_0)!)^2} = \frac{(2N)!}{2^{2p_0}(2N - 2p_0)!} e^{O(\log N)}
\end{equation*} choices,
where Stirling's approximation has been applied.

$\bullet$ The number of ways to assign labels to the remaining 2-cycles can be bounded by counting the possible pairings of $K - p_0$ cycles from $2N - 2p_0$ points, this gives
\begin{equation*}
    \frac{(2N - 2p_0)!}{(2N - 2K)! 2^{K - p_0}}
\end{equation*}
choices.

$\bullet$ Finally, assigning neighbors to each vertex not included in a 2-cycle provides fewer than $(2N)^{2N - 2K}$ possibilities.

    By combining these results, we obtain the bound

\begin{equation*}
    |D_{2N, K, p_0}'| \leq \frac{(2N)! (2N)^{2N - 2K}}{(2N - 2K)! 2^{K + p_0} p_0! (K - p_0)!} e^{O(\log N)}.
\end{equation*}
Hence
\begin{multline*}
    \log |D'_{2N,K,p_0}|=2N\log (2N)-2N+(2N-2K)\log (2N)-(2N-2K)\log (2N-2K)+2N-2K\\-(K+p_0)\log 2-p_0\log p_0+p_0-(K-p_0)\log (K-p_0)+K-p_0+O(\log N)
\end{multline*}
which gives
\begin{multline}\label{eq:b35}
    \log |D'_{2N,K,p_0}|\leq 2N\log N-(2N-2K)\log \frac{N-K}{N} -p_0\log p_0\\-(K-p_0)\log (K-p_0)-N+C(N-p_0)+O(\log N).
\end{multline}

We may now write
\begin{equation*}
    \K_{N,\beta}^\lambda
    = \sum_{K,p_0}\sum_{\gamma\in D'_{2N,K,p_0}}\int_{\gamma_{2N}=\gamma}e^{-\beta \tilde{\F}_\lambda(Z_{2N}) }\dd Z_{2N},
    \end{equation*}
which gives
    \begin{equation*}
    \log \K_{N,\beta}^\lambda
    \le \max_{K,p_0} \Bigr(\log |D'_{2N,K,p_0}|+\max_{\gamma\in D'_{2N,K,p_0}}\log\int_{\gamma_{2N}=\gamma}e^{-\beta \tilde{\F}_\lambda(Z_{2N}) }\dd Z_{2N} \Bigr)+O(\log N).
\end{equation*}
Fix $p_0\leq K\leq N$. Let $\gamma\in D'_{2N,K,p_0}$. Assume that the positive charges in the neutral two-cycles are $x_1,\ldots,x_{p_0}$ with respective nearest-neighbor $y_1,\ldots,y_{p_0}$. Let $\gamma^1$ be the restriction of $\gamma$ to $\{1,\ldots,p_0\}\cup\{N+1,\ldots,N+p_0\}$. Then, by integrating out a reference point from each connected component of $\gamma$ not in a 2-cycle, and using the fact that $\sum_{i \notin I^\dip} \rr_1(z_i)^2 \leq CN$ (by disjointness of the balls $B(z_i, \frac14 |z_i-z_{\phi_1(i)}|)$),  we find that
\begin{equation*}
\int_{\gamma_{2N}=\gamma}e^{-\beta \tilde{\F}_\lambda(Z_{2N}) }\dd Z_{2N}\leq C^{N-p_0}N^{K-p_0}\int_{\gamma_{2p_0}=\gamma^1 } e^{-\beta \tilde{\F}_\lambda(Z_{2p_0})}\dd Z_{2p_0}.
\end{equation*}
We next insert Lemma \ref{lemma:nearest neigh} with $p=K=p_0$ and \eqref{eq:b35} to obtain 
\begin{equation}\label{eq:Kbound}
    \log \K_{N,\beta}^\lambda\leq 
      2N\log N+N(\log \mc{Z}_\beta-1)+J(K,p_0)+O(\log N)+O\( N\frac{\log \log \frac1\lambda}{\log \frac 1\lambda}\)\indic_{\beta=2}
\end{equation}
where
\begin{multline*}
    J(K,p_0):=K\log N-p_0\log p_0-(K-p_0)\log(K-p_0)\\+p_0(2-\beta)(\log \lambda)\indic_{\beta>2}+p_0(\log|\log\lambda|)\indic_{\beta=2}-(2N-2K)\log\frac{N-K}{N}+C_\beta(N-p_0).
\end{multline*}
Let  us now optimize $J$. One can write $J(K,p_0)=N\psi(\frac{K}{N},\frac{p_0}{N})$, where
\begin{multline*}
    \psi:(x,y)\in \{0\leq y\leq x\leq 1\}\mapsto -y\log y-(x-y)\log(x-y) \\ +y(2-\beta)(\log \lambda )\indic_{\beta>2}+y(\log|\log\lambda|)\indic_{\beta=2}-2(1-x)\log(1-x)+C_\beta(1-y).
\end{multline*}
One can check that the maximum of $\psi$ is attained at $y=x(1-O_\beta(\lambda^{\beta-2}))$. Moreover $\psi(x,x)=\phi(x)$ with
\begin{equation*}
    \phi'(x)=
        -\log x+2\log(1-x) + (2-\beta)(\log \lambda)\indic_{\beta>2} +(\log|\log\lambda|)\indic_{\beta =2} +C_\beta'\end{equation*}
and that this vanishes for $x=1-O_\beta(\bar{\omega}_\lambda)$ with $\bar{\omega}_\lambda$ as in \eqref{def:o nn}. Therefore
\begin{equation*}
    \sup \psi=(2-\beta)(\log \lambda)\indic_{\beta>2}+(\log|\log\lambda|)\indic_{\beta=2}+O_\beta(N\bar{\omega}_\lambda).
\end{equation*}
Inserting this into \eqref{eq:Kbound} concludes the proof of the lemma by using that when $\beta=2$,
\begin{equation*}
    O\(\frac{\log \log \frac1\lambda}{\log \frac 1\lambda}\)\indic_{\beta=2}=O(\bar{\omega}_\lambda).
\end{equation*}
\end{proof}

\subsection{Upper bound on the energy of $p$ points for a nearest-neighbor model}
We now study a new nearest-neighbor model which will be useful for evaluating the contributions of the dipole-dipole interaction or quadrupole errors that appear in \eqref{minoF02 bis}. We consider an integral over $p$ variables $z_1,\ldots,z_p$ living on $\Lambda^{p}$, where  the nearest  neighbor interaction energy is counted only for a small subgroup of $k$ points $z_1,\ldots,z_k$. 
Let us emphasize that the computations differ significantly from those of Lemma~\ref{lemma:nearest neigh} since the probability that both points of  a $2$-cycle in the nearest neighbor graph of $z_1,\ldots,z_p$ belong to the subset $\{z_1,\ldots,z_k\}$ is small. 

\begin{prop}\label{prop:auxk}

Let $k\le p\le N$ with $p\geq \frac{N}{2}$. Let us consider the energy over $\Lambda^{p}$ defined by 
\be \label{defFaux}
F_k(z_1, \dots, z_p) := \hal \sum_{i=1}^{k}   \log   \rr_1(z_i)
\ee
with $\rr_1$ is defined as in \eqref{nn2}.
Let $\delta=0$ when $\beta\neq 4$ and $\delta\in \{0,1\}$ when $\beta=4$. Let $\delta'=0$ when $\beta\neq 0$ and $\delta'=1$ when $\beta=0$. Then, we have
\begin{multline}\label{defZaux}
    \log \int_{\Lambda^p} \exp(-\beta F_k(Z_p))\prod_{i=1}^k \Bigr(1+\log \frac{\rr_1(z_i)}{\lambda}\Bigr)^{\delta}\prod_{i=1}^k \Bigr((1+\log \rr_1(z_i))\Bigr)^{\delta'} \dd z_1\ldots \dd z_p \\ \leq p \log N+C_\beta k+\frac{k}{2}\gamma_{\beta,k} + C_\beta\log \log N ,
\end{multline}
where 
\begin{equation}\label{def:gbetak}
\gamma_{\beta,k}= \begin{cases}
0 & \text{if } \beta=0 \text{ and }\delta'=1\\
0 & \text{if } \ \beta \in (0,2)
\\ \log\Bigr(|\log \lambda| \frac{k}{N} \vee 1\Bigr)& \text{if } \ \beta =2
\\ \log\Bigr(\lambda^{2-\beta}\frac{k}{N}\vee 1\Bigr) & \text{if } \ \beta\in (2,4)
\\ \log\Bigr(\lambda^{-2}\frac{k}{N}\vee |\log\lambda|^2\Bigr)& \text{if $\beta=4$ }
\\ \log\Bigr(\lambda^{2-\beta}\frac{k}{N}\vee \lambda^{4-\beta}\Bigr) & \text{if } \ \beta>4.
\end{cases}
\end{equation}
\end{prop}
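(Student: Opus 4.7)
The approach mirrors the strategy used for Lemma \ref{lemma:nearest neigh}: decompose the integration domain according to the nearest-neighbor graph of $(z_1,\ldots,z_p)$, apply the Gunson-Panta change of variables, and then carefully bookkeep how the $k$ ``tracked'' indices interact with the graph structure.

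\emph{Step 1: graph decomposition and change of variables.} I would write
\begin{equation*}
\int_{\Lambda^p} e^{-\beta F_k(Z_p)} W(Z_p)\dd Z_p = \sum_{K=1}^{\lfloor p/2\rfloor}\sum_{\gamma\in D_{p,K}}\int_{\{\gamma_p=\gamma\}} e^{-\beta F_k(Z_p)}W(Z_p)\dd Z_p,
\end{equation*}
where $W$ denotes the product of the $(1+\log(\rr_1/\lambda))^{\delta}$ and $(1+|\log\rr_1|)^{\delta'}$ factors, then apply $\Phi_p^{GP}$ as in \eqref{defphi}. Using disjointness of the balls $B(z_i,\rr_1(z_i))$ inside an enlargement of $\Lambda$, I bound the image by the product region $\{U_p:u_{m_l'}\in\Lambda \text{ for each } l,\ \sum_{i\notin\{m_l'\}}|u_i|^2\le CN\}$.

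\emph{Step 2: classification of tracked indices.} Since $\rr_1(z_i)=|u_j|/4\vee \lambda$ where $j=i$ if $i$ is not a 2-cycle reference and $j=\phi_1(i)$ if $i=m_l'$, each tracked index falls into one of three categories depending on $\gamma$:
\textbf{(a)} one of the $2a$ indices forming $a$ fully-tracked 2-cycles, \textbf{(b)} one of $b$ indices paired with an untracked partner in a 2-cycle, \textbf{(c)} one of $c$ tracked indices sitting on the tree part; with $2a+b+c=k$. After this bookkeeping the singular weight factorizes as
\begin{equation*}
\prod_{i=1}^k \rr_1(z_i)^{-\beta/2}\le C^k\prod_{(a)\text{-cycles}}(|u|\vee\lambda)^{-\beta}\prod_{(b),(c)\text{-vars}}(|u|\vee\lambda)^{-\beta/2},
\end{equation*}
and the logarithmic weights are absorbed similarly.

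\emph{Step 3: explicit integrations.} The $K$ references contribute $N^K$. Each of the $a$ fully-tracked differences can be integrated on $\R^2$ at the cost of
\begin{equation*}
\int_{\R^2}(|u|\vee\lambda)^{-\beta}(1+\log(|u|/\lambda))^{\delta}\dd u \le C_\beta\,\Xi_\beta, \qquad \Xi_\beta=\lambda^{2-\beta}\indic_{\beta>2}+|\log\lambda|\indic_{\beta=2}.
\end{equation*}
The $b+c$ ``half-singular'' variables with exponent $\beta/2$ are bounded via the same mechanism when $\beta\ge 4$ (giving $\lambda^{2-\beta/2}$ or $|\log\lambda|$ per variable) and via the Dirichlet estimate \eqref{eq:estimate Dirichlet} when $\beta<4$ (giving a factor $\sim (N/(p-K))^{1-\beta/4}$ per variable). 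The remaining $p-K-a-b-c$ unweighted difference variables are estimated against the simplex constraint using Lemma \ref{lemdirichlet}, producing a uniform factor $\sim(CN/(p-K))^{p-K-a-b-c}$. The additional logarithmic Dirichlet bounds \eqref{eq:Dir log}--\eqref{eq:Dir log 2} handle the $\delta,\delta'$ corrections.

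\emph{Step 4: graph enumeration and optimization.} The number of graphs $\gamma\in D_{p,K}$ with a prescribed tuple $(a,b,c)$ is bounded by $|D_{p,K}|$ times the number of ways of assigning the $k$ tracked labels to the prescribed roles, i.e.\ $|D_{p,K}|\binom{K}{a}\binom{K-a}{b}k!/(k-2a-b)!$. Combined with \eqref{numberD2} and the analytic bounds above, the total logarithm is of the form
\begin{equation*}
p\log N+C_\beta(p-K)+\Psi_\beta(K,a,b,c)+C_\beta\log\log N,
\end{equation*}
where $\Psi_\beta$ collects the combinatorial entropy of the tracked-index assignment and the analytic weights. Maximizing $\Psi_\beta$ under $2a+b+c=k$ yields the announced $\frac{k}{2}\gamma_{\beta,k}+C_\beta k$: when $\beta\in[2,4)$ the optimum is $b=c=0$, $a=k/2$, where the analytic gain $\Xi_\beta^{k/2}$ competes with a combinatorial cost $(k/N)^{k/2}$ from the disjointness constraint (itself coming from summing over how pairs of tracked points can ``find each other''), producing the threshold $\lambda^{2-\beta}k/N\vee 1$; for $\beta\ge 4$ the contributions of types (b)/(c) no longer converge and one gets the secondary threshold $\lambda^{4-\beta}$ (resp.\ $|\log\lambda|^2$ at $\beta=4$).

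\emph{Main obstacle.} The difficult step is Step 4: identifying the correct combinatorial cost of forcing two tracked indices to be each other's nearest neighbor amidst $p-2 \sim N$ other points, and showing that this precisely balances the analytic gain $\Xi_\beta$ from a divergent fully-tracked 2-cycle. The boundary case $\beta=4$ is the most delicate because both the fully- and the half-singular integrals become logarithmically divergent simultaneously, which explains both the extra $|\log\lambda|^2$ term inside $\gamma_{4,k}$ and the freedom to include the $\delta=1$ correction.
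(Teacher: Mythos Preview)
Your overall strategy---decompose by nearest-neighbor graph, change variables, integrate, optimize---is sound, but you take a \emph{different} route from the paper and the difference matters. The paper does \emph{not} decompose over the full nearest-neighbor graph $\gamma_p$ of all $p$ points. Instead it introduces the \emph{restricted} directed graph $\gamma_p'$ containing only the edges $i\to\phi_1(i)$ with $i\in\{1,\dots,k\}$, and proves a structural dichotomy: each nontrivial connected component of $\gamma_p'$ is either (i) a 2-cycle entirely inside $\{1,\dots,k\}$, or (ii) a tree whose unique root lies in $\{k+1,\dots,p\}$. This reduces the combinatorics to just two integers $(k_0,k_1)$ (number of 2-cycles and number of roots), and the remaining $p-k-k_1$ untracked points are integrated over $\Lambda$ as a block. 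A further key device, absent from your outline, is a dyadic decomposition over the volume $V$ occupied by the tracked difference variables: the untracked points then live in volume $N-V$, giving a factor $(N-V)^{p-k-k_1}$ that competes with the Dirichlet factor $\bigl(V/(k-2k_0)\bigr)^{(1-\beta/4)(k-2k_0)}$ and is what actually produces the threshold $\lambda^{2-\beta}\tfrac{k}{N}\vee 1$.

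Your Step~4 is where the gap lies. The quantity you write as a bound, $|D_{p,K}|\binom{K}{a}\binom{K-a}{b}\tfrac{k!}{(k-2a-b)!}$, is not the count you need: for fixed labels $\{1,\dots,k\}$, the tuple $(a,b,c)$ is determined by $\gamma$, so the set of $\gamma\in D_{p,K}$ with a given $(a,b,c)$ is a \emph{subset} of $D_{p,K}$ and cannot exceed $|D_{p,K}|$; multiplying by extra binomial factors is at best vacuous and does not isolate the combinatorial cost you invoke. What you actually need is the number of labeled $\gamma\in D_{p,K}$ in which $a$ specified 2-cycles lie in $\{1,\dots,k\}^2$, which is closer to $|D_{p-2a,K-a}|$ times a choice factor for the $a$ pairs; working this ratio out is where the $k/N$ emerges, and it requires the same care (and essentially the same information) as the paper's $(k_0,k_1)$ optimization. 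Moreover, because you keep the full graph you must also simultaneously optimize over $K\sim p$ and over the placement of the $b,c$ indices among $O(p)$ components, which is substantially heavier than the paper's two-parameter problem. In short: the restricted-graph observation and the volume-exclusion trick are the two ideas that make the argument close, and both are missing from your plan.
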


\medskip

\begin{proof}
For shortcut, we let $r(z_i)= \frac 14 \min_{j\neq i} |z_i-z_j|$ and note that $\rr_1(z_i) = r(z_i) \wedge \lambda$ by \eqref{nn2}.

\noindent
{\bf Step 1: counting graphs.}
We define the directed graph $\gamma_p':=(V,E)$ where the set of vertices $V$ is given by $\{1,\ldots,p\}$ and there is an edge $\vec{ij}\in E$ when $i\in \{1,\ldots,k\}$ and $\phi_1(z_i)=z_j$. The graph $\gamma_p'$ is a subgraph of the nearest-neighbor graph $\gamma_p$. We claim that the nontrivial connected components (i.e.,~not reduced to a point) of $\gamma_p'$ are as follows: either they contain a $2$-cycle within $\{1,\ldots, k\}$ and no point in $ V\setminus \{1, \ldots, k\}$, 
 or they do not and they contain a unique point in $V\setminus \{1, \ldots, k\}$. In the latter case  the connected component is a tree and  its root is in $V\setminus \{1,\ldots,k\}$. 

Let us prove the above claim. We can observe that $V\setminus \{1,\ldots,k\}$ is exactly the set of vertices of $\gamma_p'$ with no outgoing edge. Let $U$ be a nontrivial connected component of $\gamma_p'$. Since $U$ is contained in a connected component of $\gamma_p$, in view of the structure of $\gamma_p$,  either it contains a $2$-cycle with trees attached to each vertex of the $2$-cycle or it is a tree. In the first case, there is no vertex with $0$ out-degree and therefore no vertex in $V\setminus \{1,\ldots,k\}$. In the second case, the root of $U$ is the only vertex of $U$ with $0$ out-degree and it is therefore the only vertex in $V\setminus \{1,\ldots,k\}$.

\begin{figure}[htbp]
    \centering
    \includegraphics[width=0.6\textwidth]{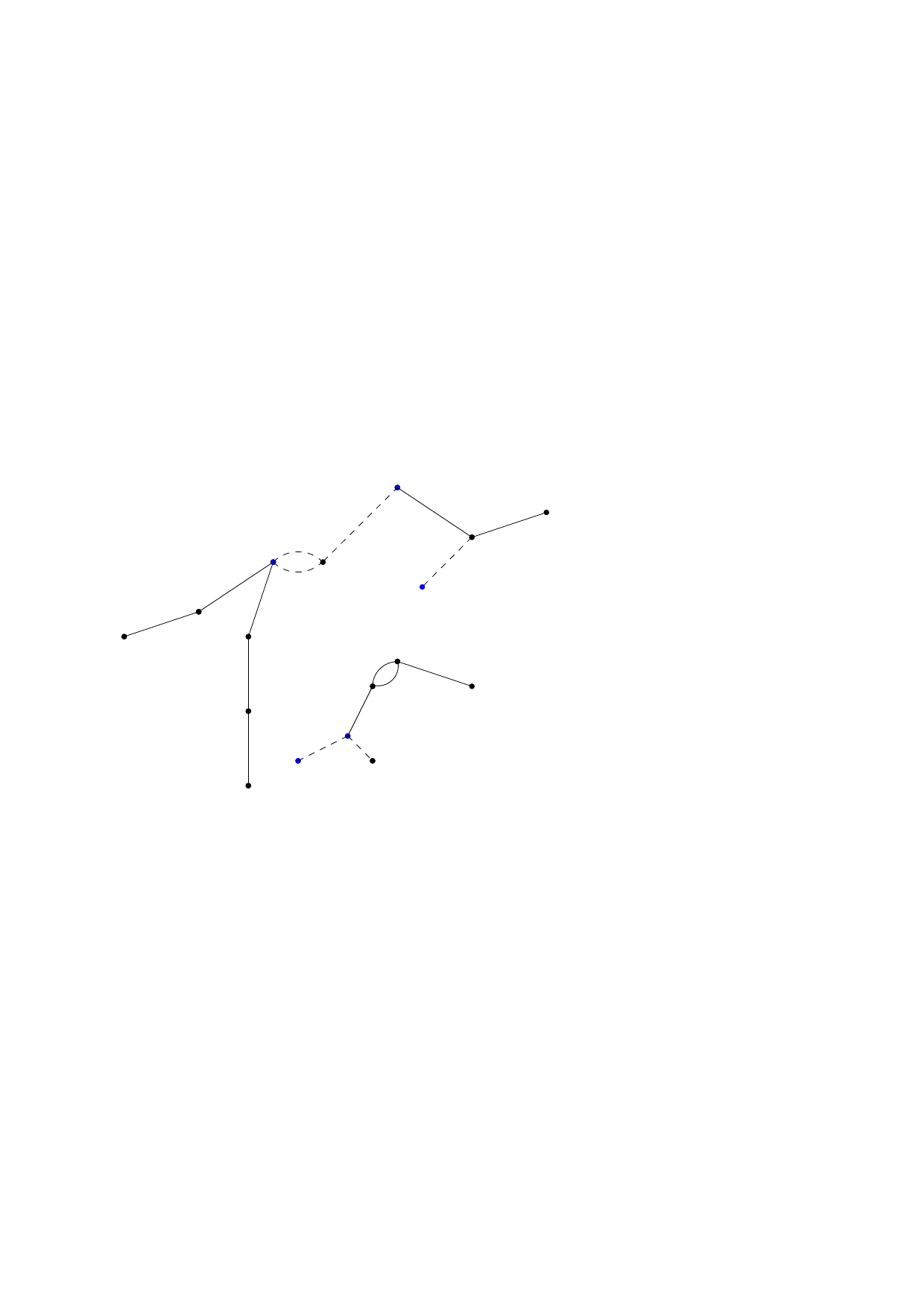}
    \caption{Nearest-neighbor graph and restricted nearest-neighbor graph}
    \label{fig:twotypes}
\end{figure}
Figure \ref{fig:twotypes} shows the points from $\{1, \ldots, k\}$ in black and the points from $V \setminus \{1, \ldots, k\}$ in blue. The solid lines represent the connected components of $\gamma_{p}'$, and the dashed lines represent the edges in $\gamma_p$ that are not part of $\gamma_p'$.

We denote by $D(k_0,k_1)$ the set of such graphs with $k_0$ $2$-cycles and $k_1$ points in $V\setminus \{1,\ldots,k\}$ that do not form a trivial connected component. Note that by the above observations, the number of connected components of any graph in $D(k_0,k_1)$ equals $k_0+k_1$, and $k_1\le k$. 

Let us enumerate the  graphs $\gamma$ in $D(k_0,k_1)$.  Let $U_1,\ldots,U_{k_0+k_1}$ be the nontrivial  connected components of such a $\gamma$.

$\bullet $ Since connected components are indistinguishable, we need to divide by $(k_0+k_1)!$

$\bullet$ Among these connected components we choose $k_0$ of them to be those that have  a $2$-cycle. There are $\binom{k_0+k_1}{k_0}$ choices. 

$\bullet$ For these $k_0$ connected components, we choose the 
$2$-cycles. There are
\begin{equation*}
    \frac{k!}{(k-2k_0)!2^{k_0}}
\end{equation*}
choices, which corresponds to the number of ways to form $k_0$ ordered pairs with $k$ points.

$\bullet$ On the $k_1$ trees, we need to fix the label of the root of the tree, which belongs to $\{k+1,\ldots,p\}$. There are fewer  than $p^{k_1}$ choices.

$\bullet$ To each of the remaining $k-2k_0$ points not in a $2$-cycle we need to assign a nearest-neighbor. Since there are $k_1$ points of $V\setminus \{1,\ldots,k\}$ in the graph, there are at most $k+k_1$ choices. Therefore, the total number of choices for these points is bounded by $(k+k_1)^{k-2k_0}\leq k^{k-2k_0}C^k$.

Combining these and using Stirling's formula, we obtain the upper bound 
\begin{multline}\label{eq:Dpk}
   \log |D(k_0,k_1)|\le \log \frac{ k! p^{k_1} k^{k-2k_0} C^k (k_0+k_1)!}{(k-2k_0)! 2^{k_0} (k_0+k_1)! k_0! k_1!}
      \\ \leq Ck -k_0\log k_0-k_1\log k_1 +k\log k-(k-2k_0)\log (k-2k_0)+ (k-2k_0)\log k+k_1\log p.
\end{multline}

\noindent
{\bf Step 2: integration.}
Let $S$ be the set of vertices incident to an edge in $\gamma_p'$. Each point in $S\cap (V\setminus \{1,\ldots,k\})$ is the root of a tree of cardinality  larger than $1$. Therefore, letting $k_1$ be the cardinality of $S\cap (V\setminus \{1,\ldots,k\})$ and $k_0$ be the number of $2$-cycles, we have the inequality $k_1+2k_0\leq k$.

We may therefore write
\begin{align}\label{eq:sumk0k1}
  &   \int_{\Lambda^p}\prod_{i=1}^k  \frac{1}{\rr_1(z_i)^{\frac{\beta}{2}}}\prod_{i=1}^k \Bigr(1+\log \frac{\rr_1(z_i)}{\lambda}\Bigr)^{\delta}\prod_{i=1}^k \Bigr((1+\log \rr_1(z_i))\Bigr)^{\delta'}\dd z_1\ldots \dd z_p\\  \notag & =\sum_{k_1=0}^{k}\sum_{k_0, k_1+2k_0\leq k}   \sum_{\gamma\in D(k_0,k_1) } \\ \notag & \qquad 
  \int_{\gamma_p'=\gamma}\prod_{i=1}^k  \frac{1}{(r(z_i)\vee \lambda)^{\frac{\beta}{2}}}\prod_{i=1}^k \Bigr(1+\log \frac{r(z_i)\vee \lambda}{\lambda}\Bigr)^{\delta}\prod_{i=1}^k \Bigr((1+\log r(z_i)\vee \lambda)\Bigr)^{\delta'}\dd z_1\ldots \dd z_p\\  \notag &  
    \leq \sum_{k_1=0}^{k}\sum_{k_0, k_1+2k_0\leq k} |D(k_0,k_1)| \\ \notag &\qquad \max_{\gamma\in D(k_0,k_1) }\int_{\gamma_p'=\gamma}\prod_{i=1}^k  \frac{1}{(r(z_i)\vee \lambda)^{\frac{\beta}{2}}}\prod_{i=1}^k \Bigr(1+\log \frac{r(z_i)\vee \lambda}{\lambda}\Bigr)^{\delta}\prod_{i=1}^k \Bigr((1+\log r(z_i)\vee \lambda )\Bigr)^{\delta'}\dd z_1\ldots \dd z_p.
\end{align}

Let $\gamma\in D(k_0,k_1)$ and $m$ be the sum of the cardinalities of the nontrivial connected components of $\gamma$, for which we  recall that $m=k+k_1\le 2k$. We also recall that  there are $k_0+k_1$ nontrivial connected components.

$\bullet$ The case $\beta>4$. Then $\delta =\delta'=0$.  We integrate the $p-m$ variables corresponding to the trivial  connected components of $\gamma$,  and one reference variable for each nontrivial connected component of $\gamma$, hence $p-m+k_0+k_1=p-k+k_0$ variables in a box of volume $N$. By the Gunson-Panta change of variables \eqref{defphi}, we arrive at 
\begin{equation*}
 \int_{\gamma_p'=\gamma}\prod_{i=1}^k  \frac{1}{(r(z_i)\vee \lambda)^{\frac{\beta}{2}}} \dd z_1\ldots \dd z_p\leq C_\beta^k N^{p-k+k_0}\int \prod_{i=1}^{k_0}\frac{1}{(|u_i|\vee\lambda)^\beta }\prod_{i=k_0+1}^{k-k_0}\frac{1}{(|u_i|\vee\lambda)^{\frac{\beta}{2}}}\prod_{i=1}^{k-k_0}\dd u_i.
\end{equation*}
Since $\beta>4$, we deduce that 
\begin{equation}\label{eq:c1}
 \int_{\gamma_p'=\gamma}\prod_{i=1}^k  \frac{1}{(r(z_i)\vee \lambda)^{\frac{\beta}{2}}}\dd z_1\ldots \dd z_p \leq C_\beta^k N^{p-k+k_0}\lambda^{(2-\beta)k_0}\lambda^{(2-\frac{\beta}{2})(k-2k_0)}.
\end{equation}

$\bullet$ The case $\beta\in (2,4)$.  Then $\delta=\delta'=0$. We now split the phase space according to the volume occupied by the variables $u_i$ for $i=k_0+1,\ldots,k-k_0$. Let $L:=\lfloor\log_2(\frac{ 36 N}{\pi})\rfloor$. For each $l\in \{0,\ldots,L\}$, let $$\mc{A}_l:=\Bigr\{\frac\pi{4}\sum_{i=k_0+1}^{k-k_0}|z_i-z_{\phi_1(i)}|^2 \in [2^l,2^{l+1})\Bigr\}.$$ 
One can write 
\begin{equation*}
    \int_{\gamma_p'=\gamma}\prod_{i=1}^k  \frac{1}{(r(z_i)\vee \lambda)^{\frac{\beta}{2}}}\dd z_1\ldots \dd z_p=\sum_{l=0}^{L}\int_{\gamma_p'=\gamma}\prod_{i=1}^k  \frac{1}{(r(z_i)\vee \lambda)^{\frac{\beta}{2}}}\indic_{\mc{A}_l}\dd z_1\ldots \dd z_p.
\end{equation*}
Fix $l\in \{0,\ldots,L\}$ and let $V=2^l$. Integrating $p-m$ variables in a volume of size smaller than $N-V$ (again by disjointness of the balls $B(z_i, \frac{|z_i-z_{\phi_1(i)}|}{2}))$, one variable for each nontrivial connected component of $\gamma$ in a volume of size $N$ and integrating the $k_0$ $2$-cycles distances, we obtain 
\begin{multline*}
  \int_{\gamma_p'=\gamma}\prod_{i=1}^k  \frac{1}{(r(z_i)\vee \lambda)^{\frac{\beta}{2}}}\indic_{\mc{A}_l}\dd z_1\ldots \dd z_p\\ \leq  C_\beta^k\lambda^{(2-\beta)k_0}(N-V)^{p-m}N^{k_0+k_1}\int_{\frac\pi4\sum_{i=1}^{k-2k_0} |u_i|^2\leq 2V } \prod_{i=1}^{k-2k_0}\frac{1}{(|u_i|\vee\lambda)^{\frac{\beta}{2}}}\dd u_i.
\end{multline*}
One can now insert the estimate on Dirichlet integrals  from Lemma \ref{lemdirichlet} which, arguing as in the proof of Lemma \ref{lemma:nearest neigh}, Step 4,  yields,
\begin{equation*}
  \int_{\frac\pi4\sum_{i=1}^{k-2k_0} |u_i|^2\leq 2V } \prod_{i=1}^{k-2k_0}\frac{1}{(|u_i|\vee\lambda)^{\frac{\beta}{2}}}\dd u_i\leq C_\beta^k \Bigr(\frac{V}{k-2k_0}\Bigr)^{(1-\frac{\beta}{4})(k-2k_0)}.  
\end{equation*}
Therefore
\begin{multline}\label{eq:boundV}
\log \int_{\gamma_p'=\gamma}\prod_{i=1}^k  \frac{1}{(r(z_i)\vee \lambda)^{\frac{\beta}{2}}}\indic_{\mc{A}_l}\dd z_1\ldots \dd z_p \leq C_\beta k+(2-\beta)k_0\log\lambda+(k_0+k_1)\log N\\ +\sup_{V\in [0,N]}\Bigr\{\log \Bigr((N-V)^{p-m}\Bigr(\frac{V}{k-2k_0}\Bigr)^{(1-\frac{\beta}{4})(k-2k_0)}\Bigr)\Bigr\} +O_\beta(\log\log N).
\end{multline}
One may write
\begin{multline*}
    \log \Bigr( (N-V)^{p-m}\Bigr(\frac{V}{k-2k_0}\Bigr)^{(1-\frac{\beta}{4})(k-2k_0)}\Bigr)=(p-m)\log N+(k-2k_0)(1-\frac{\beta}{4})\log \Bigr(\frac{N}{k-2k_0}\Bigr)\\+(p-m)\log \Bigr(1-\frac{V}{N}\Bigr)+(k-2k_0)(1-\frac{\beta}{4})\log\Bigr(\frac{V}{N}\Bigr).
\end{multline*}

Given $a, b>0$, the map $x\in [0,1]\mapsto a \log(1-x)+b \log x$ attains its maximum at $x=\frac{b}{b+a}$, hence, taking $\frac{V}{N}=\frac{(k-2k_0)(1-\frac{\beta}{4})}{p-m+(k-2k_0)(1-\frac{\beta}{4})}$, we have
\begin{multline} \label{NV}
\sup_{V}\Bigr\{(N-V)^{(p-m) }\Bigr(\frac{V}{k-2k_0}\Bigr)^{(k-2k_0)(1-\frac{\beta}{4}) }\Bigr\}\\ \leq N^{p-m}\Bigr(\frac{N}{p-m+(k-2k_0)(1-\frac{\beta}{4})}\Bigr)^{(k-2k_0)(1-\frac{\beta}{4})}e^{C_\beta(k-2k_0)}.
\end{multline}
Therefore, inserting this into \eqref{eq:boundV}, we find that
\begin{multline*}
  \int_{\gamma_p'=\gamma}\prod_{i=1}^k  \frac{1}{(r(z_i)\vee \lambda)^{\frac{\beta}{2}}}\indic_{\mc{A}_l}\dd z_1\ldots \dd z_p
 \\ \leq C_\beta^k\lambda^{(2-\beta)k_0}
  N^{p-m+k_0+k_1}e^{O_\beta(\log \log N)}\Bigr(\frac{N}{p-m+(k-2k_0)(1-\frac{\beta}{4})}\Bigr)^{(k-2k_0)(1-\frac
{\beta}{4})}.
\end{multline*}
Notice that when $k\leq \frac{p}{4}$, then $p-m\geq p-2k\geq \frac{p}{2}\ge \frac{N}{4}$ by assumption. When $k>\frac{p}{4} \ge \frac{N}{8}$, we use $p-m+(k-2k_0)(1-\frac{\beta}{4})\geq (k-2k_0)(1-\frac{\beta}{4}) $ which gives 
\begin{multline*}
    \sup_{k_0, 0\leq k_0\leq \frac{k}{2}}\Bigr( \frac{N}{p-m+(k-2k_0)(1-\frac{\beta}{4})}\Bigr)^{ (k-2k_0)(1-\frac
{\beta}{4})} \\
\le \exp\( N\sup_{ k_0, 0\leq k_0\leq \frac{k}{2}} \frac1N( k-2k_0)(1-\frac
{\beta}{4}) \left|\log \( \frac1N(k-2k_0)(1-\frac
{\beta}{4})\) \right| \) \leq  C_\beta^N \le C_\beta^k.
\end{multline*}
Therefore we conclude that
\begin{equation}\label{eq:c2}
   \int_{\gamma_p'=\gamma}\prod_{i=1}^k  \frac{1}{(r(z_i)\vee \lambda)^{\frac{\beta}{2}}}\indic_{\mc{A}_l}\dd z_1\ldots \dd z_p
\leq C_\beta^k\lambda^{(2-\beta)k_0}
  N^{p-m+k_0+k_1}e^{O_\beta(\log \log N)}  
\end{equation}

$\bullet$ The case $\beta\in (0,2)$  (and again $\delta = \delta '=0$).
In this case we split the phase space according to the volume occupied by the $k-k_0$ variables. Let $L:=\lfloor \log_2(\frac{36 N}{\pi})\rfloor$ and for each $l=0,\ldots,L$, let $$\mc{A}_l:=\Bigr\{\frac\pi4\sum_{i=1}^{k-k_0}|z_i-z_{\phi_1(i)}|^2\in [2^l,2^{l+1})\Bigr\}.$$ One may write
\begin{equation*}
    \int_{\gamma_p'=\gamma}\prod_{i=1}^k  \frac{1}{(r(z_i)\vee \lambda)^{\frac{\beta}{2}}}\dd z_1\ldots \dd z_p=\sum_{l=0}^{L}\int_{\gamma_p'=\gamma}\prod_{i=1}^k  \frac{1}{(r(z_i)\vee \lambda)^{\frac{\beta}{2}}}\indic_{\mc{A}_l}\dd z_1\ldots \dd z_p.
\end{equation*}
Fix $l\in \{0,\ldots,L\}$ and let $V=2^l$. Integrating out the $p-m$ remaining variables in a volume of size $N-V$ and $k_0$ variables in a volume of size $V_1$, $k-2k_0$ variables in a volume of size $V_2$, and inserting the estimate on Dirichlet integrals from Lemma \ref{lemdirichlet}, we get 
\begin{multline}\label{eq:V1V2}
\int_{\gamma_p'=\gamma}\prod_{i=1}^k  \frac{1}{(r(z_i)\vee \lambda)^{\frac{\beta}{2}}}\indic_{\mc{A}_l}\dd z_1\ldots \dd z_p\\ \leq C_\beta^k N^{k_0+k_1}\sup_{V_1,V_2, V_1+V_2=V }\Bigr\{(N-V)^{p-m}\Bigr(\frac{V_1}{k_0}\Bigr)^{(1-\frac{\beta}{2})k_0}\Bigr(\frac{V_2}{k-2k_0}\Bigr)^{(1-\frac{\beta}{4})(k-2k_0)}\Bigr\}.
\end{multline}
Given $a,b_1,b_2>0$, the map
\begin{equation*}
(x_1,x_2)\in [0,1]^2 \mapsto a\log(1-x_1-x_2)+b_1\log x_1+b_2\log x_2
\end{equation*}
attains its maximum at $x_1=\frac{b_1}{a+b_1+b_2}$, $x_2=\frac{b_2}{a+b_1+b_2}$. 
The optimal parameters in \eqref{eq:V1V2} are therefore
\begin{equation*}
    \frac{V_1}{N}=\frac{(1-\frac{\beta}{2})k_0}{M},\quad \frac{V_2}{N}=\frac{(1-\frac{\beta}{4})(k-2k_0)}{M},\quad  \text{where $M:=\Bigr(1-\frac{\beta}{4}\Bigr)(k-2k_0)+\Bigr(1-\frac{\beta}{2}\Bigr)k_0+p-m$}.
\end{equation*}
Hence inserting this into \eqref{eq:V1V2}, we find
\begin{multline*}
 \sup_{V_1,V_2\geq 0}\Bigr\{(N-V_1-V_2)^{p-m}\Bigr(\frac{V_1}{k-2k_0}\Bigr)^{(1-\frac{\beta}{4})(k-2k_0)}\Bigr(\frac{V_2}{k_0}\Bigr)^{(1-\frac{\beta}{2})k_0}\Bigr\}\\
 \leq C_\beta^k N^{p-m}\Bigr(\frac{N}{M}\Bigr)^{(1-\frac{\beta}{4})(k-2k_0)+(1-\frac{\beta}{2})k_0}=C_\beta^k N^{p-m}\Bigr(\frac{N}{M}\Bigr)^{(1-\frac{\beta}{4})k-k_0}.
\end{multline*}
Since $m= k+k_1$, this gives
\begin{multline*}
\int_{\gamma_p'=\gamma}\prod_{i=1}^k  \frac{1}{(r(z_i)\vee \lambda)^{\frac{\beta}{2}}}\dd z_1\ldots \dd z_p\leq C_\beta^kN^{p-k+k_0} \Bigr(\frac{N}{p-m+k(1-\frac{\beta}{4})-k_0}\Bigr)^{k(1-\frac{\beta}{4})-k_0} \\ \times e^{O_\beta(\log \log N)}.
\end{multline*}
Therefore arguing as for the case $\beta\in (2,4)$ (distinguishing the cases $k\ge p/4$ and $k\le p/4$) we conclude that
\begin{equation}\label{eq:c3}
    \int_{\gamma_p'=\gamma}\prod_{i=1}^k  \frac{1}{(r(z_i)\vee \lambda)^{\frac{\beta}{2}}}\dd z_1\ldots \dd z_p\leq C_\beta^kN^{p-k+k_0} e^{O_\beta(\log \log N)}.
\end{equation}

$\bullet$ The case $\beta=2$  (and  $\delta = \delta '=0$). Inserting both \eqref{eq:estimate Dirichlet} and  \eqref{eq:Dir log}  into \eqref{eq:sumk0k1} yields
\begin{multline}\label{3.34}
\int_{\gamma_p'=\gamma}\prod_{i=1}^k \frac{1}{(r(z_i)\vee \lambda)^{\frac{\beta}{2}}}\indic_{\mc{A}_l}\dd z_1\ldots \dd z_p\\ \leq C_\beta^k N^{k_0+k_1}(N-V)^{p-m} \sup_{V_1+V_2=V, V_1, V_2\geq 0}\Bigr\{\Bigr|\log \frac{V_1}{\lambda k_0}\Bigr|^{k_0} \Bigr(\frac{V_2}{k-2k_0}\Bigr)^{(1-\frac{\beta}{4})(k-2k_0)}\Bigr\}.
\end{multline}
Summing over $l$,  we obtain
\begin{multline}\label{eq:supV1V2 1}
    \int_{\gamma_p'=\gamma}\prod_{i=1}^k \frac{1}{(r(z_i)\vee \lambda)^{\frac{\beta}{2}}}\dd z_1\ldots \dd z_p\\ \\ \leq C_\beta^k N^{k_0+k_1} \max \Bigg( \sup_{V_1, V_2\geq 0}\Bigr\{(N-V_1-V_2)^{p-m} 2 |\log \lambda|^{k_0} \Bigr(\frac{V_2}{k-2k_0}\Bigr)^{(1-\frac{\beta}{4})(k-2k_0)}\Bigr\}e^{O_\beta(\log \log N)};\\  \sup_{V_1, V_2\geq 0}\Bigr\{(N-V_1-V_2)^{p-m} \(\log \frac{V_1}{k_0}\)^{k_0} \Bigr(\frac{V_2}{k-2k_0}\Bigr)^{(1-\frac{\beta}{4})(k-2k_0)}\Bigr\}e^{O_\beta(\log \log N)}\Bigg)
\end{multline} 
Let us optimize the first function in the max, this leads to optimizing
\begin{multline*}(p-m)\log \Bigr(1-\frac{V_1+V_2}{N}\Bigr)+\Bigr(1-\frac{\beta}{4}\Bigr)(k-2k_0)\log \frac{V_2}{N}\\
\le(p-m)\log \Bigr(1-\frac{V_2}{N}\Bigr)+\Bigr(1-\frac{\beta}{4}\Bigr)(k-2k_0)\log \frac{V_2}{N}.\end{multline*} Optimizing as for \eqref{NV} and the equations that follow, the sup over $V_2$ is 
$\le C k$, hence
\begin{equation}\label{3.36} \sup_{V_1, V_2\geq 0}\Bigr\{(N-V_1-V_2)^{p-m} 2 |\log \lambda|^{k_0} \Bigr(\frac{V_2}{k-2k_0}\Bigr)^{(1-\frac{\beta}{4})(k-2k_0)}\Bigr\}e^{O_\beta(\log \log N)}\\ \leq  C^k N^{p-m}|\log \lambda|^{k_0} e^{O_\beta(\log \log N)}.\end{equation}

We next turn to the  other sup in the max. This leads us to maximizing a function of the form 
\begin{equation*}
(x_1,x_2)\in [0,1]^2\mapsto a\log(1-x_1-x_2)+b_1\log\log \frac{x_1}{b_1}+b_2\log x_2,
\end{equation*}
with $a, b_1, b_2>0$.
The maximum is achieved at $x_1, x_2$ satisfying
\begin{equation*}
\begin{cases}
\frac{ax_1}{b_1}\log \frac{x_1}{b_1} = 1-x_1-x_2\\
\frac{ax_2}{b_2}=1-x_1-x_2= \frac{ax_1}{b_1}\log \frac{x_1}{b_1}.
   \end{cases}
\end{equation*}

Hence, applying this to $a=\frac{p-m}{N}$, $b_1=\frac{k_0}{N}$, $b_2=(1-\frac{\beta}{4})\frac{k-2k_0}{N}= \frac{k}{2N}- \frac{k_0}{N}$ (since $\beta=2$), the optimal parameters $V_1, V_2$ in \eqref{eq:supV1V2 1} satisfy
\begin{equation}\label{optV1v2}
  \begin{cases} \frac{V_2}{(1-\frac{\beta}{4})(k-2k_0)}=\frac{V_1}{k_0}\log \frac{V_1}{k_0}\\
\frac{V_2}{(1-\frac{\beta}{4})(k-2k_0)}\leq \frac{N}{p-m}.
  \end{cases}
\end{equation}
If   $V_1\le 2 k_0$, then it follows that $V_2 \le C (k-2k_0)$ as well, and we immediately get that 
\begin{equation}\label{3.37}\sup_{V_1, V_2\geq 0}\Bigr\{(N-V_1-V_2)^{p-m} \(\log \frac{V_1}{k_0}\)^{k_0} \Bigr(\frac{V_2}{k-2k_0}\Bigr)^{(1-\frac{\beta}{4})(k-2k_0)}\Bigr\}e^{O_\beta(\log \log N)}\Bigg)\le C^k N^{p-m} e^{O_\beta(\log \log N)}\end{equation}
as desired.
If on the other hand $V_1 \ge 2k_0$, then \eqref{optV1v2} also yields that 
\begin{equation*}
    \frac{V_1}{k_0}\leq \frac{N}{(p-m)\log \frac{V_1}{k_0}} \le \frac{N}{(p-m)\log 2}.
\end{equation*}
If $k \le p/4$, then $p-m\ge p-2k\ge p/2\ge N/4$ and thus $V_1 \le C k_0$,  and we are back to a similar situation that gives the same bound \eqref{3.37}. If on the other hand $k \ge p/4$, then we use $V_1\le  N\le Ck$ and we conclude \eqref{3.37} as well. 
Combining with \eqref{3.36}, we have obtained from \eqref{eq:supV1V2 1} that 
\begin{equation}\label{eq:c4}
 \int_{\gamma_p'=\gamma}\prod_{i=1}^k  \frac{1}{(r(z_i)\vee \lambda)^{\frac{\beta}{2}}}\dd z_1\ldots \dd z_p
  \leq C_\beta^kN^{p-k+k_0}|\log \lambda|^{k_0} e^{O_\beta(\log \log N)}.
\end{equation}

$\bullet$ The case $\beta=4$ and $\delta=0,1$. Using \eqref{eq:Dir log}, we find
\begin{multline}\label{3.7}
\int_{\gamma_p'=\gamma}\prod_{i=1}^k \frac{1}{(r(z_i)\vee \lambda)^{\frac{\beta}{2}}} \Bigr(1+\log \frac{r(z_i)\vee \lambda}{\lambda}\Bigr)^\delta \indic_{\mc{A}_l}\dd z_1\ldots \dd z_p \\ \leq C_\beta^k(N-V)^{p-m}N^{k_0+k_1}\lambda^{(2-\beta)k_0}\Bigr(|\log \lambda|+\log \frac{V}{k-2k_0}\Bigr)^{k-2k_0} 
\\
\le C_\beta^k N^{k_0+k_1}\lambda^{(2-\beta)k_0}\max \Bigr( (N^{p-m}|\log \lambda|^{k-2k_0}, \sup_{V\le N} \Bigr\{(N-V)^{p-m} \Bigr(\log \frac{V}{k-2k_0}\Bigr)^{k-2k_0} \Bigr\}\Bigr).
\end{multline}

Next, we can write 
\begin{equation}\label{eq:maxiV}
   \log\Bigr( (N-V)^{p-m}\log \Bigr(\frac{V}{k-2k_0}\Bigr)^{k-2k_0}\Bigr)=(p-m)\log (N-V)+(k-2k_0)\log \log \Bigr(\frac{V}{N}\frac{N}{k-2k_0}\Bigr).
\end{equation}
Notice that given $a,b>0$, the map
\begin{equation*}
    x\in [b,1]\mapsto a\log(1-x)+b\log \log \frac{x}{b}
\end{equation*}
attains its maximum at $x$ satisfying
\begin{equation*}
    \frac{x}{b}\log \frac{x}{b}=\frac{1-x}{a}\leq \frac{1}{a}.
\end{equation*}Then, applying the above to $a=\frac{p-m}{N}$ and $b=\frac{k-2k_0}{N}$, we find that the optimal $V$ satisfies
\begin{equation*}
    \frac{V}{k-2k_0}\leq \frac{N}{(p-m)\log \frac{V}{k-2k_0}}
\end{equation*}

If  in addition $V\geq 2(k-2k_0)$, then it follows that $ V \le C \frac{N}{p-m}(k-2k_0) .$ Arguing as in the case $\beta =2$, we may reduce to the situation where $k \le p/4$ (otherwise $V \le N$ suffices) and then $ p-m\ge N/4 $ and $V \le C (k-2k_0)$.  
We may thus assume that  $V \le C(k-2k_0)$, and conclude that in all cases 
$$\sup_{V\le N} \Bigr\{(N-V)^{p-m} \Bigr(\log \frac{V}{k-2k_0}\Bigr)^{k-2k_0} \Bigr\}\le C^k N^{p-m}.$$
Inserting into  \eqref{3.7} we obtain
\begin{multline}\label{eq:c5}
\int_{\gamma_p'=\gamma}\prod_{i=1}^k \frac{1}{(r(z_i)\vee \lambda)^{\frac{\beta}{2}}} \Bigr(1+\log \frac{r(z_i)\vee \lambda}{\lambda}\Bigr)^\delta \indic_{\mc{A}_l}\dd z_1\ldots \dd z_p \\ \le C_\beta^k N^{p-k+k_0}\lambda^{(2-\beta)k_0}|\log \lambda|^{k-2k_0}.\end{multline}

$\bullet$ The case $\beta=0$ and $\delta'=1$. Assume that $V\geq 2k$. Using \eqref{eq:Dir log 2}, we obtain
\begin{multline*}
  \int_{\gamma_p'=\gamma}\prod_{i=1}^k (1+|\log r(z_i)\vee \lambda|)\indic_{\mc{A}_l} \dd z_1\ldots \dd z_p \leq C_\beta^k(N-V)^{p-m}N^{k_0+k_1}\Bigr(\frac{V}{k}\Bigr(\log \frac{V}{k}\Bigr)^2\Bigr)^{k_0}\Bigr(\frac{V}{k}\log \frac{V}{k}\Bigr)^{k-2k_0}\\
  =C_\beta^k (N-V)^{p-m}N^{k_0+k_1}\Bigr(\frac{V}{k}\Bigr)^{k-k_0}\Bigr(\log \frac{V}{k}\Bigr)^k.
\end{multline*}
For $V\geq 2$, one can therefore write
\begin{equation*}
   \int_{\gamma_p'=\gamma}\prod_{i=1}^k (1+|\log r(z_i)\vee \lambda|)\indic_{\mc{A}_l} \dd z_1\ldots \dd z_p \leq C_\beta^k N^{k_0+k_1}(N-V)^{p-m} \Bigr(\frac{V}{k}\Bigr)^{2k}.
\end{equation*}
The right-hand side of the last display is optimized for 
\begin{equation*}
    V_0=N\frac{2k}{p-m+2k} 
\end{equation*}
Arguing as in the case $\beta\in (2,4)$ by distinguishing $ k\le p/4$ and $k>p/ \ge N/8$, we get that $V_0\leq C_\beta k$ which yields
\begin{equation*}
     \int_{\gamma_p'=\gamma}\prod_{i=1}^k (1+|\log r(z_i)\vee \lambda |)\indic_{\mc{A}_l} \dd z_1\ldots \dd z_p\leq C_\beta^k N^{p-m+k_0+k_1}=C_\beta^k N^{p-k+k_0}.
\end{equation*}
When $V\leq 2k$, we deduce the same bound. We conclude in this case that
\begin{equation}\label{eq:c6}
     \int_{\gamma_p'=\gamma}\prod_{i=1}^k (1+|\log r(z_i)\vee \lambda|)\indic_{\mc{A}_l} \dd z_1\ldots \dd z_p\leq C_\beta N^{p-m+k_0+k_1}=C_\beta N^{p-k+k_0}.
\end{equation}

Summarizing the six cases \eqref{eq:c1}, \eqref{eq:c2}, \eqref{eq:c3}, \eqref{eq:c4}, \eqref{eq:c5}, \eqref{eq:c6} we thus obtain
\begin{equation}\label{eq:sumfive}
     \int_{\gamma_p'=\gamma}\prod_{i=1}^k \frac{1}{(r(z_i)\vee \lambda)^{\frac{\beta}{2}}}\dd z_1\ldots \dd z_p\leq C_\beta^kN^{p-k+k_0}  e^{k\psi_{\beta}(\frac{k_0}{k})}e^{O_\beta(\log \log N)},
\end{equation}
where 
\begin{equation*}
    \psi_\beta:x\in \Bigr[0,\frac{1}{2}\Bigr]\mapsto\begin{cases}
       0 & \text{if $\beta< 2$}\\
      x  \log|\log\lambda| & \text{if $\beta=2$}\\
(2-\beta)x  \log \lambda & \text{if $\beta\in (2,4)$ }\\
        -2x(\log \lambda) +(1-2x)\log|\log\lambda|& \text{if $\beta=4$}\\
            (2-\beta)x  (\log \lambda)+(2-\frac{\beta}{2})(1-2x)(\log \lambda) & \text{if $\beta>4$}.
    \end{cases}
\end{equation*}

\noindent
{\bf Step 3: optimization.}

Inserting \eqref{eq:Dpk} and \eqref{eq:sumfive} into \eqref{eq:sumk0k1}, we obtain 
\begin{align*}
&   \log \int_{\Lambda^p}\prod_{i=1}^k  \frac{1}{(r(z_i)\vee \lambda)^{\frac{\beta}{2}}} 
   \prod_{i=1}^k \Bigr(1+\log \frac{r(z_i)\vee \lambda}{\lambda}\Bigr)^{\delta}\prod_{i=1}^k \Bigr((1+\log r(z_i)\vee \lambda )\Bigr)^{\delta'}
      \dd z_1\ldots \dd z_p\\
    &  \leq C_\beta k + k \log k + (p-k) \log N +O_\beta(\log \log N)\\ &+\max_{k_0,k_1,k_1+2k_0\leq k}\Bigr\{ -k_0\log k_0 -k_1\log k_1-(k-2k_0)\log \frac{k-2k_0}{k}+k_1\log p+k_0\log N+k\psi_\beta\Bigr(\frac{k_0}{k}\Bigr)\Bigr\}.
\end{align*}
Therefore, letting 
\begin{align*}
    \varphi_\beta:(x_0,x_1):= x_1\log\frac{p}{kx_1}+x_0\log \frac{N}{kx_0}-(1-2x_0)\log (1-2x_0)+\psi_\beta(x_0),
\end{align*}
we have
\begin{multline}\label{kogconc}
    \log \int_{\Lambda^p}\prod_{i=1}^k  \frac{1}{(r(z_i)\vee \lambda)^{\frac{\beta}{2}}} \prod_{i=1}^k \Bigr(1+\log \frac{r(z_i)\vee \lambda}{\lambda}\Bigr)^{\delta}\prod_{i=1}^k \Bigr((1+\log r(z_i)\vee \lambda )\Bigr)^{\delta'}
\dd z_1\ldots \dd z_p\\ \leq C_\beta k + k \log k + (p-k) \log N +O_\beta(\log \log N)+k\sup_{k_0,k_1, 2k_0+k_1\leq k} \varphi_\beta\Bigr(\frac{k_0}{k},\frac{k_1}{k}\Bigr).
\end{multline}
We now need to optimize the function $\varphi_\beta$ over $A:= \{ (x_0, x_1) \in [0,1]^2, 2x_0+x_1\le 1\}$. Note that there exists $C>0$ such that
\begin{equation*}
    \sup_{(x_0,x_1)\in A}\(-x_1\log x_1-x_0\log x_0-(1-2x_0)\log(1-2x_0)\)\leq C.
\end{equation*}
Hence, using that $p\le N$, we find
\begin{multline*}
    \sup_{(x_0,x_1)\in A} \varphi_\beta(x_0,x_1)\leq C +\sup_{(x_0,x_1)\in A}\( x_1\log \frac{N}{k}+x_0\log\frac{N}{k}+\psi_\beta(x_0)\) \\\leq C+ \sup_{x_0\in [0,\frac{1}{2}]}\((1-2x_0)\log \frac{N}{k}+x_0\log\frac{N}{k}+\psi_\beta(x_0)\).
  \end{multline*}
Since the  function  on the right-hand side is affine in $x_0$, its maximum is either attained at $x_0=0$, in which case it equals 
\begin{equation*}
\begin{cases}
      \log \frac{N}{k}  & \text{if $\beta< 4$}\\
      \log \frac{N}{k}+  \log|\log\lambda|& \text{if $\beta=4$}\\
         \log \frac{N}{k}+  (2-\frac{\beta}{2}) \log \lambda & \text{if $\beta>4$}.
    \end{cases}
    \end{equation*}
  or 
  at $x_0=\frac{1}{2}$, in which case it equals 
   \begin{equation*}
\begin{cases} 
\hal \log \frac{N}k& \text{if $\beta\le 2$}\\
\hal \log \frac{N}k+ \frac12 \log| \log \lambda|&  \text{if $ \beta =2$}\\
\hal \log \frac{N}k+(1 -\hal \beta) \log \lambda & \text{if $\beta\in (2,4)$ }\\
     \hal   \log \frac{N}k - (\log \lambda )& \text{if $\beta=4$}\\
     \hal     \log \frac{N}k+   (1-\hal\beta) (\log \lambda)& \text{if $\beta>4$}.
    \end{cases}
\end{equation*}
Inserting into \eqref{kogconc} gives
\begin{multline*}  \log \int_{\Lambda^p}\prod_{i=1}^k  \frac{1}{(r(z_i)\vee \lambda)^{\frac{\beta}{2}}}
 \prod_{i=1}^k \Bigr(1+\log \frac{r(z_i)\vee \lambda}{\lambda}\Bigr)^{\delta}\prod_{i=1}^k \Bigr((1+\log r(z_i)\vee \lambda )\Bigr)^{\delta'}
\dd z_1\ldots \dd z_p\\ \leq C_\beta k + p\log N + k \log k -k \log N +O_\beta(\log \log N)\\+k
\begin{cases} \max(\log \frac{N}k, \hal \log \frac{N}k)& \text{if $\beta< 2$}\\
\max(\log \frac{N}k,\hal  \log \frac{N}k+ \frac12 \log| \log \lambda|) & \text{if $\beta =2$}\\
\max(\log \frac{N}{k} ,\hal\log \frac{N}k+(1 -\frac \beta{2}) \log \lambda) & \text{if $\beta\in (2,4)$ }\\
      \max(\log \frac{N}k + \log |\log \lambda|, \hal  \log \frac{N}k - (\log \lambda ))& \text{if $\beta=4$}\\
       \max(\log \frac{N}k +  (2-\frac{\beta}{2}) \log \lambda    , \hal   \log \frac{N}k+   (1-\frac\beta{2}) \log \lambda)& \text{if $\beta>4$}
    \end{cases}\\
    \le C_\beta k + p\log N +O_\beta(\log \log N)+k
\begin{cases}0& \text{if $\beta< 2$}\\
\max(0,  -\hal \log \frac{N}k+ \frac12 \log| \log \lambda|) & \text{if $\beta =2$}\\
\max(0, 
 -\hal \log \frac{N}k+(1 -\frac \beta{2}) \log \lambda) & \text{if $\beta\in (2,4)$ }\\
      \max( \log |\log \lambda|,   -\hal \log \frac{N}k- (\log \lambda ))& \text{if $\beta=4$}\\
       \max(  (2-\frac{\beta}{2}) \log \lambda    ,   -\hal \log \frac{N}k+    (1-\frac\beta{2}) \log \lambda)& \text{if $\beta>4$},
    \end{cases}
    \end{multline*}hence  the desired formula. 
\end{proof}

We now prove a variant of Proposition \ref{prop:auxk} allowing two species of points appearing with distinct exponents in the integral.

\begin{prop}\label{prop:two spiecies}Let $\beta >4$. Let $k,k'$ be such that $k+ k'\leq p\leq N$ with $p\geq \frac{N}{2}$. Consider
\begin{equation*}
    F_{k,k'}(z_1,\ldots,z_p):=\frac{1}{2}\sum_{i=k+1}^{k+k'} \log \rr_1(z_i)
\end{equation*}
and recall  the definition of $F_k$ from \eqref{defFaux}. Then, for $\delta\in \{0,1\}$, we have 
\begin{multline}\label{eq:two species}
    \log \int_{\Lambda^p}\exp(-4F_k(Z_p)-\beta F_{k,k'}(Z_p))\prod_{i=1}^k \(1+\log \frac{r(z_i)\vee \lambda}{\lambda}\)^\delta \dd z_1\ldots \dd z_p \\ \leq p\log N+C_\beta(k+k')+\frac{k}{2}\gamma_{4,k}+\frac{k'}{2}\gamma_{\beta,k'}+C_\beta\log \log N,
\end{multline}
where $\gamma$ is as in \eqref{def:gbetak}.
\end{prop}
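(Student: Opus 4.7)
The plan is to adapt the three-step proof of Proposition~\ref{prop:auxk} to two species simultaneously, the main new ingredient being a careful treatment of ``mixed" 2-cycles that straddle the two groups $\{1,\ldots,k\}$ and $\{k+1,\ldots,k+k'\}$.

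First, I would recycle Step~1 of the proof of Proposition~\ref{prop:auxk} almost verbatim: define the directed subgraph $\gamma_p'$ of $\gamma_p$ whose edges are those $\vec{ij}$ with $i\in\{1,\ldots,k+k'\}$ and $j=\phi_1(i)$. As before, every nontrivial component is either a tree rooted in $V\setminus\{1,\ldots,k+k'\}$, or contains a unique $2$-cycle lying in $\{1,\ldots,k+k'\}$. The new bookkeeping is to split the 2-cycles into three types: $k_0^{(1)}$ cycles entirely inside $\{1,\ldots,k\}$, $k_0^{(2)}$ cycles entirely inside $\{k+1,\ldots,k+k'\}$, and $k_0^{(12)}$ mixed cycles. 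The enumeration estimate \eqref{eq:Dpk} is adapted just by inserting two binomial factors of the form $\binom{k}{2k_0^{(1)}+k_0^{(12)}}\binom{k'}{2k_0^{(2)}+k_0^{(12)}}$; these are absorbed into $C_\beta^{k+k'}$ up to entropy terms that will appear naturally at the optimization stage.

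Next, for a fixed graph type I perform the Gunson–Panta change of variables \eqref{defphi} on each nontrivial component. In a $(1,1)$-cycle the single relative coordinate $u$ carries the weight $(|u|\vee\lambda)^{-4}$, giving an integral $\lesssim \lambda^{-2}$; a $(2,2)$-cycle yields $\lesssim \lambda^{2-\beta}$; a mixed cycle yields $\lesssim\lambda^{-\beta/2}$. The tree edges split into two families: each edge with type-1 source contributes $(|u|\vee\lambda)^{-2}$ and each edge with type-2 source contributes $(|u|\vee\lambda)^{-\beta/2}$. Introducing volumes $V_1,V_2$ for the two families and $N-V_1-V_2$ for the free variables (exactly as in the $\beta\in(0,2)$ subcase of the proof of Proposition~\ref{prop:auxk}), I apply Lemma~\ref{lemdirichlet} (with the log-corrected version \eqref{eq:Dir log} on the type-1 side, where the integrand at exponent $\beta=4$ is only logarithmically integrable, producing a $|\log\lambda|$-type factor, and with the pointwise bound $\lambda^{2-\beta/2}$ per edge on the type-2 side). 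Optimizing jointly in $(V_1,V_2)$ via the same Lagrange-multiplier computation that yields \eqref{NV} decouples the two species because the objective is affine in each log-factor.

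The third step is the optimization over $(k_0^{(1)},k_0^{(2)},k_0^{(12)},k_1)$, analogous to the final optimization of $\varphi_\beta$ in \eqref{kogconc}. The pure-type pieces give, after inserting the definition \eqref{def:gbetak} of $\gamma_{4,k}$ and $\gamma_{\beta,k'}$ respectively, the two contributions $\frac{k}{2}\gamma_{4,k}$ and $\frac{k'}{2}\gamma_{\beta,k'}$. The mixed cycles are handled by the AM–GM bound
\begin{equation*}
\lambda^{-\beta/2} \;\le\; \tfrac{1}{2}\lambda^{-2}+\tfrac{1}{2}\lambda^{2-\beta},
\end{equation*}
which allows me to redistribute each mixed cycle as half a type-$(1,1)$ cycle and half a type-$(2,2)$ cycle; the volume budgets $V_1,V_2$ can likewise be split in half, so that the final supremum splits into two independent optimizations of the form already carried out in Proposition~\ref{prop:auxk} (the $\beta=4$ case for the first and the $\beta>4$ case for the second).

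The main obstacle is precisely this AM–GM step combined with the joint optimization: one has to check that distributing the mixed 2-cycles does not spoil the combinatorial entropy estimate from Step~1, and that the remaining constraint $2k_0^{(1)}+2k_0^{(2)}+2k_0^{(12)}+k_1\le k+k'$ translates, after redistribution, into the separate constraints $2\tilde k_0^{(1)}+\tilde k_1^{(1)}\le k$ and $2\tilde k_0^{(2)}+\tilde k_1^{(2)}\le k'$ needed to reuse Proposition~\ref{prop:auxk} as a black box on each species. Once this is verified, the estimate \eqref{eq:two species} follows from summing the two species-wise applications of \eqref{defZaux}, the factor $p\log N$ coming from the free variables exactly as in \eqref{kogconc}.
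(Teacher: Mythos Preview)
Your three-step architecture matches the paper's, and the idea to eliminate mixed 2-cycles by ``redistributing half-and-half'' is the heart of the matter—but the tool you invoke is wrong. The AM--GM inequality $\lambda^{-\beta/2}\le\tfrac12\lambda^{-2}+\tfrac12\lambda^{2-\beta}$ is an \emph{additive} bound; expanding the product over $k_0^{(12)}$ mixed cycles would push all of them to type-$(2,2)$ (the dominant summand for $\beta>4$), after which the constraint $2\tilde{k}_0^{(2)}\le k'$ can fail. What you actually need is the trivial \emph{multiplicative} identity $\lambda^{-\beta/2}=(\lambda^{-2})^{1/2}(\lambda^{2-\beta})^{1/2}$: setting $\tilde{k}_0^{(1)}=k_0^{(1)}+\tfrac12 k_0^{(12)}$ and $\tilde{k}_0^{(2)}=k_0^{(2)}+\tfrac12 k_0^{(12)}$ preserves both the integrated bound and the constraints $2\tilde{k}_0^{(1)}\le k$, $2\tilde{k}_0^{(2)}\le k'$ exactly, and the objective becomes independent of the mixed variable. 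With this correction your decoupling works (the enumeration entropy terms are all $O(k+k')$ and absorbed in $C_\beta(k+k')$), though the last line is not literally two black-box applications of \eqref{defZaux}—that would double-count the $p\log N$—but rather a reuse of the single-species optimization from the proof of Proposition~\ref{prop:auxk} on each half.

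The paper's Step~3 takes a different route: it casts the problem as maximizing an affine function $\tilde{\psi}_\beta(x_1,x_2,x_3)$ over an explicit polytope $\tilde{A}$ (with $x_3$ the normalized mixed-cycle count), enumerates all six extremal points of $\tilde{A}$, evaluates $\tilde{\psi}_\beta$ at each, and checks by direct comparison that the maximum is always attained at a vertex with $x_3=0$. Your corrected substitution amounts to the observation that $\tilde{\psi}_\beta$ and the constraints are invariant under $(x_1,x_2,x_3)\mapsto(x_1+\tfrac{x_3}{2},x_2+\tfrac{x_3}{2},0)$, which bypasses the vertex enumeration entirely—arguably the cleaner argument, once stated correctly.
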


\begin{proof}
{\bf{Step 1: counting graphs.}} This step is a variant of Step 1 of the proof of Proposition~\ref{prop:auxk}.
Let $S_1=\{1,\ldots,k\}$, $S_2=\{k+1,\ldots,k+k'\}$, $S=S_1\cup S_2$ and $V=\{1,\ldots,p\}$. We study
\begin{equation*}
 I:=\int \prod_{i\in S_1} \frac{1}{(r(z_i)\vee \lambda)^2}  \(1+\log \frac{r(z_i)}{\lambda}\)^\delta\prod_{i\in S_2}\frac{1}{(r(z_i)\vee \lambda)^{\frac{\beta}{2} }} \dd z_1\ldots \dd z_p.
\end{equation*}

Let $\gamma_p'$  be as in the proof of Proposition \ref{prop:auxk}.  We let $D(k_{0,1},k_{0,2},k_{0,3},k_1)$ be the set of functional digraphs having $k_1$ vertices in $V\setminus S$ incident to an edge in $\gamma_p'$, $k_{0,1}$ $2$-cycles in $S_1$, $k_{0,2}$ $2$-cycles in $S_2$ and $k_{0,3}$ $2$-cycles with $1$ vertex in $S_1$ and $1$ vertex in $S_2$, with $k_0:=k_{0,1}+k_{0,2}+k_{0,3}$.

Let us enumerate graphs $\gamma$  in $D(k_{0,1},k_{0,2},k_{0,3},k_1)$. 
Let $U_1,\ldots,U_{k_0+k_1}$ be the nontrivial connected components of such a graph $\gamma$.

$\bullet $ Since connected components are indistinguishable, we need to divide by $(k_0+k_1)!$. 

$\bullet$ Among these connected components we choose the  $k_0$ which have a $2$-cycle. There are  $\binom{k_0+k_1}{k_0}$ choices. 

$\bullet$ Among these $k_0$ connected components we choose the ones with   a 2-cycle in $S_1^2$, in $S_2^2$ and in $S_1\times S_2$. There are
\begin{equation*}
    \frac{k_0!}{(k_{0,1})!(k_{0,2})!(k_{0,3})! }
\end{equation*}
choices.

$\bullet$ For the $k_{0,1}$ connected components with a $2$-cycle in $S_1^2$, we choose the 
$2$-cycles. There are
\begin{equation*}
    \frac{k!}{(k-2k_{0,1})!2^{k_{0,1}}}
\end{equation*}
choices.

$\bullet$ For the $k_{0,2}$ connected components with a $2$-cycle in $S_2^2$, we choose the 
$2$-cycles. There are
\begin{equation*}
    \frac{(k')!}{(k'-2k_{0,2})!2^{k_{0,2}}}
\end{equation*}
choices.

$\bullet$ For the $k_1$ trees, we need to fix the label of the root, which must be in $\{k+k'+1,\ldots,p\}$, there are  fewer than $p^{k_1}$ possible choices.
 We next assign a nearest-neighbor to the other points, the total number of choices is bounded by $(2k)^{k-2k_{0,1}} (2k')^{k'-2k_{0,2}} \le C^{k+k'}k^{k-2k_{0,1}}(k')^{k'-2k_{0,2}} $.

We therefore obtain
\begin{multline}\label{eq:D twospe}
    \log |D(k_{0,1},k_{0,2},k_{0,3},k_1)|\leq C(k+k')-k_{0,1}\log k_{0,1}-k_{0,2}\log k_{0,2}-k_{0,3}\log k_{0,3}-k_1\log k_1\\+k\log k-(k-2k_{0,1})\log(k-2k_{0,1})+k'\log k'-(k'-2k_{0,2})\log(k'-2k_{0,2})+(k-2k_{0,1})\log k\\+(k'-2k_{0,2})\log k'+k_1\log p.
\end{multline}

Note that we have the constraints
\begin{equation}\label{eq:cons two}
\begin{cases}
2k_{0,1}+k_{0,3}\leq k\\
2k_{0,2}+k_{0,3}\leq k'\\
2k_{0,1}+2k_{0,2}+2k_{0,3}+k_1\leq k+k'.
\end{cases}
\end{equation}
The last inequality comes from the fact that each vertex in $V\setminus S$ incident to an edge in $\gamma_p'$ comes with a vertex in $S$.

\noindent
{\bf{Step 2: integration.}}
Let $\gamma\in D(k_{0,1},k_{0,2},k_{0,3},k_1)$. Recalling that we only consider $\beta>4$, proceeding as in the second step of the proof of Proposition \ref{prop:auxk} (cases $\beta= 4$ and $\beta>4$),  one may check that
\begin{multline}\label{eq:int twospe}
    \int_{\gamma_p'=\gamma}\prod_{i=1}^k \frac{1}{(r_i(z_i)\vee \lambda)^2}  \(1+\log \frac{r(z_i)\vee \lambda}{\lambda}\)^\delta\prod_{i=k+1}^{k+k'}\frac{1}{(r(z_i)\vee \lambda)^{\frac{\beta}{2} }}\dd z_1\ldots \dd z_p\\ \leq C_\beta^k N^{p-k-k'+k_0}\lambda^{-2k_{0,1}}\lambda^{(2-\beta)k_{0,2}}\lambda^{-\frac{\beta}{2}k_{0,3}}|\log \lambda|^{k-2k_{0,1}-k_{0,3}} \lambda^{(2-\frac{\beta}{2})(k'-2k_{0,2}-k_{0,3})}\\
    \le C_\beta^k N^{p-k-k'+k_0}\lambda^{-2( k_{0,1}+k_{0,2}+k_{0,3})} \lambda^{(2-\frac{\beta}{2})k'}|\log \lambda|^{k-2k_{0,1}-k_{0,3}}.
\end{multline}

\noindent
{\bf{Step 3: optimization.}} Set
\begin{equation}\label{eq:defphi_beta}
    \phi_\beta(x_1,x_2,x_3)= -2(x_1+x_2+x_3) \log \lambda -(2x_1+x_3)\log |\log \lambda|, 
\end{equation}
Combining \eqref{eq:D twospe} and \eqref{eq:int twospe} gives
\begin{multline}\label{eq:I maxF}
   \log I\leq C(k+k')+(p-k-k')\log N+k\log k+k'\log k'+ k\log |\log \lambda|+\(2-\frac{\beta}{2}\) k'\log \lambda \\+\max_{k_{0,1},k_{0,2},k_{0,3},k_1 } f(k_{0,1},k_{0,2},k_{0,3},k_1)+O_\beta(\log \log N),
\end{multline}
where the maximum runs over $k_{0,1}$, $k_{0,2}, k_{0,3}, k_1$ satisfying \eqref{eq:cons two} and where
\begin{multline*}
   f(k_{0,1},k_{0,2},k_{0,3},k_1):=-k_{0,1}\log k_{0,1}-k_{0,2}\log k_{0,2}-k_{0,3}\log k_{0,3}-{k_1\log k_1}\\-(k-2k_{0,1})\log(k-2k_{0,1})-(k'-2k_{0,2})\log(k'-2k_{0,2})+(k-2k_{0,1})\log k+(k'-2k_{0,2})\log k'\\+{k_1\log p}+k_0\log N+(k+k')\phi_\beta\Bigr(x_{0,1},x_{0,2},x_{0,3}\Bigr).
\end{multline*}
Set
$x_{0,1}=\frac{k_{0,1}}{k+k'}, x_{0,2}=\frac{k_{0,2}}{k+k'}, x_{0,3}=\frac{k_{0,3}}{k+k'}, y_1=\frac{k_1}{k+k'}$. Since $k_{0,1}+k_{0,2}+k_{0,3}=k_0$ and since $x\log x$ is bounded on $(0,1)$, we have
\begin{equation*}
    -k_{0,1}\log k_{0,1}-k_{0,2}\log k_{0,2}-k_{0,3}\log k_{0,3}\leq -k_0\log k_0+Ck.
\end{equation*}
It follows that \begin{multline*}
  f(k_{0,1},k_{0,2},k_{0,3},k_1)\leq C_\beta(k+k')+ k_0\log \frac{N}{k_0}+k_1\log \frac{N}{k_1}+(k+k')\phi_\beta(x_{0,1},x_{0,2},x_{0,3})\\ 
  -(k-2k_{0,1})\log \frac{k-2k_{0,1}}{k} -(k'-2k_{0,2})\log \frac{k'-2k_{0,2}}{k'}.
\end{multline*}
Using that $x\log x$ is bounded on $(0,1)$, we thus obtain
\begin{multline*}
  f(k_{0,1},k_{0,2},k_{0,3},k_1)\leq  C_\beta(k+k')+ k_0\log \frac{N}{k_0}+k_1\log \frac{N}{k_1}+(k+k')\phi_\beta(x_{0,1},x_{0,2},x_{0,3})\\
  =(k+k')\Bigr((x_{0,1}+x_{0,2}+x_{0,3})\log \frac{N}{(k+k')(x_{0,1}+x_{0,2}+x_{0,3})}+y_1\log \frac{N}{(k+k')y_1} \\+\phi_\beta(x_{0,1},x_{0,2},x_{0,3})+C_\beta\Bigr)\\
  \leq (k+k')\Bigr((x_{0,1}+x_{0,2}+x_{0,3})\log \frac{N}{k+k'}+y_1\log \frac{N}{k+k'}+\phi_\beta(x_{0,1},x_{0,2}+x_{0,3})+C_\beta\Bigr),
\end{multline*}
One may then write
\begin{multline}\label{eq:boundfmax}
\max_{k_{0,1},k_{0,2},k_{0,3},k_1}f(k_{0,1},k_{0,2},k_{0,3},k_1)\leq C(k+k')+(k+k')\sup_A \psi_\beta\Bigr(\frac{k_{0,1}}{k+k'},\frac{k_{0,2}}{k+k'}, \frac{k_{0,3}}{k+k'},\frac{k_1}{k+k'}\Bigr),
\end{multline}
where
\begin{equation*}
   \psi_\beta(x_1,x_2,x_3,y_1)=(x_1+x_2+x_3)\log \frac{N}{k+k'}+ y_1\log \frac{p}{k+k'}+\phi_\beta(x_1,x_2,x_3)
\end{equation*}
with $A$ the set of $(x_1,x_2,x_3,y_1)\in \dR^4$ that satisfy  
\begin{equation*}
    \begin{cases}
        2x_1+x_3\leq \alpha \\
        2x_2 +x_3\leq 1-\alpha\\
       2(x_1+x_2+x_3)+y_1\leq 1\\
       x_1, x_2, x_3, y_1\geq 0
    \end{cases}
\end{equation*}
for $\alpha := \frac{k}{k+k'}\in (0,1)$.
Writing $y_1\leq 1-2(x_1+x_2+x_3)$ and using $p\geq \frac{N}{2}$, we thus get 
\begin{multline}\label{eq:sup tildepsi}
\max_{k_{0,1},k_{0,2},k_{0,3},k_1}f(k_{0,1},k_{0,2},k_{0,3},k_1)\leq C_\beta(k+k')+(k+k')\log\frac{p}{k+k'}\\+(k+k')\sup_{\tilde A}  \tilde{\psi}_\beta\Bigr(\frac{k_{0,1}}{k+k'},\frac{k_{0,2}}{k+k'}, \frac{k_{0,3}}{k+k'}\Bigr),
\end{multline}
where
\begin{equation}\label{def:tilde psi beta}
  \tilde{\psi}_\beta(x_1,x_2,x_3)= -2(x_1+x_2+x_3)\log \frac{N}{k+k'}+\phi_\beta(x_1,x_2,x_3)
\end{equation}
and $\tilde{A}$ is  the set of $(x_1,x_2,x_3)\in \dR^3$ that satisfy
\begin{equation*}
    \begin{cases}
        2x_1+x_3\leq \alpha \\
        2x_2 +x_3 \leq 1-\alpha\\
        x_1,x_2,x_3\geq 0.
    \end{cases}
\end{equation*}

Since $\tilde{\psi}_\beta$ is linear and since the constraints defining $\tilde{A}$ are affine, the maximum is attained at an extremal point of $\tilde{A}$. Let us find the extremal points of $\tilde{A}$ with $x_3\neq 0$. 

$\bullet$ Assume first that $\alpha\geq \frac{1}{2}$. Let $(x_1,x_2,x_3)$ be an extremal point of $\tilde{A}$ with $x_3\neq 0$. Then $x_3=\alpha-2x_1$ or $x_3=1-\alpha-2x_2$. Assume that $x_3=\alpha-2x_1$. Then
\begin{equation*}
    2x_2-2x_1\leq 1-2\alpha.
\end{equation*}
Since $x_2\geq 0$, the above inequality implies $x_1\geq \alpha-\frac{1}{2}.$ However $x_1$ is also such that $x_3>0$, i.e., $x_1<\frac{\alpha}{2}$. Hence $x_1\in [\alpha-\frac{1}{2},\frac{\alpha}{2})$ and since it is extremal $x_1=\alpha-\frac{1}{2}\geq 0$. We conclude that $(x_1,x_2,x_3)=(\alpha-\frac{1}{2},0,1-\alpha)$.

Now assume that $2x_2+x_3=1-\alpha$. Then 
\begin{equation*}
    2x_1-2x_2\leq 2\alpha-1,
\end{equation*}
which gives $x_2\geq -\alpha+\frac{1}{2}$. But we also have $x_2<\frac{1-\alpha}{2}$ since $x_3>0$ hence we must have $x_2= -\alpha + \hal$. Since we assumed $\alpha\ge \hal$, this is possible only if  $\alpha=\frac{1}{2}$.   Then $x_2=0$, $x_3=\frac{1}{2}$ and $x_1=0$. The corresponds to the solution $(\alpha-\frac{1}{2},0,1-\alpha)$ found before.

$\bullet$ Assume that $\alpha\leq \frac{1}{2}$. Let $(x_1,x_2,x_3)$ be an extremal point of $\tilde{A}$ with $x_3\neq 0$. We can show using the same reasoning that $(x_1,x_2,x_3)=(0,\frac{1}{2}-\alpha,\alpha)$.

We conclude that the extremal points of $\tilde{A}$ with $x_3\neq 0$ are
\begin{equation*}
(\alpha-\frac{1}{2},0,1-\alpha) \quad \text{if $\alpha\geq \frac{1}{2}$}\quad ;\quad (0,\frac{1}{2}-\alpha,\alpha)\quad \text{if $\alpha\leq \frac{1}{2}$}.
\end{equation*}
Moreover it is easy to check that the extremal points with $x_3=0$ are
\begin{equation*}
    (0,0,0),\quad (\frac{\alpha}{2},\frac{1-\alpha}{2},0),\quad (\frac{\alpha}{2},0,0),\quad (0,\frac{1-\alpha}{2},0).
\end{equation*}

We next compute
\begin{align*}
\tilde{\psi}_\beta(\alpha-\frac{1}{2},0,1-\alpha)&=-\log \lambda -\alpha \log |\log \lambda|-\log \frac{N}{k+k'}\quad \text{if $\alpha\geq \frac{1}{2}$}\\
\tilde{\psi}_\beta(0,\frac{1}{2}-\alpha,\alpha)&=-\log \lambda -\alpha \log |\log \lambda|-\log \frac{N}{k+k'}\quad \text{if $\alpha\leq \frac{1}{2}$},\\
\tilde{\psi}_\beta(\frac{\alpha}{2},\frac{1-\alpha}{2},0)&=-\log \lambda-\alpha \log|\log \lambda|-\log \frac{N}{k+k'},\\
\tilde{\psi}_\beta(\frac{\alpha}{2},0,0)&=-\alpha (\log \lambda+\log|\log \lambda|)-\alpha\log \frac{N}{k+k'},\\
\tilde{\psi}_\beta(0,\frac{1-\alpha}{2},0)&=-(1-\alpha) \log \lambda-(1-\alpha)\log \frac{N}{k+k'},
  \\\tilde{\psi}_\beta(0,0,0)&=0.
\end{align*}
In particular if $\alpha\geq \frac{1}{2}$,
\begin{equation*}
    \tilde{\psi}_\beta(\alpha-\frac{1}{2},0,1-\alpha)\leq \sup_{x_3=0}\tilde{\psi}_\beta
\end{equation*}
while if $\alpha\leq \frac{1}{2}$,
\begin{equation*}
    \tilde{\psi}_\beta(0,\frac{1}{2}-\alpha)\leq \sup_{x_3=0}\tilde{\psi}_\beta.
\end{equation*}
We deduce that the maximum of $\psi$ is attained for $x_3=0$. By definition of $\tilde{\psi}_\beta$ and $\phi_\beta$ (see \eqref{eq:defphi_beta}, \eqref{def:tilde psi beta}, for all $x_1\in [0,\frac{\alpha}{2}]$ and $x_2\in [0,\frac{1-\alpha}{2}]$,
\begin{equation*}
    \tilde{\psi}_\beta(x_1,x_2,0)=x_1 \log\Bigr(-2\log \lambda-2\log|\log \lambda|+\log \frac{k+k'}{N}\Bigr)+x_2\Bigr(-2\log \lambda +\log \frac{k+k'}{N}\Bigr).
\end{equation*}
Consequently
\begin{multline*}
    \sup_{\tilde{A} } \tilde{\psi}_\beta\leq \sup_{x_1\in [0,\frac{\alpha}{2}]}\Bigr\{x_1 \log\Bigr(-2\log \lambda-2\log|\log \lambda|+\log \frac{k+k'}{N}\Bigr)\Bigr\}\\+\sup_{x_2\in [0,\frac{1-\alpha}{2}]}\Bigr\{x_2\Bigr(-2\log \lambda +\log \frac{k+k'}{N}\Bigr)\Bigr\}.
\end{multline*}
Using \eqref{eq:I maxF} and \eqref{eq:boundfmax}, we thus get
\begin{equation*}
\begin{split}\log I&\leq p\log N+k\log \frac{k}{N}+ k'\log \frac{k'}{N}+ k\log|\log \lambda|+\(2-\frac{\beta}{2}\)k'\log \lambda+(k+k') \sup_{\tilde{A}}\tilde{\psi}_\beta \\ &\leq p\log N+\frac{k}{2}\log\Bigr(\lambda^{-2}\frac{k+k'}{N}\vee |\log \lambda|^2\Bigr)+\frac{k'}{2}\log\Bigr(\lambda^{2-\beta}\frac{k+k'}{N}\vee \lambda^{4-\beta}\Bigr)+C_\beta(k+k').
\end{split}
\end{equation*}
Writing $k+k'=k(1+\frac{k'}{k})$ and absorbing $\log (1+ \frac{k'}{k})$ into the $C_\beta(k+k')$ errors, and similarly with the roles of $k$ and $k'$ reversed,  we obtain
 \eqref{eq:two species}. 
\end{proof}

Building on the last propositions, we treat the quadratic error terms arising from the energy lower bound of Corollary \ref{coromino} when the nearest-neighbor graph consists only of isolated neutral dipoles. We will show in the proof of Proposition \ref{prop:upper bound} that the cardinality of the set $D_{2N,K,p,p_0}$ of functional digraphs on $\{1,\ldots,2N\}$ with $K$ connected components, $p$ isolated $2$-cycles and $p_0$ isolated neutral $2$-cycles satisfies
\begin{multline*}
 \log |D_{2N,K,p,p_0}|
    \leq 2N\log N+N\log 2+(2N-2K)\log (2N)-(2N-2K)\log (2N-2K)-p_0\log p_0\\-(p-p_0)\log (p-p_0)-(K-p)\log(K-p)+O(\log N)+C(N-K),\end{multline*}
which explains the expression for the function $f$ chosen below.

\begin{prop}\label{prop:isolated}
Let $\gamma_{2p}(Z_{2p})$ be the nearest-neighbor graph of $Z_{2p}$ as defined in Section \ref{sec2}. We let $G_{p,p_0}$ be the set of digraphs on $\{1,\ldots,2p\}$ made of $p$ $2$-cycles with exactly $p_0$  neutral $2$-cycles.

Let $C_0>0$. For all $\gamma\in G_{p,p_0}$, set
\begin{equation*}
I(\gamma):=\int_{\gamma_{2p}=\gamma} \exp\Bigg(-\beta \Bigr(\F_\lambda^{\nn}(Z_{2p}) - C\sum_{i\in I^{\dip}} \Bigr(\frac{\rr_1(z_i)}{\rr_2(z_i)}\Bigr)^2 - \sum_{i\notin I^{\dip}} \Big(\log \frac{\rr_2(z_i)\wedge \rr_2(z_{\phi_1(i)} ) }{\rr_1(z_i)} +C \Bigr)\Bigr)\Bigg) \dd Z_{2p}
\end{equation*}
and for each $p_0\in \{0,\ldots,p\}$, set
\begin{equation*}
    f(p_0):=p\log p-p_0\log p_0-(p-p_0)\log(p-p_0)+C_0(p-p_0)  +\max_{\gamma\in G_{p,p_0}}\log I(\gamma).
\end{equation*}

Assume that $p\geq \frac{N}{2}$. For any $\beta\geq 2$, we have 
\begin{multline}\label{eq:n neutral}
\max_{p_0\in \{0,\ldots,p\}} f(p_0)
\leq 
 p\Bigr( \log N +(2-\beta) (\log  \lambda) \indic_{\beta>2}+(\log |\log \lambda|)\indic_{\beta=2}+ \log \mc{Z}_\beta \\+ C_\beta(|\log\lambda|^{-1}\indic_{\beta=2}+ \lambda^{\beta-2}\indic_{2<\beta<4}+\lambda^2|\log\lambda|^2\indic_{\beta=4}+(\lambda^2|\log\lambda|)\indic_{\beta>4})\Bigr).
\end{multline}
In addition, there exists $c_\beta>0$ such that for every $p_0\le p$,
\begin{multline}\label{eq:n neutral conv}
 f(p_0)-p\Bigr( \log N +(2-\beta) (\log  \lambda) \indic_{\beta>2}+(\log |\log \lambda|)\indic_{\beta=2}+ \log \mc{Z}_\beta\Bigr) \leq -c_\beta p \Bigr(\frac{p-p_0}{p}\Bigr)^2\\+O\Bigr(p\Bigr(|\log\lambda|^{-1}\indic_{\beta=2}+ \lambda^{\beta-2}\indic_{2<\beta<4}+(\lambda^2|\log\lambda|)^2\indic_{\beta=4}+(\lambda^2|\log\lambda|)\indic_{\beta>4}\Bigr)\Bigr).
\end{multline}

Finally, for any $\beta\geq 2$, there exists $C_\beta>0$ such that for every $p<\frac{N}{2}$,
\begin{equation}\label{eq:crude}
\max_{p_0\in \{0,\ldots,p\}}f(p_0)
 \leq 
 p\Bigr( \log N +(2-\beta) (\log  \lambda )\indic_{\beta>2}+(\log |\log \lambda|)\indic_{\beta=2}\Bigr)+C_\beta N. 
\end{equation}
\end{prop}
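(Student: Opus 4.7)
The approach is to apply the Gunson--Panta change of variables \eqref{defphi} to $I(\gamma)$, integrate out the dipole-size variables $u_k := z_{m_k} - z_{m_k'}$ using the computations from Lemma \ref{lemma:nearest neigh}, and then control the residual integral over the $p$ dipole-centers $(z_{m_k'})_{k=1}^p$ via Proposition \ref{prop:auxk} or \ref{prop:two spiecies}. For $\gamma\in G_{p,p_0}$, on the event $\{\gamma_{2p}=\gamma\}$ the quantity $\rr_1(z_{m_k}) = \frac{1}{4}|u_k|\vee\lambda$ depends only on $u_k$; by \eqref{rrj} the second-nearest-neighbor distance $\rr_2(z_{m_k})\wedge\rr_2(z_{m_k'})$ agrees, up to an additive $O(\rr_1)$, with the quarter-distance from $z_{m_k'}$ to its nearest other center. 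This decouples the $u_k$-integrals from the center integral up to controllable errors.

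For each of the $p_0$ neutral 2-cycles, the dipole--dipole error factor $\exp(C\beta(\rr_1/\rr_2)^2)$ is uniformly bounded by $e^{C_\beta}$ since $\rr_1\le \rr_2$, and the $u_k$-integral reduces to $\int \exp(\beta\g_\lambda(u))\dd u = \lambda^{2-\beta}\mc{Z}_\beta$ for $\beta>2$ (with the logarithmic substitutions from Step 3 of Lemma \ref{lemma:nearest neigh} at $\beta=2$ and $\beta=4$). For each of the $p-p_0$ non-neutral 2-cycles, the extra logarithmic factor $\exp(\beta\log(\rr_2/\rr_1))$ combines with $\exp(\beta\g_\lambda(u_k))$; after explicit evaluation the $u_k$-integral leaves on $z_{m_k'}$ a residual weight which is a power of the nearest-other-center distance. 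The resulting weighted integral over centers is exactly of the form handled by Proposition \ref{prop:auxk} (taking the parameter $k$ there to be $p-p_0$) when $\beta\le 4$, and by Proposition \ref{prop:two spiecies} when $\beta>4$. The factor $\gamma_{\beta,p-p_0}$ appearing in those propositions produces precisely the small $\lambda^{\beta-2}$-type corrections present in \eqref{eq:n neutral}.

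Collecting these per-pair estimates together with the combinatorial count $\log|G_{p,p_0}| = p\log p - p_0 \log p_0 - (p-p_0)\log(p-p_0) + O(\log N)$ and the given $C_0(p-p_0)$ term, $f(p_0)$ takes the form $p\log N + p\bigl((2-\beta)(\log\lambda)\indic_{\beta>2} + (\log|\log\lambda|)\indic_{\beta=2} + \log\mc{Z}_\beta\bigr) + p\,\widetilde g_\beta(p_0/p,\lambda)$ plus an error of the stated order, where $\widetilde g_\beta(x,\lambda) = -x\log x - (1-x)\log(1-x) - (1-x)\Delta_\beta(\lambda) + O(1)(1-x)$, with $\Delta_\beta(\lambda) := (\beta-2)|\log\lambda|\indic_{\beta>2} + (\log|\log\lambda|)\indic_{\beta=2}$ the log-ratio of the neutral to non-neutral per-pair free energies. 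Maximizing this binary entropy minus a large linear penalty gives $x^* = 1 - \Theta_\beta(\bar\omega_\lambda)$ with $\bar\omega_\lambda$ as in \eqref{def:o nn}, yielding \eqref{eq:n neutral}; strict concavity of the binary entropy at $x=1$ produces the quadratic bound $\widetilde g_\beta(x)-\widetilde g_\beta(x^*) \le -c_\beta(1-x)^2$ needed for \eqref{eq:n neutral conv}.

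For the crude bound \eqref{eq:crude} valid when $p<N/2$, Propositions \ref{prop:auxk}--\ref{prop:two spiecies} do not apply. Instead, one bounds the center integral trivially by $N^p$ (integrating each center freely in $\Lambda$) and uses $\rr_2(z_{m_k'})\le C\sqrt{N}$ when a power weight appears, absorbing the resulting polynomial in $N$ into the $C_\beta N$ error, which is affordable in this regime since $p<N/2$. The principal obstacle is the non-neutral $u_k$-integral, whose behavior differs substantially in the four regimes $\beta=2$, $\beta\in(2,4)$, $\beta=4$, $\beta>4$ (with logarithmic corrections at the transition values); accurately matching each regime to the corresponding case of Proposition \ref{prop:auxk} or \ref{prop:two spiecies} and verifying that the accumulated errors combine into the stated $\omega_\lambda$-rate is where the technical effort concentrates.
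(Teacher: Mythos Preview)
Your overall strategy—Gunson--Panta change of variables, integrate out the $u_k$, then control the center integral via Propositions \ref{prop:auxk}/\ref{prop:two spiecies}—matches the paper's. However, there is a fatal gap in your treatment of the neutral dipoles. You write that the error factor $\exp\bigl(C\beta(\rr_1/\rr_2)^2\bigr)$ ``is uniformly bounded by $e^{C_\beta}$ since $\rr_1\le\rr_2$,'' and then integrate $u_k$ to get exactly $\lambda^{2-\beta}\mc{Z}_\beta$. This bound costs you a multiplicative factor $e^{C_\beta}$ \emph{per neutral dipole}, i.e.\ an additive $C_\beta p_0$ in $\log I(\gamma)$. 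Since $p_0\sim p$ at the optimum, this produces an $O(p)$ error in $f(p_0)$, not the $O(p\,\omega_\lambda)$ error that \eqref{eq:n neutral}--\eqref{eq:n neutral conv} require. The sharp constant $\log\mc{Z}_\beta$ is then destroyed, and since the proposition is later matched against a lower bound with that exact constant, the argument collapses.

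The paper avoids this by integrating the $u$-variable \emph{with} the error factor present, obtaining (see \eqref{align1}--\eqref{align2}) a per-dipole bound $\lambda^{2-\beta}\mc{Z}_\beta\bigl(1+C_\beta\varphi_\beta(\rr_1'(w_i))\bigr)$ where $\varphi_\beta$ is the small quantity \eqref{defvarphi} depending on the nearest-center distance $\rr_1'(w_i)$. One then expands $\prod_{i\in I^{\dip}}\bigl(1+C_\beta\varphi_\beta(\rr_1'(w_i))\bigr)$ as a sum over subsets of size $k\le p_0$, applies Proposition \ref{prop:auxk} (or \ref{prop:two spiecies}) to the center integral with $k+(p-p_0)$ weighted variables, and finally optimizes over both $k$ and $p_0$ simultaneously. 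This additional layer of expansion and optimization is entirely missing from your outline.

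A secondary but real issue: your appeal to \eqref{rrj} does not establish that $\rr_2(z_i)$ is comparable to the nearest-center distance $\rr_1'(w_j)$. Relation \eqref{rrj} only compares $\rr_2$ at the two endpoints of a single $2$-cycle; it says nothing about how $\rr_2(z_i)$ relates to distances among the $p$ centers. The second nearest neighbor of $z_i$ may well be the non-center point of a nearby dipole, and then $\rr_2(z_i)$ and $\rr_1'(w_j)$ can differ by an unbounded factor. The paper devotes its entire Step~1 to proving the ratio inequality \eqref{mainclaimr1r2}, $\frac{\rr_1(z_i)}{\rr_2(z_i)}\le 6\,\frac{\rr_1(w_j)}{\rr_1'(w_j)}$, via a case-by-case geometric analysis; this is what actually decouples the $u$- and $w$-integrations.

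Your crude bound \eqref{eq:crude} for $p<N/2$ survives, since an $O(p)$ error is acceptable there; but the sharp bounds do not.
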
 
\medskip

\begin{proof}
{\bf Step 1: Gunson-Panta change of variables.}  Let $\Phi^{GP}_{2p}(z_1, \dots,z_{2p})$ be the Gunson-Panta change of variables as in \eqref{defphi} and denote 
$$\Phi_{2p}^{GP}(z_1, \dots, z_{2p})= (u_1, \dots, u_p, w_1, \dots , w_p)$$
where the $u_i $'s are obtained as some $z_i-z_{\phi_1(i)}$ and the $w_i$'s are the other variables of the form $z_i$. 
For $j=1,\dots ,p$, let $\rr_1'(w_j)$ denote the nearest neighbor distance as in \eqref{nn2} within the configuration of the $w_j$'s, $j=1,\dots, p$, and let $\rr_1(w_j)=\frac14 |z_j-z_{\phi_1(j)} |\vee \lambda=  \frac14 |u_j| \vee\lambda$.
We also say that $j\in I^\dip$ if $w_j=z_j$ with $d_j d_{\phi_1(j)}=-1$.

 We claim that 
\be \label{mainclaimr1r2}\frac{\rr_1(z_i)}{\rr_2(z_i)}\le 6 \frac{\rr_1(w_j)}{\rr_1'(w_j)}\ee
for $w_j$ either equal to $z_i$ or equal to $z_{\phi_1(i)}$.


Without loss of generality we can consider that a point $w_1=z_1$ is in a nearest neighbor 2-cycle with $z_2$, and that $w_2=z_3$ is the/a nearest neighbor to $w_1$ within the collection $\{w_i\}$. Since all the points are in 2-cycles, we may assume that $z_3$ is in a 2-cycle with $z_4$.\\
First assume  that second nearest neighbor to $z_1$ is  $z_3=w_2 $.
Then $\rr_2(z_1)= \rr_1'(w_1)$ and there is equality 
$$\frac{\rr_1(z_1)}{\rr_2(z_1)} = \frac{\rr_1(w_1)}{\rr_1'(w_1)}.$$
Next, assume that  the second nearest neighbor to $z_1$ is $z_4$ and thus 
\be \label{z1z3}|z_1-z_4|\le  |z_1-z_3|.\ee
If $ \rr_2(z_1) \ge \frac14 \rr_1'(w_1)$ then we have the inequality 
$$\frac{\rr_1(z_1)}{\rr_2(z_1)} \le 4 \frac{\rr_1(w_1)}{\rr_1'(w_1)},$$
which suffices for our purposes. 
So we may reduce to the situation $\rr_2(z_1)\le \frac14 \rr_1'(w_1)$, which in particular means that 
$$\rr_2(z_1)\le \max\Bigr(\lambda, \frac14 |z_1-z_4|\Bigr)\le\frac14 \max\Bigr(\lambda,\frac14  |z_1-z_3|\Bigr)$$
which necessarily  implies that $ \frac14 |z_1-z_4|\le \frac1{16} |z_1-z_3|$ and $\lambda \le \frac1{16} |z_1-z_3|$. But by the triangle inequality,  
$$|z_3-z_4|\ge |z_1-z_3|-|z_1-z_4|\ge 3 |z_1-z_4| ,$$
a contradiction with the fact that $z_3$ and $z_4 $ are nearest neighbors to each other.

Thirdly, let us consider the case where  the second nearest neighbor to $z_1$ is neither $z_3$ nor $z_4$ but at another point, say $z_5$, such that 
\be\label{z1z5}
|z_1-z_5|\le \min(|z_1-z_3|, |z_1-z_4|).\ee  The point $z_5$ is itself in a 2-cycle with another point, call it $z_6$.  By construction of the Gunson-Panta variables, one of the points $z_5 $ or $z_6$ is in the collection of the $w_i$'s, so by the fact that $w_2=z_3$ is the nearest neighbor among $w_i$'s we must have 
$$|z_5-z_1|\ge |z_3-z_1|\quad \text{or} \ |z_6-z_1|\ge |z_3-z_1|,$$
according to whether it is $z_5$ or $z_6$. The first assertion compared with \eqref{z1z5} implies that there is equality in \eqref{z1z5}  so we may repeat the same arguments as in the previous two cases and obtain the same estimate.  It thus remains to consider the case where it is  $z_6$ that belongs to the collection of the $w_i$'s, say $z_6=w_3$.
By triangle inequality, we then have 
$$|z_1-z_3|\le |z_1-z_6|\le |z_1-z_5|+|z_5-z_6|.$$
Since $z_6$ is nearest neighbor to $z_5$,  $ |z_5-z_6|\le |z_5-z_1|$, hence it follows that 
$|z_1-z_3|\le 2 |z_5-z_1|,$
which implies that $\rr_1'(w_1)\le 2 \rr_2(z_1)$. This again  allows to conclude.

Finally, we may obtain the same estimates for $z_2$, which is not in the collection of $w$'s.
Assume first that $\rr_2(z_2)\le 2 \rr_1(z_2)$. Then, let $z_3$ be a point that achieves the minimum in the definition of $\rr_2(z_2)$, and let $z_4$ be the other point in the same 2-cycle as $z_3$. One of $z_3 $ or $z_4$ is in the collection of $w$'s. If it is $z_3$ we may write 
$$\rr_1'(w_1)\le \max\Bigr(\lambda, \frac14 |z_1-z_3|\Bigr) \le 
\max\Bigr(\lambda, \frac14 |z_1-z_2|\Bigr)+ \max\Bigr(\lambda, \frac14|z_2-z_3|\Bigr)\le \rr_1(z_1)+ \rr_2(z_2)\le 3 \rr_1(z_2).$$
In that case we thus have $\frac{\rr_1(w_1)}{\rr_1'(w_1)}\ge \frac13$ and so 
$$
 \frac{\rr_1(z_2)}{\rr_2(z_2)}\le 1 \le 3\frac{\rr_1(w_1)}{\rr_1'(w_1)}$$
which yields the desired estimate.
If the point in the collection of $w$'s is $z_4$, then  using that
$$|z_1-z_4|\le |z_1-z_2|+|z_2-z_3|+|z_3-z_4|\le |z_1-z_2|+2|z_2-z_3|$$
we deduce that
$$\rr_1'(w_1)\le \rr_1(z_2)+2\rr_2(z_2)\le 5\rr_1(z_2)$$ by the assumption $\rr_2(z_2)\le 2\rr_1(z_2)$.
A similar estimate then holds as well.

There remains to consider the case where  $\rr_2(z_2) \ge 2\rr_1(z_2)$.
Then, we use \eqref{rrj} to deduce that $\rr_2(z_2)\ge \hal \rr_2(z_1)$ hence
$$\frac{\rr_1(z_2)}{\rr_2(z_2)}\le 2\frac{\rr_1(z_1)}{\rr_2(z_1)}$$
and we control the right-hand side from the first series of estimates.
The proof of \eqref{mainclaimr1r2} is complete.
\smallskip

\noindent
{\bf Step 2.  Integration over the $U$ variables.} We deduce from \eqref{mainclaimr1r2} that 
\be \label{claim1}
\sum_{i\in I^{\dip}} \Bigr(\frac{\rr_1(z_i)}{\rr_2(z_i)}\Bigr)^2\le C \sum_{j  \in I^\dip}\Big( \frac{\rr_1(w_j)}{\rr_1'(w_j)}\Big)^2,\ee
and 
\be \label{claim2} \sum_{i\notin I^{\dip}} \log \frac{\rr_2(z_i)\wedge  \rr_2(z_{\phi_1(i)} ) }{\rr_1(z_i)}  \ge
2\sum_{i\notin I^{\dip}} \Big(\log \frac{\rr_1'(w_i)}{\rr_1(w_i)} -C \Bigr)
\ee
for some appropriate constant $C>0$.

Let us now make as announced the Gunson-Panta change of  variables 
$$Z_{2p} \mapsto \Phi^{GP}_{2p}(Z_{2p})= (U_p,W_p), $$ which is  a valid change of variables on the set $\{\gamma_{2p}=\gamma\}$. 
In view of Corollary \ref{coromino} and the definition of $\F_\lambda^\nn$ in \eqref{eq:defdipoleen} and since in our situation $N=K$ and all the indices are in $I^\pair$,  we may write that 
\begin{align}\label{pred1}
&\int_{\gamma_{2p}=\gamma} \exp\Bigr(-\beta \Bigr(\F_\lambda(Z_{2p}) \Bigr) \Bigr) \dd Z_{2p}
\\  \notag &\le 
\int_{\gamma_{2p}=\gamma} \exp\(-\beta \(\F_\lambda^{\nn}(Z_{2p})  +\sum_{i\notin I^{\dip}}
\( \log \frac{\rr_2(z_i) \wedge \rr_2(\phi_1(z_i)}{\rr_1(z_i)} -C\) - C \sum_{i\in I^\dip} \(\frac{\rr_1(z_i)}{\rr_2(z_i)}\)^2\) \) \dd Z_{2p}
\\  \notag & \le \int_{\gamma_{2p}=\gamma} \exp\(-\beta \(\F_\lambda^{\nn}(U_{p})  +2\sum_{j\notin I^{\dip}}
\( \log \frac{\rr_1'(w_j)}{\rr_1(w_j)} -C\) - C \sum_{i\in I^\dip} \(\frac{\rr_1(w_i)}{\rr_1'(w_i)}\wedge 1\)^2 
\)\) \dd U_p \dd W_p \\
\notag & = \int_{\gamma_{2p}=\gamma} \exp\Bigg(\beta \sum_{i=1}^p\g_\lambda(u_i)  -  2\beta\sum_{i\notin I^{\dip}}
\( \log \frac{\rr_1'(w_i)}{ (\frac14|u_i|)  \vee \lambda  } -C\) \\  \notag &\qquad \qquad +C_\beta \sum_{i\in I^\dip} \(\frac{(\frac14|u_i|)\vee \lambda}{\rr_1'(w_i)}\wedge 1\Bigg)^2 
\) \dd U_p \dd W_p.
\end{align}
Let us integrate over the variables $U_p$ first, and simplify the domain of integration by including it in the set $\{ |u_i|\le \rr_1'(w_i)\}$. 

$\bullet$ The case $i\in I^{\dip}.$
Using polar coordinates, \eqref{approxgeta}, performing the change of variables $|u_i|=\lambda|u_i'|$, and recalling that by definition $\rr_1'(w_i) \ge \lambda$, we may then bound each integral over $u_i$  by 
\begin{align}\label{align1}   &  2\pi \lambda^{2-\beta}
\int_{0}^{ \rr_1'(w_i)/\lambda  }
 r\exp\Bigr( \beta \g_1(r) +C_\beta   \Bigr( \frac{\lambda(\frac14 r) \vee  1}{\rr_1'(w_i)}\wedge 1\Bigr)^2 \Bigr)\dd r
\\  \notag &  \qquad  \le 2\pi \lambda^{2-\beta}\int_{0}^{ \rr_1'(w_i)/\lambda  }
 re^{ \beta \g_1(r) } \Bigr( 1+ C_\beta \lambda^2\Bigr( \frac{ (\frac14 r) \vee 1}{\rr_1'(w_i)}\wedge 1\Bigr)^2\Bigr) \dd r
 \\  \notag & \qquad \le  \lambda^{2-\beta} \Bigg( \mathcal Z_\beta \indic_{\beta>2} +   \left(2\pi \log \frac{\rr_1'(w_i)}{\lambda} + C_\beta\right) \indic_{\beta=2} + C_\beta  \int_0^{1} r   \left(  \frac{\lambda}{\rr_1'(w_i)} \right)^2 \dd r
 \\ \notag &\qquad \qquad+ C_\beta 
 \int_1^{ \rr_1'(w_i)/\lambda  } r^{1-\beta}  \Bigr(  \frac{\lambda r}{\rr_1'(w_i)}\Bigr)^2 \dd r\Bigg).
\end{align}

Here we have used the definition  \eqref{def:Cbeta} and the fact that $\g_1$ is bounded in the unit ball.
Using $\rr_1'(w_i) \ge \lambda$,  we find
\begin{align}\label{align2} & 2\pi\lambda^{2-\beta}\int_{0}^{ \rr_1'(w_i)/\lambda  }
 r\exp\( \beta \g_1(r) + C_\beta   \lambda^2 \Bigr(\frac{ (\frac14 r) \vee 1}{\rr_1'(w_i)\wedge 1}\Bigr)^2\) \dd r
\\ \notag &  \le
\lambda^{2-\beta} \Bigg( \mathcal Z_\beta\indic_{\beta>2} + (C_\beta+ \log \frac{\rr_1'(w_i)}{\lambda}) \indic_{\beta=2} + C_\beta    \frac{\lambda^2}{ \rr_1'(w_i)^{2}}  +  C_\beta \lambda^{\beta-2}  \rr_1'(w_i)^{2-\beta} \\ \notag & \qquad \qquad  +   C_\beta   \frac{\lambda^2}{\rr_1'(w_i)^2}   \(\log  \frac{ \rr_1'(w_i)}{\lambda} \)\indic_{\beta=4}   \Bigg)\\ \notag &  \le
\lambda^{2-\beta}  \mathcal Z_\beta\Bigg(1 + \Big( \log \frac{\rr_1'(w_i)}{\lambda} \Big)\indic_{\beta=2}\\ \notag &  \qquad \qquad+   C_\beta\(\( \frac{\lambda}{\rr_1'(w_i)}\)^{\beta-2}    \indic_{2\le \beta < 4 } + \(\frac{\lambda}{\rr_1'(w_i)}\)^2 \indic_{\beta>4} +\frac{\lambda^2}{\rr_1'(w_i)^2}   \(1+\log  \frac{ \rr_1'(w_i)}{\lambda} \)\indic_{\beta=4}\)\Bigg).
 \end{align} 


 We then define
 \begin{equation}\label{defvarphi}\varphi_\beta(x):= \begin{cases} 
\frac{1+\log x}{|\log \lambda|} & \text{for $\beta=2$}\\
 \( \frac{\lambda}{x}\)^{\beta-2}  & \text{for } \   2<\beta < 4 \\   
   \frac{\lambda^2}{x^2}   \log  \frac{x}{\lambda}& \text{for } \ \beta=4 \\
   \(\frac{\lambda}{x}\)^2 & \text{for } \  \beta>4 .\end{cases}  \end{equation}
 
$\bullet$ The case $i \notin I^\dip$.  We then obtain 
that the integral over $u_i$ is bounded by 
\begin{align*}\nonumber  & 2\pi \lambda^{2-\beta}
\int_0^{ \rr_1'(w_i)/\lambda  }
 r\exp\( \beta \g_1(r) +2\beta \(\log   \frac{\lambda(\frac14 r) \vee 1}{\rr_1'(w_i)}+C \)\)\dd r
\\ \notag & 
\le 2\pi \lambda^{2-\beta}C_\beta\(\int_0^1 re^{\beta \g_1(r)} \(\frac{\lambda }{\rr_1'(w_i)}\)^{2\beta} \dd r+ \int_1^{\rr_1'(w_i)/\lambda} r^{1+\beta}\( \frac{\lambda}{\rr_1'(w_i)}\)^{2\beta} \dd r\)
\\ & \le C_\beta\lambda^{2-\beta} \(\frac{ \lambda^{2\beta} }{\rr_1'(w_i)^{2\beta}} + \frac{\lambda^{\beta-2}}{\rr_1'(w_i)^{\beta-2}}\) \le  \frac{C_\beta}{\rr_1'(w_i)^{\beta-2}},
\end{align*}
where we used that $\rr_1'(w_i) \ge \lambda$. 

$\bullet$ Combining the two cases,
we thus have
\begin{multline}\label{pred3}
 \int_{\gamma_{2p}=\gamma} \exp(-\beta \F_\lambda(Z_{2p})) \dd Z_{2p}
 \\\le  \(\lambda^{(2-\beta)}\indic_{\beta>2}+|\log\lambda|\indic_{\beta=2}\)^p \mathcal Z_\beta^p  
 \int_{ \Lambda^{p}} \prod_{i\in I^{\dip}}
 (1+C_\beta\varphi_\beta(\rr_1'(w_i)))  \prod_{i \notin I^\dip} C_\beta  \frac{\lambda^{\beta-2}}{\rr_1'(w_i)^{\beta-2}} \dd W_p.
\end{multline}
\noindent
{\bf Step 3. Integration over the $W$ variables.}
We next turn to bounding the integral appearing in the right-hand side of \eqref{pred3}. 
Until the end of the proof, we assume $p\geq N/2$. Since $p_0=|I^\dip|$, expanding the product, we have 
\begin{multline*}\log  \int_{ \Lambda^{p}} \prod_{i\in I^{\dip}}
 (1+C_\beta\varphi_\beta(\rr_1'(w_i)))  \prod_{i \notin I^\dip} C_\beta \frac{\lambda^{\beta-2}}{\rr_1'(w_i)^{\beta-2}} \dd W_p\\ 
  \le \log \sum_{k=0}^{p_0} \binom{p_0}{k}C_\beta^{p-p_0+k} \lambda^{(\beta-2)(p-p_0)}\int_{\Lambda^p} \prod_{i=1}^k\varphi_\beta(\rr_1(x_i)) \prod_{j=p_0+1}^p  \rr_1(x_j)^{2-\beta}  \dd x_1 \dots \dd x_p  ,
 \end{multline*}where we return to the notation $\rr_1(x_i)$ to denote the ``nearest neighbor distance" of $x_i$ within the system of the $x_i$'s.

\noindent

$\bullet$ If $\beta=2$, inserting the definition of $\varphi_\beta$, we get
\begin{multline*}
  \log  \int_{\gamma_{2p}=\gamma} \exp(-\beta \F_\lambda(Z_{2p})) \dd Z_{2p} \leq p\log(|\log \lambda|\mc{Z}_\beta)\\
  +\max_{0\leq k\leq p_0}\log \left( \binom{p_0}{k}C_\beta^{k+p-p_0}|\log\lambda|^{-k}\int_{\Lambda^{p}}\prod_{i=1}^k (1+|\log \rr_1(x_i)|)\dd x_1\ldots \dd x_p\right)+O_\beta(\log N).
\end{multline*}
Inserting the result of Proposition \ref{prop:auxk}  with $\beta=0$ we obtain
\begin{multline}\label{358}
  \log  \int_{\gamma_{2p}=\gamma} \exp(-\beta \F_\lambda(Z_{2p})) \dd Z_{2p} \leq p\log(|\log \lambda|\mc{Z}_\beta)\\
  +\max_{0\leq k\leq p_0}\log \left( \binom{p_0}{k}C_\beta^{k+p-p_0}|\log\lambda|^{-k}e^{p\log N}\dd x_1\ldots \dd x_p\right)+O_\beta(\log N).
\end{multline}

$\bullet$ If $\beta \in (2,4)$, 
  \begin{multline}\label{359}
 \log  \int_{\gamma_{2p}=\gamma} \exp(-\beta \F_\lambda(Z_{2p})) \dd Z_{2p}\leq p\log(\lambda^{2-\beta}\mc{Z}_\beta)
 \\ 
+\log \Bigg( \sum_{k=0}^{p_0} \binom{p_0}{k}C_\beta^{p-p_0+k} \lambda^{(k+p-p_0)(\beta-2)}   \int_{\Lambda^p} \prod_{i\notin [k+1, \dots, p_0]}(\rr_1(x_i))^{2-\beta} \dd x_1 \dots \dd x_p\Bigg).
\end{multline}
Inserting the result of Proposition \ref{prop:auxk}, estimate \eqref{defZaux}, with $\beta'= 2(\beta-2)$ and $k+p-p_0$ in place of $k$, we get that 
\begin{multline} \label{362}
\log \int_{\gamma_{2p}=\gamma} \exp(-\beta \F_\lambda(Z_{2p})) \dd Z_{2p} \leq  \log(\mc{Z}_\beta^p\lambda^{(2-\beta)p})+O_\beta(\log N) \\
+ \max_{0\leq k\leq p_0} \log \left(\binom{p_0}{k}  \left(\lambda^{\beta-2}C_\beta\right)^{k+p-p_0} \exp\left(p \log N + \frac{1}{2}(k+p-p_0)\gamma_{2(\beta-2),k+p-p_0}\right)\right),
\end{multline}
where $\gamma$ is as in \eqref{def:gbetak}.

$\bullet$ If  $\beta >4$, we find 
\begin{multline}\label{360}
 \log  \int_{\gamma_{2p}=\gamma} \exp(-\beta \F_\lambda(Z_{2p})) \dd Z_{2p}
  \leq p\log(\lambda^{2-\beta}\mc{Z}_\beta) 
\\+\log \Bigg( \sum_{k=0}^{p_0} \binom{p_0}{k}C_\beta^{p-p_0+k} \lambda^{2k +(p-p_0)(\beta-2)} \int_{\Lambda^p} \prod_{i\in [1,k]} (\rr_1(x_i))^{-2}\prod_{i\in [p_0+1,  \dots, p]}(\rr_1(x_i))^{2-\beta}   \dd x_1 \dots \dd x_p\Bigg).
\end{multline}
Inserting \eqref{eq:two species} with $\beta'=2(\beta-2)$ and $k'= p-p_0$, we obtain
\begin{multline}\label{360-2}
 \log  \int_{\gamma_{2p}=\gamma} \exp(-\beta \F_\lambda(Z_{2p})) \dd Z_{2p}\leq 
  p\log(\lambda^{2-\beta}\mc{Z}_\beta)+O_\beta(\log N)+ p\log N \\
+\max_{0\leq k\leq p_0}\log \left(\binom{p_0}{k} \lambda^{2k}C_\beta^{k+p-p_0}  \exp\(\frac{k}{2}\gamma_{4,k} + \frac{p-p_0}{2}\gamma_{2(\beta-2), p-p_0}\) \right).
\end{multline}

$\bullet$ Finally, if $\beta =4$, we get
\begin{multline}\label{361}
 \log  \int_{\gamma_{2p}=\gamma} \exp(-\beta \F_\lambda(Z_{2p})) \dd Z_{2p}
  \leq p\log(\lambda^{2-\beta}\mc{Z}_\beta)
+\log \Bigg( \sum_{k=0}^{p_0} \binom{p_0}{k}C_\beta^{p-p_0+k} \lambda^{2k+(p-p_0)(\beta-2)} \\
\int_{\Lambda^p} \prod_{i\in [1,k]} (\rr_1(x_i))^{-2}\log \frac{\rr_1(x_i)}{\lambda}\prod_{i\in [p_0+1,  \dots, p]}(\rr_1(x_i))^{-2}   \dd x_1 \dots \dd x_p\Bigg).
\end{multline}
Inserting \eqref{eq:two species} with $k'= p-p_0$ and $\delta=1$,
we obtain
\begin{multline*}
 \log  \int_{\gamma_{2p}=\gamma} \exp(-\beta \F_\lambda(Z_{2p})) \dd Z_{2p} \leq 
   p\log(\lambda^{2-\beta}\mc{Z}_\beta)+O_\beta(\log N)+ p\log N \\
+\max_{0\leq k\leq p_0}\log \left(\binom{p_0}{k}\lambda^{2k+(\beta-2)(p-p_0)}C_\beta^{k+p-p_0}\exp\(\frac{k}{2}\gamma_{4,k} + \frac{p-p_0}{2}\gamma_{4, p-p_0}\) \right) .
\end{multline*}
We may now group all the cases  under the formula
\begin{multline}\label{361-2}
 \log  \int_{\gamma_{2p}=\gamma} \exp(-\beta \F_\lambda(Z_{2p})) \dd Z_{2p}
  \leq 
 O(p-p_0)+O(\log N)+ p\log N  +p\log (\lambda^{2-\beta}\mc{Z}_\beta) \\
+\max_{0\leq k\leq p_0}\log \left(C_\beta^{k+p-p_0}\binom{p_0}{k}\Bigr(\lambda^{((\beta-2)\wedge2  )}\indic_{\beta>2}+\frac{1}{|\log \lambda|}\indic_{\beta=2}\Bigr)^k \lambda^{(\beta-2)(p-p_0)} \exp\( \frac{p-p_0}{2}\delta_0+\frac{k}{2}\delta_1\) \right)
\end{multline}
with \begin{equation*} 
\delta_0:= \begin{cases}
0 & \text{if $\beta=2$}\\
 \gamma_{2(\beta-2),  k+ p-p_0}=\log (\lambda^{6-2\beta} \frac{k+p-p_0}{N})\indic_{\beta \in (3,4) }+ 
\log (|\log \lambda| \frac{k+p-p_0}{N}) \indic_{\beta=3}& \text{if } \ \beta \in (2,4) \\ 
\gamma_{4,p-p_0}=\log(\lambda^{-2}\frac{p-p_0}{N}\vee |\log\lambda|^{2})
&\text{if } \ \beta=4\\
 \gamma_{2(\beta-2),  p-p_0}=   \log (\lambda^{6-2\beta} \frac{p-p_0}{N} \vee \lambda^{8-2\beta})& \text{if } \  \beta > 4,\end{cases}\end{equation*}
and
\begin{equation*} \delta_1:= \begin{cases} 
0 & \text{if $\beta=2$}\\
\gamma_{2(\beta-2),  k+ p-p_0}=  \log (\lambda^{6-2\beta} \frac{k+p-p_0}{N})\indic_{\beta \in (3,4) }+ 
\log (|\log \lambda| \frac{k+p-p_0}{N}) \indic_{\beta=3}
  & \text{if } \ \beta \in (2,4)\\
\gamma_{4,k} = \log  (\lambda^{-2} \frac{k}N \vee |\log \lambda|^2)   & \text{if} \ \beta\ge 4.
\end{cases}\end{equation*}

\noindent
{\bf{Step 4: optimization over $k$ and $p_0$.}} We now set $x_0= \frac{p-p_0}{p}$ and $x_1= \frac{k}{p}$.  By Stirling's formula, one can write that
\begin{equation*}
    \log \binom{p_0}{k}\leq \log \frac{p_0^k}{k!}\leq k\log p_0-k\log k+k\leq k\log p-k\log k+k +o(k)= p\( -x_1\log x_1+x_1 +o(x_1)\)
\end{equation*}
Hence, setting $x_1=\frac{k}{p}$ and $x_0=\frac{p-p_0}{p}$, 
\begin{align}\label{eq:361-3}
  &  \log \left(\binom{p_0}{k} (\lambda^{k(\beta-2)\wedge 2+ (\beta-2)(p-p_0)}\indic_{\beta=2}+|\log\lambda|^{-k}\indic_{\beta=2})C^{k+p-p_0}\exp\Bigr(\frac{p-p_0}{2}\delta_0 +\frac{k}{2}\delta_1\Bigr)\right)\\ \notag  &\quad +p\log p -p_0\log p_0-(p-p_0)\log(p-p_0)+C_0(p-p_0)\\  \notag &
    \leq p\Bigr(-x_1\log x_1+x_1+(x_1+x_0)((\beta-2)\wedge 2)(\log \lambda)\indic_{\beta>  2}-x_0(\log|\log\lambda|)\indic_{\beta=2}] \Bigr) \\  \notag &\quad  +p\Bigr( C_\beta(x_1+x_0)+\frac{x_0}{2}\delta_0+\frac{x_1}{2}\delta_1\Bigr)\\  \notag &\leq p\max(\phi+f_0,\phi+g_0)(x_0,x_1)+p\max(\phi+f_1,\phi+g_1)(x_0,x_1)
\end{align}
where
\begin{multline*}
    \phi(x_0,x_1)=- (1-x_0)\log (1-x_0) -x_0\log x_0-x_1\log x_1\\ +(x_0+x_1)\((\beta-2)\wedge 2)(\log \lambda)\indic_{\beta>2}-(\log|\log \lambda|)\indic_{\beta=2}+C_\beta\),
\end{multline*}
\begin{equation*}
f_0(x_0,x_1)= \begin{cases} 0 & \text{if} \ 2\leq \beta<3\\
\frac{x_0}{2} \log (|\log \lambda| \frac{p}{N} (x_0+x_1)) 
  & \text{if } \  \beta=3\\
\frac{x_0}{2} \log \(\lambda^{6-2\beta} \frac{p}{N} (x_0+x_1)\)& \text{if} \ 3<\beta<4\\
  \frac{x_0}{2}
   \log (\lambda^{-2}x_0 \frac{p}{N}) & \text{if} \ \beta =4\\
  \frac{x_0}{2}  \log (\lambda^{6-2\beta} x_0\frac{p}{N})  & \text{if} \ \beta >4,\end{cases}
\end{equation*}
\begin{equation*}
f_1(x_0,x_1)= \begin{cases} 0 & \text{if} \ 2\leq \beta<3\\
\frac{x_1}{2} \log (|\log \lambda| \frac{p}{N} (x_0+x_1)) 
  & \text{if } \  \beta=3\\
\frac{x_1}{2} \log \(\lambda^{6-2\beta} \frac{p}{N} (x_0+x_1)\)& \text{if} \ 3<\beta<4\\
  \frac{x_1}{2}\log (\lambda^{-2} x_1 \frac{p}{N})  & \text{if} \ \beta =4\\
   \frac{x_1}{2} \log (\lambda^{-2} x_1\frac{p}{N} )  & \text{if} \ \beta >4,\end{cases}
\end{equation*}
\begin{equation*}
g_0(x_0,x_1)= \begin{cases}
0 & \text{if $\beta\in [2,4)$}\\
 \frac{x_0}{2}\log (|\log \lambda|^2)  & \text{if} \ \beta=4\\
 \frac{x_0}{2}\log (\lambda^{8-2\beta})  & \text{if} \ \beta>4,\end{cases}
\end{equation*}
\begin{equation*}
g_1(x_0,x_1)= \begin{cases}
0 & \text{if $\beta\in [2,4)$}\\
 \frac{x_1}{2}\log( |\log \lambda|^4 ) & \text{if} \ \beta=4\\
  \frac{x_1}{2}\log ( |\log \lambda|^2) & \text{if} \ \beta>4.\end{cases}
\end{equation*}

Let us maximize the concave function $\phi+f_0$. We compute that 
\begin{align*}
    \partial_{x_0}(\phi+ f_0)(x_0,x_1)&=-\log (1-x_0) -\log x_0+\log \lambda^{(\beta-2)\wedge 2}- \log |\log \lambda| \indic_{\beta=2}+O_\beta(1)\\
    & +\begin{cases}0 & \text{if} \ 2\leq \beta<3\\
\frac{1}{2} \log (|\log \lambda| \frac{p}{N} (x_0+x_1)) 
  & \text{if } \  \beta=3\\
\frac{1}{2} \log \(\lambda^{6-2\beta} \frac{p}{N} (x_0+x_1)\)& \text{if} \ 3<\beta<4\\
  \frac{1}{2}
   \log (\lambda^{-2}x_0 \frac{p}{N}) & \text{if} \ \beta =4\\
  \frac{1}{2}  \log (\lambda^{6-2\beta} x_0\frac{p}{N})  & \text{if} \ \beta >4,\end{cases}
     \\ 
    \partial_{x_1} (\phi+f_0)(x_0,x_1)&=-\log x_1 +\log \lambda^{(\beta-2)\wedge 2}- \log |\log \lambda| \indic_{\beta=2}+O_\beta(1),
\end{align*}

Let $(x_0,x_1)$ be a maximizer of $\phi+f_0$.  By the vanishing of the derivatives, we obtain the following.

$\bullet$ The case $\beta=2$. Then $x_0=O_\beta(|\log \lambda|^{-1})$, $x_1=O_\beta(|\log \lambda|^{-1})$.

$\bullet$ The case  $\beta\in (2,3)$. Then $x_0=O_\beta(\lambda^{\beta-2})$, $x_1=O_\beta(\lambda^{\beta-2})$.

$\bullet$ The case $\beta\in (3,4)$. Note that
\begin{align*}
    \partial_x(\phi+ f_0)(x_0,x_1)&=-\log x_0+(\beta-2)\log \lambda+ \frac{1}{2}\log\Bigr(\lambda^{6-2\beta}\frac{p}{N}(x_0+x_1)\Bigr)+O_\beta(1)\\
    \partial_y (\phi+f_0)(x_0,x_1)&=-\log x_1 +(\beta-2)\log\lambda+O_\beta(1),
\end{align*}
Therefore by the second equation, $x_1=\lambda^{\beta-2}e^{O(1)}$. If $x_1 \geq x_0$, then one may bound $x_0+x_1$ by $2x_1$ and we get from the first equation that
\begin{equation*}
    x_0=O(\lambda x_1^{1/2})=O(\lambda^{\frac{\beta}{2}})=O(\lambda^{\beta-2}).
\end{equation*}
If $x_1\leq x_0$, then bounding $x_0+x_1$ by $2x_0$ we get from the first equation that $x_0=O(\lambda^2)$, which is not consistent for $\lambda$ small enough since $x_1=\lambda^{\beta-2}e^{O(1)}$. We conclude that $x_1,x_0=O(\lambda^{\beta-2})$.

$\bullet$ Similarly for $\beta=3$, $x_0,x_1=O_\beta(\lambda^{\beta-2})$.

$\bullet$ For $\beta=4$, we get $x_1=O_\beta(\lambda^2)$, $x_0=O_\beta(\lambda^2)$. 

$\bullet$ For $\beta>4$, we find $x_1=O_\beta(\lambda^2)$ and $x_0=O_\beta(\lambda^{2(5-\beta)})=O_\beta(\lambda^2)$. 

We can synthesize these five cases into the following inequality:
\begin{equation}\label{eq:phi f}
\sup(\phi+ f_0)\leq C_\beta\omega_\lambda'
\end{equation}
where
\begin{equation*}
\omega_\lambda':=\begin{cases}
  \frac{1}{|\log \lambda|}& \text{if $\beta=2$}\\
  \lambda^{\beta-2} & \text{if $\beta\in (2,3]$}\\
    \lambda^{2}& \text{if $\beta\in (3,+\infty)$}.
    \end{cases}
\end{equation*}
Moreover, since $\phi+f_0$ is uniformly concave near $(0,0)$, there exists $c_\beta>0$ such that
\begin{equation*}
    (\phi+f_0)(x_0,x_1)-\max (\phi+f_0)\leq -c_\beta(|x_0|^2+|x_1|^2)+O_\beta(\omega_\lambda').
\end{equation*}

Let us now optimize $\phi+g_0$. Since the variables are separated, it is straightforward to check that 
the maximizer $(x_0, x_1) $ satisfies 
\begin{equation*}
|x_0|\leq C_\beta\begin{cases} 
\frac{1}{|\log \lambda|} & \text{if $\beta=2$}\\
\lambda^{\beta-2}& \text{if} \ \beta \in (2,4)\\
\lambda^2 |\log \lambda| & \text{if} \ \beta=4\\
\lambda^{6-\beta} & \text{if $\beta>4$}
\end{cases}
\end{equation*}
and 
\begin{equation*}
|x_1|\leq C_\beta\begin{cases} 
\frac{1}{|\log \lambda|} & \text{if $\beta=2$}\\
\lambda^{\beta-2}& \text{if} \ \beta \in (2,4)\\
\lambda^2 |\log \lambda|^2 & \text{if} \ \beta=4\\
\lambda^{2}|\log\lambda| & \text{if $\beta>4$}.
\end{cases}
\end{equation*}
It follows that 
\begin{equation}\label{eq:phi g}
\sup(\phi+ g_0)\leq C_\beta\omega_\lambda''
\end{equation}
where 
\begin{equation*}
\omega_\lambda'':=\begin{cases}
  \frac{1}{|\log \lambda| } & \text{if $\beta=2$} \\
  \lambda^{\beta-2} & \text{if $\beta\in (2,4)$}\\
  \lambda^{2}|\log \lambda|^2& \text{if $\beta=4$}\\
\lambda^{2}|\log \lambda|& \text{if $\beta\in (4,+\infty)$}.
    \end{cases}
\end{equation*}

Moreover, since $\phi+g_\beta$ is uniformly concave near $(0,0)$, there exists $c_\beta>0$ such that
\begin{equation*}
    (\phi+g_\beta)(x_0,x_1)-\sup (\phi+g_\beta)\leq -c_\beta(|x_0|^2+|x_1|^2)+O_\beta(\omega''_\lambda).
\end{equation*}
Combining \eqref{eq:phi f} and \eqref{eq:phi g}, we deduce that
\begin{equation*}
\sup \max (\phi+f_0,\phi+g_0)\leq C_\beta \omega''_\lambda
\end{equation*}
and that there exists $c_\beta>0$ such that
\begin{equation*}
  \max (\phi+f_0,\phi+g_0)(x_0,x_1)-\sup \max (\phi+f_0,\phi+g_0)\leq -c_\beta(|x_0|^2+|x_1|^2)+O_\beta(\omega''_\lambda).
\end{equation*}
One can check similarly that
\begin{equation*}
\sup \max (\phi+f_1,\phi+g_1)\leq C_\beta\tilde{\omega}_\lambda
\end{equation*}
and that there exists $c_\beta>0$ such that
\begin{equation*}
  \max (\phi+f_1,\phi+g_1)(x_0,x_1)-\sup \max (\phi+f_1,\phi+g_1)\leq -c_\beta(|x_0|^2+|x_1|^2)+O_\beta(\omega''_\lambda).
\end{equation*}
Combining with \eqref{361-2}, \eqref{eq:361-3} concludes the proof of \eqref{eq:n neutral} and \eqref{eq:n neutral conv}.

When $p\leq \frac{N}{2}$, we use the crude bound
\begin{equation*}
    \sum_{i\in I^{\dip}} \Bigr(\frac{\rr_1(z_i)}{\rr_2(z_i)}\Bigr)^2 + \sum_{i\notin I^{\dip} } \Big(\log \frac{\min(\rr_2(z_i), \rr_2(z_{\phi_1(i)} )) }{\rr_1(z_i)} +C \Bigr)\leq CN,
\end{equation*}
which proves \eqref{eq:crude}.

\end{proof}
 
 \subsection{Main result}
 We may now obtain the main upper bound. 
  \begin{prop}[Upper bound]\label{prop:upper bound}
Let $\omega_\lambda$ be  as in \eqref{eq:defgammal}.
For any $\beta \ge 2$ we have 
\begin{equation}\label{ubz}
 \log \Z\leq 2N\log N+N\Big((2-\beta)(\log \lambda)\indic_{\beta>2}+(\log|\log\lambda|)\indic_{\beta=2})+\log \mc{Z}_{\beta}-1
 +O_\beta(\omega_\lambda)\Big).
\end{equation}where  $\mc{Z}_\beta$ is the constant defined in \eqref{def:Cbeta}. Moreover, for each $p_0\in \{0,\ldots,N\}$, letting
\begin{equation*}
    \mc{A}(p_0):=\{ |I^\dip| =p_0\},
\end{equation*}
 there exists $c_\beta>0$ such that 
\begin{multline}\label{eq:neutral bound}
\int_{\mc{A}(p_0)}e^{-\beta \F_\lambda(X_N,Y_N)}\dd X_N \dd Y_N \leq  2N\log N +N((2-\beta)(\log \lambda)\indic_{\beta>2}+\log|\log \lambda|\indic_{\beta=2})\\
+N(\log \mc{Z}_\beta-1)-c_\beta N \Bigr(\frac{p_0-N}{N}\Bigr)^2+O_\beta(N\omega_\lambda).
\end{multline}

For each $K\geq p\geq p_0\geq p_0'$, let
\begin{multline*}
\mc{A}'(K,p,p_0,p_0'):=\{\gamma_{2N}\in D_{2N,K,p,p_0}\}\\ \cap \{  \# \{\text{twice isolated dipoles in $\gamma_{2N}$ of size larger than $R$}\}\leq p_0'\}.
\end{multline*}
if $x_0>\alpha:=\mu_\beta([R,+\infty))$, then there exists $c_\beta>0$ such that
\begin{multline}\label{eq:u bound}
 \log \int e^{-\beta \F_\lambda(X_N,Y_N)}\indic_{\mc{A}'(K,p,p_0,p_0')}\dd X_N\dd Y_N\leq 2N\log N\\ +N\Bigr((2-\beta) (\log \lambda) \indic_{\beta>2}+(\log|\log\lambda|)+\log \mc{Z}_\beta-1\Bigr) -C_\beta N x_0 \log \Bigr(c_\beta\frac{x_0}{\alpha}\Bigr)+O_\beta(N\omega_\lambda^{\nicefrac{1}{2}}).
  \end{multline}
\end{prop}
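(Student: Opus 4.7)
The plan is to partition $\Z$ according to the joint type $(K,p,p_0)$ of the nearest-neighbor graph---respectively, the number of connected components, isolated $2$-cycles, and isolated neutral $2$-cycles---and then combine Corollary~\ref{coromino} with Proposition~\ref{prop:isolated} on the isolated-pair block and Lemma~\ref{lemma:nearest neigh} on the non-isolated block. Writing
$$\Z = \sum_{N \ge K \ge p \ge p_0} \sum_{\gamma \in D_{2N,K,p,p_0}} \int_{\{\gamma_{2N}=\gamma\}} e^{-\beta \F_\lambda(X_N,Y_N)}\dd X_N \dd Y_N,$$
I would first count $|D_{2N,K,p,p_0}|$ by a refinement of the enumeration used for $|D'_{2N,K,p_0}|$ in Lemma~\ref{lemma:part nn}, inserting a binomial factor to distinguish neutral from non-neutral isolated pairs. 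This produces entropy terms of the form $-p_0\log p_0 - (p-p_0)\log(p-p_0) - (K-p)\log(K-p) - (2N-2K)\log\tfrac{2N-2K}{2N}$ plus $C(N-K)+O(\log N)$, which match precisely the combinatorial prefactor $f(p_0)$ in Proposition~\ref{prop:isolated}.

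For a fixed $\gamma$, Corollary~\ref{coromino} decouples the integrand into: (i) the nearest-neighbor dipole energies on the $p$ isolated $2$-cycles together with the quadratic error $C\sum_{i \in I^{\dip}}(\rr_1/\rr_2)^2$ and the $\log(\rr_2/\rr_1)$ penalty on non-neutral pairs, handled by Proposition~\ref{prop:isolated}; and (ii) on the $2N - 2p$ remaining coordinates lying in the $K - p$ non-isolated components, a sum of $\g_\lambda$-interactions with an additive $-C(N-K)$. I would integrate the non-isolated block by fixing one reference point per non-isolated component (producing $N^{K-p}$) and applying Lemma~\ref{lemma:nearest neigh} to the remaining $2(N-p)$ points with $K-p$ components; this contributes no $\lambda$-dependent divergence, only an $e^{C_\beta(N-p)}$ factor. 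For the isolated block the bound \eqref{eq:n neutral} for $p \geq N/2$ and the crude bound \eqref{eq:crude} for $p < N/2$ yield
$$p\Big[\log N + (2-\beta)(\log\lambda)\indic_{\beta > 2} + (\log|\log\lambda|)\indic_{\beta = 2} + \log \mc{Z}_\beta\Big] - c_\beta p\Big(\tfrac{p-p_0}{p}\Big)^2 + O_\beta(p\omega_\lambda).$$

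Combining the graph count with the two blocks gives $\log \Z \leq 2N\log N + N\Psi(K/N,p/N,p_0/N) + O_\beta(N\omega_\lambda)$ for a function $\Psi$ which I claim is maximized at $(1,1,1)$ with value $(2-\beta)(\log\lambda)\indic_{\beta > 2} + (\log|\log\lambda|)\indic_{\beta = 2} + \log \mc{Z}_\beta - 1$. The optimization in $p$ and $K$ is essentially one-sided: reducing $K$ below $N$ saves entropy but forfeits the per-dipole divergent factor $(2-\beta)\log \lambda$ or $\log|\log\lambda|$, which cannot be recovered from a non-isolated component (contributing only $O(1)$); similarly for reducing $p$ below $K$. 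The local strict concavity in $p_0$ around $p_0=N$ comes directly from \eqref{eq:n neutral conv}. Optimizing produces \eqref{ubz}; restricting the sum to $\mc{A}(p_0)$ produces \eqref{eq:neutral bound}.

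For \eqref{eq:u bound}, I would refine Step~2 of Proposition~\ref{prop:isolated}: each neutral-pair relative coordinate $u_i$ integrated to $\lambda^{2-\beta}\mc{Z}_\beta = \int_0^\infty 2\pi r\, e^{\beta\g_\lambda(r)}\dd r$. Restricting $p_0'$ of these integrals to $|u_i| \geq R$ multiplies those factors by $\mu_\beta([R,\infty)) = \alpha$, and a binomial factor $\binom{p_0}{p_0'}$ accounts for the choice of which dipoles are long. By Stirling, $\binom{p_0}{p_0'}\alpha^{p_0'} \leq \exp(-C_\beta N x_0 \log(c_\beta x_0/\alpha))$ whenever $x_0 \geq c_\beta \alpha$. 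Inserting this refinement into the analysis leading to \eqref{ubz} yields \eqref{eq:u bound}. The main obstacle I anticipate is the joint optimization in $(K,p)$: one must confirm that the entropy gain from $K<N$ or $p<K$ never outweighs the loss of the dipole-divergence factor, including in the near-critical regime $\beta \to 2^+$ where $(2-\beta)\log \lambda$ is small. This relies on the specific $(K-p)\log(K-p)$ and $(2N-2K)\log\tfrac{2N-2K}{2N}$ suppressions in the graph count, and requires invoking the crude estimate \eqref{eq:crude} to rule out the regime $p\ll N$ cleanly.
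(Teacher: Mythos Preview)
Your strategy---partition by $(K,p,p_0)$, enumerate graphs, insert Corollary~\ref{coromino}, split into an isolated-pair block treated by Proposition~\ref{prop:isolated} and a remainder treated by Lemma~\ref{lemma:nearest neigh}, then optimize---is exactly the paper's. Two points need correction.

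First, the non-isolated block \emph{does} carry $\lambda$-divergence: by Lemma~\ref{lemma:nearest neigh} its $K-p$ components contribute $(K-p)\big[(2-\beta)(\log\lambda)\indic_{\beta>2}+(\log|\log\lambda|)\indic_{\beta=2}\big]$ from their $2$-cycles, and for $\beta\ge4$ the tree points add further divergence. So the total divergent coefficient is $K$, not $p$; your heuristic that ``reducing $p$ below $K$ forfeits the divergent factor'' is therefore incorrect, and the optimization over $p$ at fixed $K$ is not governed by the divergent term at all.

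Second, the crucial ingredient you do not identify is the structural constraint $K-p\le C(N-K)$. The paper in fact refines $p$ to mean the number of \emph{twice-isolated} $2$-cycles (those whose second nearest neighbors also lie in $I^\pair$), because the error terms $(\rr_1/\rr_2)^2$ and $\log(\rr_2/\rr_1)$ in Corollary~\ref{coromino} are restricted precisely to such indices; this is what makes the block decoupling in Proposition~\ref{prop:isolated} legitimate, since the relevant $\rr_2$'s are then computable within the block. The kissing-number argument of Remark~\ref{remark:maximal5} then yields $K-p\le 2(M_0+1)(N-K)$, which is why $M_0$ enters the definition of $\omega_\lambda$. This constraint is essential in the optimization (Step~3 of the paper's proof): the function $\chi_\beta(K/N,p/N,p_0/N)$ is maximized over the constrained set $\{x_1-x_2\le 2(M_0+1)(1-x_1)\}$, and it is the boundary of that set that forces the maximizer to $(1,1,1)$. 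Your anticipated obstacle is real, and this constraint is its resolution.

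Your treatment of \eqref{eq:u bound} via the $\binom{p_0}{p_0'}\alpha^{p_0'}$ refinement is correct; the paper does this in its Step~4, additionally using Cauchy--Schwarz to decouple the large-dipole indicator from the residual $(\rr_1/\rr_2)^2$ errors.
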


\medskip

\begin{proof} Again, we split the phase-space according to the nearest neighbor graph $\gamma_{2N}$ of the points. 

We say that a $2$-cycle $\{i,j\}$ is twice isolated if $\phi_2(i), \phi_2(j)$ are in an isolated 2-cycle, that is in $I^\pair$.

Let us denote $D_{2N,K,p,p_0}$ the set of functional digraphs on $\{1,\ldots,2N\}$ with $K$ connected components, $p$ twice isolated $2$-cycles, $p_0$ twice isolated neutral $2$-cycles.

\noindent
{\bf{Step 1: enumeration of graphs.}} Let us enumerate $D_{2N,K,p,p_0}$. Let $\gamma\in D_{2N,K,p,p_0}$. Consider the ordered connected components of $\gamma$.

$\bullet$ One must first divide by $K!$ to account for the fact that connected components are indistinguishable.

$\bullet$ We then select the connected components that are a twice isolated $2$-cycle, a neutral twice isolated $2$-cycle. This gives 
\begin{equation*}
    \frac{K!}{p_0!(p-p_0)!(K-p)!}
\end{equation*}
possible choices.

$\bullet$ Labelling the vertices of the $p_0$ twice isolated neutral $2$-cycles gives
\begin{equation*}
    \frac{(N!)^2}{((N-p_0)!)^2}=\frac{(2N)!}{2^{2p_0}(2N-2p_0)!}e^{O(\log N)}
\end{equation*}
possible choices (after using Stirling's formula).

$\bullet$ The number of ways to label the other $2$-cycles is bounded by the number of ways to form $K-p_0$ pairs with $2N-2p_0$ points, which is
\begin{equation*}
    \frac{(2N-2p_0)!}{(2N-2K)!2^{K-p_0}}.
\end{equation*}

$\bullet$ Finally, assigning a nearest-neighbor to each vertex not in a $2$-cycle gives us fewer than $(2N)^{2N-2K}$ possible choices.

Combining the above we obtain 
\begin{equation*}
    |D_{2N,K,p,p_0}|\leq \frac{(2N)!(2N)^{2N-2K}}{(2N-2K)!2^{K+p_0} p_0!(p-p_0)!(K-p)! }e^{O(\log N)}.
\end{equation*}
Hence
\begin{multline*}
    \log |D_{2N,K,p,p_0}|=2N\log (2N)-2N+(2N-2K)\log (2N)-(2N-2K)\log (2N-2K)+2N-2K\\-(K+p_0)\log 2-p_0\log p_0+p_0-(p-p_0)\log (p-p_0)+p-p_0-(K-p)\log(K-p)+K-p+O(\log N)\\
   =2N\log (2N)-(2N-2K)\log \frac{N-K}{N}-(K+p_0)\log 2\\ -p_0\log p_0-(p-p_0)\log (p-p_0)-(K-p)\log(K-p)-K+O(\log N).
\end{multline*}

Let us denote by $m$ the total number of isolated $2$-cycles. 
By Remark \ref{remark:maximal5}, a point in $\dR^2$ can be the second nearest neighbor of at most $M_0:=6$ points hence  since $m-p$ is the number of non-twice-isolated 2-cycles and $2N-2K$ the number of indices not in $I^\pair$, we have the crucial relation
\begin{equation}\label{eq:r1}
    m-p\leq 2M_0(N-K).
\end{equation}
Besides, by adding the number of points in $2$-cycles and the number of points in trees, one can see that
\begin{equation*}
    2K+(K-m)\leq 2N,
\end{equation*}
which gives
\begin{equation}\label{eq:r2}
    K-m\leq 2(N-K).
\end{equation}
Summing this and \eqref{eq:r1} we conclude that
\begin{equation}\label{eq:K-p}
    K-p\leq 2(M_0+1)(N-K).
\end{equation}
Hence absorbing terms into $N-K$ and $K-p_0$, we obtain 
that there exists $C_0>0$ such that
\begin{multline}\label{eq:logD}
    \log |D_{2N,K,p,p_0}|\leq  2N\log N+ (2N-2K)\log \frac{N-K}{N} -p_0\log p_0-(p-p_0)\log (p-p_0)\\ -(K-p)\log(K-p)-N+O(\log N)+C_0(N-K)+C_0(p-p_0).
\end{multline}

\noindent
{\bf Step 2: Inserting the energy lower bound and splitting the integral.}

For each $p_0\leq p\leq K\leq N$, let us define
\begin{equation*}
    I(K,p,p_0):=\max_{\gamma\in D_{2N,K,p,p_0}}\log \int_{\gamma_{2N}=\gamma}e^{-\beta \F_\lambda(X_N,Y_N)}\dd X_N \dd Y_N.  
\end{equation*}

To bound $I(K,p,p_0)$ from above, we start by inserting the lower bound \eqref{minoF02 bis} into the definition of $I(K,p,p_0)$, which gives
\begin{multline}\label{gre1}
I(K,p,p_0)\leq \max_{\gamma\in D_{2N,K,p,p_0}}\log \int_{\Lambda^{2N}} \exp\Bigg(-\beta \Bigg(\F_\lambda^{\nn}(X_N, Y_N) \\+\sum_{\substack{i\in I^{\pair}\backslash I^{\dip}\\ \phi_2(i) \in I^\pair, \phi_2(\phi_1(i)) \in I^\pair}}\( \log 
  \frac{\min(\rr_2(z_i), \rr_2(z_{\phi_1(i)}))}{\rr_1(z_i)}-C\)
-C \sum_{\substack{i \in I^{\dip} ,\phi_2(i)\in I^\pair\\ \phi_2(\phi_1(i))\in I^\pair }}
   \( \frac{\rr_1(z_i)}{\rr_2(z_i)}\)^2\Bigg)\Bigg)\dd X_N\dd Y_N\\+C_\beta(N-K),
\end{multline}
where we recall that $I^\pair$ and $ I^\dip$ are defined in \eqref{defIdip}.

Let $\gamma\in D_{2N,K,p,p_0}$. Let $I$ be the indices in $\{1,\ldots,2N\}$ in twice isolated $2$-cycles. We next split
\begin{multline*}
  \F_\lambda^{\nn}(X_N,Y_N)+\sum_{\substack{i\in I^{\pair}\backslash I^{\dip} \phi_2(i) \in I^\pair\\ \phi_2(\phi_1(i)) \in I^\pair}}\( \log 
  \frac{\min(\rr_2(z_i), \rr_2(z_{\phi_1(i)}))}{\rr_1(z_i)}-C\)\\-C \sum_{\substack{i \in I^{\dip}, \phi_2(i)\in I^\pair\\  \phi_2(\phi_1(i)) \in I^\pair}}
   \( \frac{\rr_1(z_i)}{\rr_2(z_i)}\)^2:=F^{(1)}+F^{(2)},
\end{multline*}
with
\begin{multline}\label{def:F(1)}
    F^{(1)}((z_i)_{i\in I}):=-\frac{1}{2}\sum_{i=1}^p \g_\lambda(z_i-z_{\phi_1(i)} )+ \sum_{\substack{i\in I^{\pair}\backslash I^{\dip}\\ \phi_2(i) \in I^\pair, \phi_2(\phi_1(i)) \in I^\pair}}\( \log 
  \frac{\min(\rr_2(z_i), \rr_2(z_{\phi_1(i)}))}{\rr_1(z_i)}-C\)\\-C \sum_{\substack{i \in I^{\dip}, \phi_2(i)\in I^\pair}}
   \( \frac{\rr_1(z_i)}{\rr_2(z_i)}\)^2
\end{multline}
and 
\begin{equation}\label{def:F(2)}
    F^{(2)}((z_i)_{i\in I^c} ):=-\frac{1}{2}\sum_{\substack{i\notin \{1,\ldots,p\}\\ \cup \{N+1,\ldots,N+p\} }}\g_\lambda(z_i-z_{\phi_1(i)}).
\end{equation}
We have split the energy between the contribution of points in twice isolated $2$-cycles, which we can treat using Proposition \ref{prop:isolated}, and the contribution of the other 2-cycles and of the connected components of cardinality strictly larger than $2$. The main point of this choice is that the variables of $F^{(1)}$ and $F^{(2)}$ are separated. Denoting by  $\gamma^1$ the restriction of $\gamma$ to $\{1,\ldots,p\}\cup\{N+1,\ldots,N+p\}$ and $\gamma^2$ the restriction of $\gamma$ to $\{1,\ldots,2N\}\setminus (\{1,\ldots,p\}\cup\{N+1,\ldots,N+p\})$, we may write
\begin{multline}\label{gred3}
    I(K,p,p_0)\leq \max_{\gamma\in D_{2N,K,p,p_0}}\Bigr(\log\int_{\Lambda^{2p}\cap\{\gamma_{2p}((z_i)_{i\in I} )=\gamma^1\} } e^{-\beta F^{(1)}}\prod_{i\in I}\dd z_i \\ + \log\int_{\Lambda^{2N-2p}\cap \{\gamma_{2N-2p}((z_i)_{i\in I^c} )=\gamma^2\}}e^{ -\beta F^{(2)}} \prod_{i\in I^c}\dd z_i+C_\beta(N-K)\Bigr). 
\end{multline}

Applying Proposition \ref{prop:isolated} for the constant $C_0$ in the right-hand side of \eqref{eq:logD}, we get that for $p\geq \frac{N}{2}$, there exists $c_\beta>0$ such that
 \begin{multline*}
    \log\int_{\Lambda^{2p}\cap\{\gamma_{2p}((z_i)_{i\in I} )=\gamma^1\} } e^{-\beta F^{(1)}}\prod_{i\in I}\dd z_i+p\log p-p_0\log p_0-(p-p_0)\log(p-p_0)+C_0(p-p_0)\\ \leq p\Bigr(\log N+(2-\beta)\log \lambda+\log \mc{Z}_\beta\Bigr)+p\, O_\beta(\omega_\lambda)-c_\beta p\Bigr(\frac{p-p_0}{p}\Bigr)^2,
 \end{multline*}
  where
 \begin{equation*}
\omega_\lambda:=
\begin{cases}
  \frac{1}{|\log \lambda|}& \text{if $\beta=2$}\\
  \lambda^{\beta-2}& \text{if $\beta\in (2,4)$}\\
   \lambda^{2}|\log \lambda|^2& \text{if $\beta=4$}\\
\lambda^{2}|\log \lambda|& \text{if $\beta\in (4,+\infty)$}.
    \end{cases}  
 \end{equation*}
 When $p\geq \frac{N}{2}$, one can write
 \begin{equation*}
     p\Bigr(\frac{p-p_0}{p}\Bigr)^2\geq 2N\Bigr(\frac{p-p_0}{N}\Bigr)^2, 
 \end{equation*}
 which gives 
 \begin{multline}\label{eq:F(1)}
    \log\int_{\Lambda^{2p}\cap\{\gamma_{2p}((z_i)_{i\in I} )=\gamma^1\} } e^{-\beta F^{(1)}}\prod_{i\in I}\dd z_i+p\log p-p_0\log p_0-(p-p_0)\log(p-p_0)+C_0(p-p_0)\\ \leq p\Bigr(\log N+(2-\beta)(\log \lambda)\indic_{\beta>2}+(\log|\log\lambda|)\indic_{\beta=2} +\log \mc{Z}_\beta\Bigr)+p\, O_\beta(\omega_\lambda)-c_\beta N\Bigr(\frac{p-p_0}{N}\Bigr)^2,
 \end{multline}
 for some $c_\beta>0$.
 
On the other hand,  since $\gamma^2$ has $K-p$ connected components and $2(N-p)$ points, we find by Lemma \ref{lemma:nearest neigh} 
 \begin{multline}\label{eq:F(2)}
\log\int_{\Lambda^{2N-2p}\cap \{\gamma_{2N-2p}((z_i)_{i\in I^c} )=\gamma^2\}}e^{ -\beta F^{(2)}} \prod_{i\in I^c}\dd z_i \leq (K-p)\log N+(K-p)(2-\beta)(\log \lambda)\indic_{\beta>2}\\+(K-p)\log|\log\lambda|\indic_{\beta=2} +(K-p)\log\mc{Z}_\beta+(2N-2K)\Bigr(\Bigr(1-\frac{\beta}{4}\Bigr)\log \frac{N}{N-K}\indic_{\beta<4}\\+\log\Bigr(|\log\lambda|+\log \frac{N}{N-K} \Bigr)\indic_{\beta=4}+\Bigr(2-\frac{\beta}{2}\Bigr)(\log \lambda) \indic_{\beta>4}\Bigr) +C_\beta(N-K).
 \end{multline}

 Inserting \eqref{eq:F(1)} and \eqref{eq:F(2)} into \eqref{gre1}, we find that when $p\geq \frac{N}{2}$,
 \begin{multline*}
     I(K,p,p_0)+p\log p-p_0\log p_0-(p-p_0)\log(p-p_0)+C_0(p-p_0)\\
     \leq K\log N+K(2-\beta)(\log \lambda) \indic_{\beta>2}+K\log|\log\lambda|\indic_{\beta=2}+N\log \mc{Z}_\beta+NO_\beta(\omega_\lambda)+C_\beta(N-K)\\+(2N-2K)\Bigg(\Bigr(1-\frac{\beta}{4}\Bigr)\log \frac{N}{N-K}\indic_{\beta<4}+\log\Bigr(|\log\lambda|+\log \frac{N}{K-K} \Bigr)\indic_{\beta=4}+\Bigr(2-\frac{\beta}{2}\Bigr)(\log \lambda) \indic_{\beta>4}\Bigg) \\-c_\beta N\Bigr(\frac{p-p_0}{N}\Bigr)^2+O_\beta(\log N),
 \end{multline*}
 for some constant $c_\beta>0$. When $p<\frac{N}{2}$, inserting the crude bound \eqref{eq:crude} of Proposition \ref{prop:auxk}, we find that the error term $O_\beta(\omega_\lambda)$ is replaced by $O_\beta(1)$. Assembling this with \eqref{eq:logD}, we get that 
\begin{multline}\label{eq:2cases p}
    \log (|D_{2N,K,p,p_0}|I(K,p,p_0))\leq 2N\log N+N(\log \mc{Z}_\beta-1)+O(N\omega_\lambda\indic_{p\geq \frac{N}{2}}+N\indic_{p<\frac{N}{2}}  )\\ +J(K,p,p_0)
\end{multline}
where
\begin{multline*}
   J(K,p,p_0):=-K\log \frac{K}{N}+K\log K-(K-p)\log(K-p)-p\log p+K(2-\beta)(\log \lambda)\indic_{\beta>2}\\+K(\log|\log\lambda|)\indic_{\beta=2}-c_\beta N\Bigr(\frac{p-p_0}{p}\Bigr)^2
     +(2N-2K)\Bigr(2-\frac{\beta}{4}\Bigr)\log \frac{N}{N-K}\indic_{\beta<4}
     \\+(2N-2K)\log\Bigr(|\log\lambda|\frac{N}{N-K}\Bigr)\indic_{\beta=4}
     +(2N-2K)\log\Bigr(\lambda^{2-\frac{\beta}{2}}\frac{N}{N-K}\Bigr)\indic_{\beta>4} +C_\beta(N-K).
\end{multline*}

\noindent
{\bf{Step 3: optimization.}} Set $x_1:=\frac{K}{N}$, $x_2:=\frac{p}{N}$ and $x_3:=\frac{p_0}{N}$. Recall that $x_1\geq x_2\geq x_3$. By \eqref{eq:K-p}, one also has $x_1-x_2\leq 2(M_0+1)(1-x_1)$. Set 
\begin{equation*}
\mc{A}:=\Bigr\{x_1,x_2,x_3\in [0,1]: x_1-x_2\leq 2(M_0+1)(1-x_1),x_1\geq x_2\geq x_3\Bigr\}.
\end{equation*}
One can observe that
\begin{equation*}
J(K,p,p_0)=N \chi_\beta(x_1,x_2,x_3),
\end{equation*}
where
\begin{multline}\label{defchi}
    \chi_\beta(x_1,x_2,x_3)= -x_1\log x_1-x_2\log x_2-(x_1-x_2)\log (x_1-x_2)+x_1(2-\beta)(\log \lambda )\indic_{\beta>2}\\+x_1(\log|\log\lambda|)\indic_{\beta=2} -2(1-x_1)\Bigr(2-\frac{\beta}{4}\Bigr)\log (1-x_1)\indic_{\beta<4}\\+2(1-x_1)\log\Bigr(\Bigr(|\log\lambda|\indic_{\beta=4}+ \lambda^{2-\frac{\beta}{2}}\indic_{\beta>4}\Bigr)\frac{1}{1-x_1}\Bigr)+C_\beta(1-x_1)-c_\beta\Bigr(1-\frac{x_3}{x_2}\Bigr)^2.
\end{multline}
The maximium of $\chi_\beta$ is clearly attained at $(x_1,x_2,x_3)$ such that $x_3=x_2$. Moreover, letting $\alpha_0:=2(M_0+1)$, we can notice that 

$\bullet$ If $ x_1 \leq \frac{\alpha_0}{\frac{1}{2} + \alpha_0} $, the unconstrained critical point $ x_2 = \frac{x_1}{2} $ is the maximum of $x_2\mapsto  \chi_\beta(x_1,x_2,x_2)$ over the interval $ 0 \leq x_2 \leq x_1 $ under the constraint $x_1-x_2\leq \alpha_0(1-x_1)$.

$\bullet$ If $ x_1 > \frac{\alpha_0}{\frac{1}{2} + \alpha_0} $, the maximum of $x_2\mapsto  \chi_\beta(x_1,x_2,x_2)$ occurs at the boundary, i.e $x_2 = x_1 - \alpha_0(1 - x_1)$.

Finally, for $\lambda$ small enough, the maximum is attained for $x_1\geq \frac{\alpha_0}{\frac{1}{2}+\alpha_0}$. Therefore for $\lambda$ small enough,
\begin{equation*}
    \sup_{\mc{A}} \chi_\beta \leq \sup_{x_1\geq \frac{2\alpha_0}{1+2\alpha_0} }\psi_\beta(x_1),
\end{equation*}
where 
\begin{multline*}
    \psi_\beta(x):= -x\log x-\left(x- \alpha_0(1 - x)\right) \log\left(x - \alpha_0(1 - x)\right) - \alpha_0(1 - x) \log\left(\alpha_0(1 - x)\right)
  \\+x(2-\beta)(\log \lambda) \indic_{\beta>2}+x(\log|\log\lambda|)\indic_{\beta=2} -2(1-x)\Bigr(2-\frac{\beta}{4}\Bigr)\log (1-x)\indic_{\beta<4}\\+2(1-x)\log\Bigr(\Bigr(|\log\lambda|\indic_{\beta=4}+ \lambda^{2-\frac{\beta}{2}}\indic_{\beta>4}\Bigr)\frac{1}{1-x}\Bigr)+C_\beta(1-x).
\end{multline*}
Observe that $\psi_\beta$ is uniformly concave.

$\bullet$ The case $\beta\in [2,4)$. 
We compute
\begin{multline*}
\psi_\beta'(x)=-\log x-(1+\alpha_0)\log\left(x - \alpha_0(1 - x)\right) +\alpha_0 \log\left(\alpha_0(1 - x)\right)\\+(2-\beta)(\log \lambda) \indic_{\beta>2}+(\log |\log \lambda|)\indic_{\beta=2} +\Bigr(4-\frac{\beta}{2}\Bigr)\log (1-x)+\Bigr(4-\frac{\beta}{2}\Bigr)-C_\beta'.
\end{multline*}
One can check that this vanishes at
$$x= 1-O_\beta(\lambda^{\frac{2\beta-4}{8-\beta+2\alpha_0}}\indic_{\beta>2}+|\log \lambda|^{-\frac{1}{3+\alpha_0}}\indic_{\beta=2}),$$
which yields
\begin{equation*}
    \sup \psi_\beta=O_\beta(\lambda^{\frac{2\beta-4}{8-\beta+2\alpha_0}}\indic_{\beta>2}+|\log \lambda|^{-\frac{1}{3+\alpha_0}}\indic_{\beta=2}).
\end{equation*}

$\bullet$ The case $\beta\geq 4$. One may compute
\begin{multline*}
    \psi_\beta'(x)=-\log x-(1+\alpha_0)\log\left(x - \alpha_0(1 - x)\right) +\alpha_0\log(\alpha_0(1-x))+2\log(1-x) \\-2\log \lambda-2(\log|\log\lambda|)\indic_{\beta=4}-C'_\beta.
\end{multline*}
We find this time that the maximum is achieved at $x=1-O_\beta(\lambda^{\frac{2}{2+\alpha_0}} )$ if $\beta>4$   and $x=1-O_\beta(\lambda^{\frac{2}{2+\alpha_0}} |\log \lambda|^{\frac{2}{2+\alpha_0}})$ if $\beta=4$.

Letting
\begin{equation*}
    \omega_\lambda'=\begin{cases}
   |\log \lambda|^{-\frac{1}{5+M_0 }} & \text{if $\beta=2$}\\
      \lambda^{\frac{2\beta-4}{12-\beta+M_0 }} & \text{if $\beta\in (2,4)$}\\
      \lambda^{\frac{1}{2+M_0}}|\log \lambda|^{\frac{1}{2+M_0}} & \text{if $\beta=4$}\\
      \lambda^{\frac{1}{2+M_0}} & \text{if $\beta>4$},
    \end{cases}
\end{equation*}
we therefore have
\begin{equation*}
    \sup \chi_\beta=O_\beta(\omega_\lambda').
\end{equation*}

Moreover one can check that there exists $c_\beta>0$ such that for all $(x_1,x_2,x_3)\in \mc{A}$,
\begin{equation}\label{eq:quadra x3}
    \chi_\beta(x_1,x_2,x_3)-\chi_\beta(1,1,1)\leq -c_\beta (1-x_3)^2+O_\beta(\omega_\lambda').
\end{equation}
Indeed, since $\psi_\beta$ is uniformly concave and since its minimizer $x$ is such that $x=1-O_\beta(\omega'_\lambda)$, we have that for all $x_1\in [0,1]$,
\begin{equation*}
\psi_\beta(x_1)-\psi_\beta(1)\leq -c_\beta(x_1-1)^2+O_\beta(\omega'_\lambda),
\end{equation*}
for some $c_\beta>0$. Moreover since $3x_1-x_2\le2$, we have $1-x_1\geq \frac{1}{3}(1-x_2)$. Hence, in view of \eqref{defchi} there exists $c_\beta>0$ such that
\begin{equation*}
  \chi_\beta(x_1,x_2,x_3)-\chi_\beta(1,1,1)\leq - c_\beta((x_2-1)^2+(x_3-x_2)^2)+O_\beta(\omega_\lambda')\leq -c_\beta'((x_3-1)^2)+O_\beta(\omega_\lambda').
\end{equation*}
By \eqref{eq:2cases p}, absorbing the error term $O_\beta(\omega_\lambda)$ into $O_\beta(\omega_\lambda')$ we get that for $p\geq \frac{N}{2}$,
\begin{multline*}
  \log(|D_{2N,K,p,p_0}|I(K,p,p_0)|) \leq 2N\log N+N(\log \mc{Z}_\beta-1)+N(2-\beta)(\log \lambda)\indic_{\beta>2}\\+N(\log |\log \lambda|)\indic_{\beta=2}-Nc_\beta\Bigr(\frac{N-p_0}{N}\Bigr)^2+O_\beta(N\omega_\lambda').
\end{multline*}
On the other hand, since $K-p\leq \alpha_0(N-K)$, for $p<\frac{N}{2}$ we have $K\leq \frac{2\alpha_0+1}{2+2\alpha_0} N$. Thus, in that case,
\begin{multline*}
   \log(|D_{2N,K,p,p_0}|I(K,p,p_0)|)\leq 2N\log N-\frac{2\alpha_0+1}{2+2\alpha_0}N( (2-\beta)(\log \lambda) \indic_{\beta>2}+(\log|\log\lambda|)\indic_{\beta=2})+O_\beta(N)\\
    \leq 2N\log N+N(\log \mc{Z}_\beta-1)+N(2-\beta)(\log \lambda)\indic_{\beta>2}+N(\log |\log \lambda|)\indic_{\beta=2}-Nc_\beta\Bigr(\frac{N-p_0}{N}\Bigr)^2\\+O_\beta(N\omega_\lambda'),
\end{multline*}
provided $\lambda$ is small enough, which concludes the proof of \eqref{ubz} and \eqref{eq:neutral bound}. Note that the proof also yields the estimate
\begin{equation}\label{eq:alt}
    \int e^{-\beta G(Z_{2N})} \dd Z_{2N}=2N\log N+N((2-\beta)(\log\lambda)\indic_{\beta>2}+(\log|\log\lambda|)\indic_{\beta=2}+\log\mc{Z}_\beta-1)+O(N\omega_\lambda')
\end{equation}
where
\begin{equation*}
 G:=-\hal \sum_{ i=1}^{2N} \g_\lambda(z_i-z_{\phi_1(i)} )-  \sum_{\substack{i\in I^{\pair}\backslash I^{\dip}\\ \phi_2(i) \in I^\pair, \phi_2(\phi_1(i)) \in I^\pair}}\( \log 
  \frac{\rr_2(z_i) \wedge \rr_2(z_{\phi_1(i)})}{\rr_1(z_i)}-C\).
\end{equation*}

\medskip

\noindent
{\bf{Step 4: control on the size of dipoles.}}
Let us now provide a control on the size of dipoles. For each $K\geq p\geq p_0\geq p_0'$, let
\begin{multline*}
\mc{A}(K,p,p_0,p_0'):=\{\gamma_{2N}\in D_{2N,K,p,p_0}\}\\ \cap \{  \# \{\text{twice isolated dipoles in $\gamma_{2N}$ of size larger than $R$}\}\leq p_0'\}.
\end{multline*}
Fix $K\geq p\geq p_0\geq p_0'$ with $p_0=N(1-O_\beta((\omega_\lambda')^{\nicefrac{1}{2}} ))$ and set $x_0=\frac{N-p_0'}{N}$. 
One can write
\begin{multline*}
 \log \int \indic_{\mc{A}(K,p,p_0,p_0')}e^{-\beta \F_\lambda(Z_{2N})}\dd Z_{2N}\\=\log \Bigr(\max_{\gamma\in D_{2N,K,p,p_0}}\log \int_{\gamma_{2N}=\gamma}e^{-\beta \F_\lambda(X_N,Y_N)}\indic_{\mc{A}(K,p,p_0,p_0')}\dd X_N\dd Y_N\Bigr)+O(\log N). 
\end{multline*}
Fix a graph $\gamma\in D_{2N,K,p,p_0}$. Assume that the charges in twice isolated $2$-cycles are $(z_i)_{i\in I}$, $|I|=2p$ and that the positive charges in twice isolated dipoles are $x_1,\ldots,x_{p_0}$ with respective nearest-neighbor $y_1,\ldots,y_{p_0}$. Using the notation of Step 2, 
\begin{multline}\label{eq:A'}
   \int_{\gamma_{2N}=\gamma}e^{-\beta \F_\lambda(X_N,Y_N)}\indic_{\mc{A}(K,p,p_0,p_0')}\dd X_N\dd Y_N\\ \leq \binom{p_0}{p_0'} \int_{\Lambda^{2p}\cap\{\gamma_{2p}((z_i)_{i\in I})=\gamma^1\} } e^{-\beta F^{(1)}}\prod_{i=p_0'+1}^{p_0}\indic_{|x_i-y_i|\geq R} \prod_{i\in I}\dd z_i \\  \times \int_{\Lambda^{2N-2p}\cap \{\gamma_{2N-2p}((z_i)_{i\in I^c} )=\gamma^2\}}e^{ -\beta F^{(2)}} \prod_{i\in I^c}\dd z_i+C_\beta(N-K)\Bigr). 
\end{multline}
Let us prove that for all $C_0>0$,
\begin{multline}\label{eq:claim F(1)}
  \log \int_{\Lambda^{2p}\cap\{\gamma_{2p}((z_i)_{i\in I})=\gamma^1\} } e^{-\beta F^{(1)}}\prod_{i=1}^{p_0'}\indic_{|x_i-y_i|\leq R} \prod_{i\in I}\dd z_i+p\log p-p_0\log p_0-(p-p_0)\log(p-p_0)+C_0(p-p_0)\\ \leq p\log N+p((2-\beta)(\log \lambda) \indic_{\beta>2}+(\log|\log\lambda)|\indic_{\beta=2}+\log\mc{Z}_\beta)+N((1-x_0)\log \mu_\beta([0,R])\\+x_0 \log \mu_\beta([R,+\infty])+O_\beta(N\omega'_\lambda)-c_\beta N\Bigr(\frac{p-p_0}{N}\Bigr)^2,
\end{multline}
where $\mu_\beta$ is the probability measure defined in \eqref{eq:defmu}. 

Let $I'=I\setminus(\{1,\ldots,p_0\}\cup\{N+1,\ldots,N+p_0\})$. Let us prove \eqref{eq:claim F(1)}. One can write
\begin{multline*}
\int_{\Lambda^{2p}\cap\{\gamma_{2p}((z_i)_{i\in I} )=\gamma^1\} } e^{-\beta F^{(1)}}\prod_{i=p_0'+1}^{p_0}\indic_{|x_i-y_i|\geq R} \prod_{i\in I}\dd z_i\\ 
\leq \int \indic_{\gamma_{2p}=\gamma^1}\prod_{i=1}^{p_0'} e^{\beta \g_\lambda(x_i-y_i)}e^{C(\frac{\rr_1(x_i)}{\rr_2(x_i) })^2} \prod_{i=p_0'+1}^{p_0}e^{\beta \g_\lambda(x_i-y_i)}e^{C(\frac{\rr_1(x_i)}{\rr_2(x_i) })^2}\indic_{|x_i-y_i|\geq R} \\ \times \prod_{i\in I'}e^{\frac{\beta}{2} \g_\lambda(z_i-z_{\phi_1(i)}) }\Bigr(\frac{\rr_i(z_i)}{\rr_2(z_i)}\Bigr)^{\beta} \prod_{i\in I}\dd z_i.
\end{multline*}
Using $\frac{\rr_1}{\rr_2}\leq 1$, this can be bounded from above as follows:
\begin{multline*}
  \int_{\Lambda^{2p}\cap\{\gamma_{2p}((z_i)_{i\in I} )=\gamma^1\} } e^{-\beta F^{(1)}}\prod_{i=p_0'+1}^{p_0}\indic_{|x_i-y_i|\geq R} \prod_{i\in I}\dd z_i\\
  \leq \int \indic_{\gamma_{2p}=\gamma^1}\prod_{i\in I}e^{\frac{\beta}{2}\g_\lambda(z_i-z_{\phi_1(i)}) }\prod_{i=p_0'+1}^{p_0}\indic_{|x_i-y_i|\geq R}e^{C\sum_{i\in I^\dip}(\frac{\rr_1(x_i)}{\rr_2(x_i)})^2 }\prod_{i\in I}\dd z_i.
\end{multline*}
Define the reduced measure
\begin{equation*}
    \dd\mu((z_i)_{i\in I})\propto \indic_{\{\gamma^{2p}=\gamma^1\}}\prod_{i\in I}e^{\frac{1}{2}\beta \g_\lambda(z_i-z_{\phi_1(i))} ) }\prod_{i\in I}\dd z_i.
\end{equation*}
By Markov's inequaltity,
\begin{multline*}
  \log \int_{\Lambda^{2p}\cap\{\gamma_{2p}((z_i)_{i\in I} )=\gamma^1\} } e^{-\beta F^{(1)}}\prod_{i=p_0'+1}^{p_0}\indic_{|x_i-y_i|\geq R} \prod_{i\in I}\dd z_i- \log \int\indic_{\gamma^{2p}=\gamma^1}\prod_{i\in I}e^{\frac{\beta}{2} \g_\lambda(z_i-z_{\phi_1(i)})) }\prod_{i\in I}\dd z_i\\
  +\frac{1}{2}\log \dE_\mu\Bigr[\prod_{i=p_0'+1}^{p_0} \indic_{\rr_1(x_i)\geq R}\Bigr]+\frac{1}{2}\log \dE_\mu\Bigr[e^{2C\sum_{i\in I^\dip}(\frac{\rr_1(z_i)}{\rr_2(z_i)})^2}\Bigr].
\end{multline*}
By the upper bound proved in Step 3,
\begin{equation*}
    \log \dE_\mu\Bigr[e^{2C\sum_{i\in I^\dip}(\frac{\rr_1(z_i)}{\rr_2(z_i)})^2}\Bigr]\leq C(\beta)N\omega_\lambda'.
\end{equation*}
Moreover, since $\frac{p-p_0}{N}=x_0+O_\beta((\omega_\lambda')^{\nicefrac{1}{2}})$ it is straightforward to check that 
\begin{equation*}
\log \dE_\mu\Bigr[\prod_{i=p_0'+1}^{p_0} \indic_{\rr_1(x_i)\geq R}\Bigr]\leq Nx_0\log \mu_\beta([R,+\infty])+O(N(\omega_\lambda')^{\nicefrac{1}{2}}).
\end{equation*}
In addition, using again the crucial fact that $p_0=N(1-O_\beta((\omega_\lambda')^{\nicefrac{1}{2}}))$, we also get
\begin{multline*}
   \log \int\indic_{\gamma^{2p}=\gamma^1}\prod_{i\in I}e^{\frac{1}{2}\beta \g_\lambda(\rr_1(z_i)) }\prod_{i\in I}\dd z_i=p(\log N+((2-\beta)(\log \lambda) \indic_{\beta>2}+\log \mc{Z}_\beta)-c_\beta\Bigr(\frac{p-p_0}{N}\Bigr)^2 \\+O_\beta(N(\omega_\lambda')^{\nicefrac{1}{2}}).
\end{multline*}
Combining the three last displays concludes the proof of \eqref{eq:claim F(1)}. 

Using \eqref{eq:A'}, Stirling's formula and \eqref{eq:F(2)} we get by proceeding as in Step 2 and Step 3 that 
\begin{multline*}
   \log \int e^{-\beta \F_\lambda(X_N,Y_N)}\indic_{\mc{A}(K,p,p_0,p_0')}\dd X_N\dd Y_N \\ \leq 2N\log N+N\Bigr((2-\beta)( \log \lambda) \indic_{\beta>2}+(\log|\log\lambda|)\indic_{\beta=2}+\log \mc{Z}_\beta-1\Bigr)\\
   +N\Bigr(x_0\log \mu_\beta([R,+\infty])
-(1-x_0)\log (1-x_0)-x_0\log x_0 \Bigr)+O_\beta(N(\omega_\lambda')^{\nicefrac{1}{2}}).
\end{multline*}
Letting $\alpha:=\mu_\beta([R,+\infty))$, there exists $C>0$ such that
\begin{equation*}
  x_0\log \mu_\beta([R,+\infty])
-(1-x_0)\log (1-x_0)-x_0\log x_0\leq x_0\log\Bigr(C\frac{\alpha}{x_0}\Bigr).
\end{equation*}
Therefore if $x_0>\alpha$, then
\begin{multline*}
  \log \int e^{-\beta \F_\lambda(X_N,Y_N)}\indic_{\mc{A}(K,p,p_0,p_0')}\dd X_N\dd Y_N\leq 2N\log N\\ +N\Bigr((2-\beta) \log \lambda \indic_{\beta>2}+\log|\log\lambda|\indic_{\beta=2}+\log \mc{Z}_\beta-1\Bigr) -C_\beta N x_0 \log\Bigr(C\frac{x_0}{\alpha}\Bigr)+O_\beta(N(\omega_\lambda')^{\nicefrac{1}{2}}),
\end{multline*}
which concludes the proof of the proposition.
\end{proof}

\section{Free energy lower bound and proof of Theorem \ref{theorem:th1}}\label{sec4}
We now derive a lower bound on the partition function matching \eqref{ubz}. For that, we use a method inspired from previous work on the one-component plasma and on Ginzburg-Landau, starting in  \cite{gl13,SS2d}, which allows, thanks to the electric formulation of the energy,  to compute the interaction additively in terms of electric potentials defined in disjoint subregions of the space.

\begin{prop}\label{prop:lower bound}
For $\beta \in (2,+\infty)$, we have
\begin{multline}\label{lbz}
    \log \Z   
    \ge 2N \log N + N(2-\beta) \log \lambda   +
N(\log \mc{Z}_{\beta}-1)  \\
+ O_\beta\(N(\lambda^{\beta-2}\indic_{\beta<4}+(\lambda^2|\log\lambda|^2)\indic_{\beta=4}+(\lambda^2|\log\lambda|)\indic_{\beta>4})\).
\end{multline}
For $\beta=2$, we have
\begin{equation}\label{eq:lbz2}
    \log \Z\geq 2N\log N+N\log|\log \lambda|+N(\log \mc{Z}_\beta-1)+O_\beta\Bigr(\frac{N}{|\log \lambda|}\Bigr).
\end{equation}
\end{prop}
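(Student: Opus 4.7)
The plan is to construct explicit dipole-gas test configurations whose contribution to $\Z$ matches the claimed lower bound, using the electric formulation \eqref{eq:rewrF} to decompose $\F_\lambda$ into per-dipole self-energies plus small cross-interactions, and a permutation-symmetry trick to produce the combinatorial factor $N!$ that accounts for the $-1$ in the constant. Fix an intermediate scale $\tau$ with $\lambda \ll \tau < 1$. For each permutation $\sigma$ of $\{1,\ldots,N\}$, let
\begin{equation*}
  \mc{E}_\sigma := \bigl\{(\XN,\YN) \in \Lambda^{2N} : |y_{\sigma(i)} - x_i| \le \tau \text{ for all } i\bigr\},
\end{equation*}
further restricted so that the pairing $\sigma$ is uniquely recovered from the configuration (for instance, each $y$-charge has a unique nearest $x$-charge). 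Up to a null set the $\mc{E}_\sigma$ are disjoint, and by the symmetry of $e^{-\beta \F_\lambda}$ under relabelings of the $y_i$, each contributes the same integral, giving
\begin{equation*}
  \Z \ge N! \int_{\mc{E}_{\mathrm{id}}} e^{-\beta \F_\lambda(\XN,\YN)}\, \dd \XN \dd \YN.
\end{equation*}
Stirling's formula $\log N! = N\log N - N + O(\log N)$ produces the $-N$ in the expansion.

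Next, write $h_\lambda = \sum_i u_i$ with $u_i := \g \ast (\delta^{(\lambda)}_{x_i} - \delta^{(\lambda)}_{y_i})$. Integration by parts in \eqref{eq:rewrF} yields the exact identity
\begin{equation*}
  \F_\lambda(\XN,\YN) = -\sum_{i=1}^N \g_\lambda(y_i - x_i) + \sum_{i<j} E_{\mathrm{dd}}(i,j),
\end{equation*}
where $E_{\mathrm{dd}}(i,j) := \g_\lambda(x_i-x_j) - \g_\lambda(x_i-y_j) - \g_\lambda(y_i-x_j) + \g_\lambda(y_i-y_j)$ is a mixed second difference of $\g_\lambda$. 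On $\mc{E}_{\mathrm{id}}$, bounding $\mathrm{err} := \sum_{i<j} E_{\mathrm{dd}}(i,j)$ (Taylor expansion gives $|E_{\mathrm{dd}}(i,j)| \le C|y_i-x_i||y_j-x_j|/|x_i-x_j|^2$ for well-separated pairs, and the crude estimate $|E_{\mathrm{dd}}| \le C|\log\lambda|$ always holds) and changing variables $u_i := y_i - x_i$, the integral factorizes:
\begin{equation*}
  \int_{\mc{E}_{\mathrm{id}}} e^{-\beta \F_\lambda} \ge e^{-\beta\,\sup\mathrm{err}} \prod_{i=1}^N \Bigl(\int_\Lambda \dd x_i \int_{|u|\le \tau} e^{\beta \g_\lambda(u)}\,\dd u\Bigr).
\end{equation*}
The $x$-integration contributes $|\Lambda| = N$ per dipole, and for $\beta > 2$ the $u$-integration equals $\mc{Z}_\beta \lambda^{2-\beta}(1+O((\lambda/\tau)^{\beta-2}))$ by \eqref{def:Cbeta}, respectively $2\pi\log(\tau/\lambda) + O(1)$ for $\beta = 2$. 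Combining with Stirling yields the claimed bound after optimizing $\tau$. An equivalent (and in some ways cleaner) route, in the spirit of \cite{SS2d}, is to tile $\Lambda$ by unit cells and solve a local Neumann problem in each cell, obtaining an exact additive upper bound on $\int |\nabla h_\lambda|^2$ modulo small boundary corrections.

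The crucial difficulty is the uniform control of the dipole--dipole error $\mathrm{err}$. Close pairs ($|x_i-x_j| \lesssim \lambda$) can contribute $O(|\log\lambda|)$ each, and naive summation of the far-pair bound $O(\lambda^2/|x_i-x_j|^2)$ gives at best $O(N\lambda^2\log N)$, which does not match the target rate for every $\beta$. Achieving the precise error rates stated---especially the logarithmic corrections at the critical value $\beta = 4$---requires a multiscale split of close-versus-far pairs combined with local-density estimates for close pairs, or the implementation of a screening construction imposing vanishing boundary electric flux on a partition of $\Lambda$ so that cells decouple exactly.
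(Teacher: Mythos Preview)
Your skeleton is right---reduce to dipole configurations, extract $N!$ by permutation symmetry, integrate the single-dipole weight to produce $\lambda^{2-\beta}\mc{Z}_\beta$---but as you yourself diagnose, the fixed-scale restriction $|y_i-x_i|\le\tau$ leaves the cross term $\sum_{i<j}E_{\mathrm{dd}}(i,j)$ uncontrolled: whenever two $x$-points come within $O(\tau)$ of each other the second difference is of size $|\log\lambda|$, and nothing in the construction prevents this. No choice of fixed $\tau$ will give the stated rates.

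The paper's fix is to make the dipole radius \emph{configuration-dependent}. One restricts to $y_i\in B(x_i,\tfrac12\rr(x_i))$ where $\rr(x_i)=\tfrac12\min_{j\neq i}|x_i-x_j|$ is half the nearest-neighbour distance among the $x$'s only. The balls $B(x_i,\rr(x_i))$ are then disjoint by construction, and in each one the paper solves the local Neumann problem for the single dipole $(x_i,y_i)$. Pasting the local gradients (extended by zero outside the balls) gives a global vector field $E$ with $-\div E=-\Delta h_\lambda$; the $L^2$ projection onto gradients then yields $\int|\nabla h_\lambda|^2\le\int|E|^2=\sum_i\int_{B(x_i,\rr(x_i))}|\nabla u_i|^2$ \emph{exactly}, with no cross terms at all. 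An elementary elliptic estimate on the harmonic part of $u_i$ gives
\[
  \F_\lambda(\XN,\YN)\le -\sum_{i=1}^N\g_\lambda(x_i-y_i)+O\Bigl(\sum_i\frac{|x_i-y_i|^2}{\rr(x_i)^2}\Bigr),
\]
so your uncontrolled $\sum E_{\mathrm{dd}}$ is replaced by a sum of bounded local ratios. Integrating $y_i$ over $B(x_i,\tfrac12\rr(x_i))$ in polar coordinates produces, per dipole, $(\lambda^{2-\beta}\indic_{\beta>2}+|\log\lambda|\indic_{\beta=2})\,\mc{Z}_\beta\bigl(1-C_\beta\varphi_\beta(\rr(x_i))\bigr)$ with $\varphi_\beta$ as in \eqref{defvarphi}; the remaining $x$-integral is handled by Jensen's inequality and the explicit nearest-neighbour law $f(r)=2\pi r e^{-\pi r^2}$ of the Poisson process, which gives exactly the error terms in the statement.

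Your unit-cell Neumann idea is close in spirit but would still leave dipole--dipole interactions \emph{within} each cell, and those are precisely the problematic close pairs. The point is to center the Neumann domains on the dipoles themselves, with radii adapted to the random configuration so that the domains are disjoint and each contains exactly one dipole.
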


\begin{proof}
{\bf Step 1: bounding the energy from above.}
We are going to reduce the integral defining $\Z$ to configurations where $y_i \in B(x_i, \hal \rr(x_i))$ where 
$\rr(x_i) := \hal \min_{i\neq j} |x_j-x_i|$. 
For such configurations let us  bound the energy from above. 
First we recall  \eqref{eq:rewrF}.
We note that for the configurations in the integration set, the balls $B(x_i, \rr(x_i))$ are disjoint and  only contain  the points $x_i $ and $y_i$. 
We then let, for each $i$,  $u_i$ solve 
$$\left\{ \begin{array}{ll}
-\Delta u_i= 2\pi (\delta_{x_i}^{(\lambda)} -\delta_{y_i}^{(\lambda)} )& \text{in} \ B(x_i, \rr(x_i))\\
\frac{\partial u_i}{\partial \nu}= 0 & \text{on} \ \partial B(x_i, \rr(x_i))\end{array}\right.
$$
where $\delta_x^{(\lambda)}$ is as the uniform measure of mass $1$ supported in $B(x, \lambda)$ and $\nu$ the outer unit normal to the ball.

We then define a global ``electric field" $E$ by pasting together the electric fields defined over these disjoint balls:
$$E:= \sum_{i=1}^N \indic_{B(x_i, \rr(x_i))} \nab u_i.$$
Thanks to the crucial choice of zero Neumann boundary conditions on the boundary of the disjoint balls, this vector field satisfies 
\begin{equation}\label{divE}
-\div E= 2\pi \left(\sum_{i=1}^N \delta_{x_i}^{(\lambda)} -\delta_{y_i}^{(\lambda)}\right) = -\Delta h_\lambda \quad \text{in} \ \R^2,\end{equation}
where $h_\lambda$ is the electric potential of the configuration  as in \eqref{eq:heta}.
The trick is then to take advantage of  the $L^2$ projection property onto gradients  to show that the energy can be estimated from above by the $L^2$ norm of $E$:
indeed $$\int_{\R^2} |E|^2= \int_{\R^2} |\nab h_\lambda|^2 + \int_{\R^2}|E- \nab h_\lambda|^2 + 2 \int_{\R^2} (E-\nab h_\lambda) \cdot \nab h_\lambda$$
and the last term vanishes after integration by parts, in view of \eqref{divE}. 
It thus follows that 
\begin{equation}\label{hui}\int_{\R^2} |\nab h_\lambda|^2 \leq  \sum_{i=1}^N \int_{B(x_i, \rr(x_i)) } |\nab u_i|^2, \end{equation} and thus we can reduce the computation to a sum over the disjoint balls.
We next bound the right-hand side.
First we let $v_i:= u_i - (\g*\delta_{x_i}^{(\lambda)} - \g*\delta_{y_i}^{(\lambda)})$. It solves 
$$\left\{ \begin{array}{ll}
-\Delta v_i= 0 & \text{in} \ B(x_i, \rr(x_i))\\
\frac{\partial v_i}{\partial \nu}=  \( - \frac{(x-x_i)}{|x-x_i|^2} + \frac{(x-y_i)}{|x-y_i|^2} \) \cdot \nu   & \text{on} \ \partial B(x_i, \rr(x_i))\end{array}\right.
$$
and thus by elliptic regularity estimates we have 
\begin{equation}\label{estimatesv}
\|\nab v_i\|_{L^\infty(B(x_i, \frac34 \rr(x_i))} \le C \frac{|x_i-y_i|}{\rr(x_i)^2}
.\end{equation}
Now, using \eqref{estimatesv},  \eqref{geta0} and \eqref{eq:defgeta} and decomposing $u_i$ as above, we find
\begin{align*}
\int_{B(x_i, \rr(x_i)) } |\nab u_i|^2 & = 2\pi \int_{B(x_i, \rr(x_i)) } u_i \(\delta_{x_i}^{(\lambda)} - \delta_{y_i}^{(\lambda)}\)
\\ &= 2\pi \( 2 (\g(\lambda)+\kappa) - \int \g*\delta_{x_i}^{(\lambda)} \delta_{y_i}^{(\lambda)} - \int \g*\delta_{y_i}^{(\lambda)} \delta_{x_i}^{(\lambda)} \) + \int v_i \( \delta_{x_i}^{(\lambda)} - \delta_{y_i}^{(\lambda)}\) 
\\
&= 4\pi (\g(\lambda)+\kappa)- 4\pi \iint \g(x-y) \delta_{x_i}^{(\lambda)} (x)\delta_{y_i}^{(\lambda)} (y)  + O\Bigr( \frac{|x_i-y_i|^2}{\rr(x_i)^2}\Bigr)\\
& = 4\pi (\g(\lambda)+\kappa)- 4\pi \g_\lambda(x_i-y_i)+ O\Bigr( \frac{|x_i-y_i|^2}{\rr(x_i)^2}\Bigr).
\end{align*}
Inserting into \eqref{hui} and \eqref{eq:rewrF} we deduce that 
\begin{equation}\label{majof} \F_\lambda(X_N, Y_N) \le  - \sum_{i=1}^N \g_\lambda (x_i-y_i) + O\Bigr(\sum_{i=1}^N \frac{|x_i-y_i|^2}{\rr(x_i)^2}\Bigr).\end{equation}

In all cases, since we have built the configurations so that $|x_i-y_i|\le \hal \rr(x_i)$, the error term is bounded by $O(N)$.
\smallskip 

\noindent {\bf Step 2: bounding the free energy.}
Because of all the possible relabellings of the pairs, we may write  \begin{equation}\label{minolz}
\Z
 \ge N! \int_{ x_i \in \Lambda,  y_i \in B(x_i, \hal \rr(x_i) )  } \exp(- \beta \F_\lambda (X_N, Y_N) ) \dd y_1  \dots \dd y_N \dd x_1 \dots \dd x_N
\end{equation}  
where as above $\rr(x_i) = \hal \min_{j\neq i} |x_j-x_i|$.
We may now insert \eqref{majof} into \eqref{minolz} to obtain
\begin{multline}\label{minz3}
\Z
 \ge N! \int_{\substack{ x_i \in \Lambda\\ y_i \in B(x_i, \hal \rr(x_i) ) } }
\exp\( \beta \sum_{i=1}^N \g_\lambda (x_i-y_i)  + O_\beta\Bigr(  \frac{|x_i-y_i|^2}{\rr(x_i)^2}\Bigr) \) \dd y_1 \dots \dd y_N \dd x_1 \dots \dd x_N 
\\ \ge N! \int_{\Lambda^{N} }  \prod_{i=1}^N \int_{0}^{ \hal \rr(x_i)} 2\pi  r  \exp\Bigr(\beta \g_\lambda(r)  -C_\beta \frac{r^2}{\rr(x_i)^2}  \Bigr)  \dd r \dd x_i.  
\end{multline}
Using \eqref{approxgeta}, we have
\begin{align} \label{align56}
& 
\int_{0}^{ \hal \rr(x_i)} 2\pi  r  \exp\Bigr(\beta \g_\lambda(r) -C_\beta \frac{r^2}{\rr(x_i)^2}  \Bigr)  \dd r 
\\ \notag & = \int_0^{\hal \rr(x_i)}  2\pi  r  \exp\Bigr(\beta \Bigr(\g(\lambda)+\g_1\Bigr(\frac{r}{\lambda}\Bigr)\Bigr) \Bigr)  \dd r   
+O_\beta\Bigr(\int_0^{\hal \rr(x_i)} \frac{ r^{3}}{\rr(x_i)^2 (r\vee\lambda)^{\beta}} \dd r\Bigr)\\
\notag &= 2\pi \lambda^{2-\beta}\( \int_0^{\infty} 2\pi s \exp(\beta \g_1(s)) \dd s - \int_{\frac{1}{2}\frac{\rr(x_i)}{\lambda}}^{\infty } 2\pi s  \exp(\beta \g_1(s)) \dd s\)\indic_{\beta>2}
\\  \notag &+ \( 2\pi \log \frac{\rr(x_i)}{\lambda} +O(1) \) \indic_{\beta=2}
  + O_\beta\Bigr(\int_0^{\hal \rr(x_i)} \frac{ r^{3}}{\rr(x_i)^2 (r\vee\lambda)^{\beta}} \dd r\Bigr).
\end{align}
We also compute that
\begin{align}  \label{eq:error min} & \int_0^{\hal \rr(x_i)} \frac{ r^{3}}{\rr(x_i)^2 (r\vee \lambda)^{\beta}} \dd r\\
 \notag  &\leq C_\beta
  \( \frac{\rr(x_i)^2}{\lambda^{\beta}} \indic_{\rr(x_i)\le 2\lambda }
  +\indic_{\rr(x_i)\ge 2\lambda} \( \frac{\lambda^{4-\beta}}{\rr(x_i)^2}+\rr(x_i)^{2-\beta}\indic_{\beta \in (2,4)} + \frac{\lambda^{4-\beta}  }{\rr(x_i)^2} \indic_{\beta>4} +\frac{1}{ \rr(x_i) ^{2}} \log \frac{\rr(x_i)}{2\lambda}\indic_{\beta=4}\)\) \\
\notag
 &  \le  C_\beta \( \rr(x_i)^{2-\beta}\indic_{\beta \in [2,4)}+ \frac{\lambda^{4-\beta}}{\rr(x_i)^2}\wedge \frac{\rr(x_i)^2}{\lambda^\beta}\indic_{\beta \ge 4}+ \frac{(\log \frac{\rr(x_i)}{2\lambda})\wedge 0}{\rr(x_i)^2} \indic_{\beta=4}\)
\end{align} and that,  
for $\beta>2$,
\begin{multline*}
0\le \int_{\frac{1}{2}\frac{\rr(x_i)}{\lambda}}^{\infty } 2\pi s  \exp(\beta \g_1(s)) \dd s\le
\indic_{\rr(x_i) \ge 2\lambda} 
 \int_{\frac{1}{2}\frac{\rr(x_i)}{\lambda}}^{\infty } 2\pi s  \exp(\beta \g(s)+C_\beta)) \dd s
 +  C_\beta\indic_{\rr(x_i) \le 2\lambda}  
 \\ \le C_\beta \( \Bigr(\frac{\rr(x_i)}{2\lambda}\Bigr)^{2-\beta}\Bigr) \indic_{\rr(x_i) \ge 2\lambda} + \indic_{\rr(x_i)\le 2\lambda}\)=  C_\beta \( \frac{\rr(x_i)^{2-\beta}}{\lambda^{2-\beta}} \wedge 1\).\end{multline*}
 This latter error term can be absorbed in the former.
 Inserting into \eqref{align56} and in view of \eqref{def:Cbeta},  we obtain 
 that 
\begin{align} \label{align57}
\int_{0}^{ \hal \rr(x_i)} 2\pi  r  &\exp\Bigr(\beta \g_\lambda(r) -C_\beta \frac{r^2}{\rr(x_i)^2}  \Bigr)  \dd r 
  \ge
  \lambda^{2-\beta}\mathcal Z_\beta  \indic_{\beta>2}+ \( 2\pi \log \frac{\rr(x_i)}{\lambda} +O(1) \) \indic_{\beta=2}\\  \notag &+
O_\beta   \( \rr(x_i)^{2-\beta}\indic_{\beta \in [2,4)}+ \frac{\lambda^{4-\beta}}{\rr(x_i)^2}\wedge \frac{\rr(x_i)^2}{\lambda^\beta}\indic_{\beta \ge 4}+ \frac{(\log \frac{\rr(x_i)}{2\lambda})\wedge 0}{\rr(x_i)^2} \indic_{\beta=4}\)
\\ \notag 
  & \ge \mathcal Z_\beta\( \lambda^{2-\beta}\indic_{\beta>2}+ |\log \lambda|\indic_{\beta=2}\) \(1- C_\beta\varphi_\beta(\rr_1(x_i)) \)
   \end{align}
where $\varphi_\beta $ is as in \eqref{defvarphi} and $\rr_1$ as in \eqref{nn2}. 
Inserting this result into \eqref{minz3}, we conclude that
\begin{multline}\label{324}
  \log \Z \geq 2N\log N-N+ N ((2-\beta)( \log  \lambda )\indic_{\beta>2}+(\log |\log \lambda|)\indic_{\beta=2}) + N \log \mc{Z}_{\beta} 
\\+ \log \Bigr(N^{-N}\int_{\Lambda^{N}}\prod_{i=1}^N \Bigr( 1-C_\beta\varphi_\beta (\rr_1(x_i)) \Bigr) \dd x_1 \dots \dd x_N\Bigr).
\end{multline}
By Jensen's inequality, we may write
\begin{align}\label{eq:Jensen}
\log\Bigr(N^{-N} \int_{\Lambda^{N}} & \prod_{i=1}^N\Bigr( 1- C_\beta\varphi_\beta(\rr_1(x_i))\Bigr)\dd x_1 \dots \dd x_N
\Bigr)\\ \notag & \ge  N^{-N} \int_{\Lambda^{N}} \sum_{i=1}^N  \log  \Bigr( 1- C_\beta\varphi_\beta(\rr_1(x_i))\Bigr) \dd x_1 \dots \dd x_N\\ \notag &
\ge - C_\beta\Bigr( \sum_{i=1}^N N^{-N}\int_{\Lambda^{N}} \varphi_\beta(\rr_1(x_i)) \dd x_1 \dots \dd x_N\Bigr)\\ \notag &=-C_\beta N\Bigr(N^{-N}\int_{\Lambda^{N}}\varphi_\beta(\rr_1(x_1))\dd x_1\ldots \dd x_N\Bigr).
\end{align}

\noindent
{\bf Step 3: conclusion.}
Define
 \begin{equation*}
 \tilde{\varphi}_\beta(x):= \begin{cases} 
\frac{1+|\log x|\wedge \log\lambda }{|\log \lambda|} & \text{for $\beta=2$}\\
 \( \frac{\lambda}{x\vee \lambda}\)^{\beta-2}  & \text{for } \   2<\beta < 4 \\   
   \frac{\lambda^2}{(x\vee\lambda)^2}   \(\log  \frac{x}{\lambda}\) & \text{for } \ \beta=4 \\
   \(\frac{\lambda}{x\vee\lambda}\)^2 & \text{for } \  \beta>4 \end{cases}
\end{equation*}
so that
\begin{equation*}
 N^{-N}\int_{\Lambda^{N}}\varphi_\beta(\rr_1(x_1))\dd x_1\ldots \dd x_N\leq C_\beta^N  N^{-N}\int_{\Lambda^{N}}\tilde{\varphi}_\beta(r(x_1))\dd x_1\ldots \dd x_N,
\end{equation*}where $\rr(x_1)= \min_{i\neq 1} |x_i-x_1|$.

Let $\mc{P}$ be a Poisson point process of intensity $1$. First, 
\begin{equation*}
    \lim_{N\to\infty}N^{-N}\int_{\Lambda^{N}}\tilde{\varphi}_\beta(\rr(x_1))\dd x_1\ldots \dd x_N=\Esp_{\mc{P}}[\tilde{\varphi}_\beta(\rr(x_1))\mid x_1\in \mc{P}].
\end{equation*}
The Lebesgue density $f$ of the distribution of $\rr(x_1)$ conditionally on $x_1\in \mc{P}$ is called the nearest-neighbor function. The point is that conditionally on $x\in \mc{P}$, $\mc{P}\setminus \{x\}$ is still a Poisson point process of intensity $1$. Therefore, the probability that $\rr(x_1)\geq r$ equals the probability that the number of points in a Poisson point process of intensity $1$ in $B(0,r)$ equals to $0$, i.e.,  the probability that $X=0$ where $X$ is a Poisson variable of parameter $\pi r^2$. Hence
\begin{equation*}
    \mathbb{P}(\rr(x_1)\leq r\mid x_1\in \mc{P})=1-e^{-\pi r^2}, \quad \text{for all $r>0$},
\end{equation*}
which gives $f(r)=2\pi r e^{-\pi r^2}$ for all $r>0$.
It then follows from \eqref{eq:Jensen} that for $\beta\geq 2$, 
\begin{multline*}
\log \Bigr(N^{-N}\int_{\Lambda^{N}}  \prod_{i=1}^N\Bigr( 1-  C_\beta \tilde{\varphi}_\beta(\rr(x_i))\Bigr)\dd x_1 \dots \dd x_N\Bigr)\\ \geq  -C_\beta N(|\log\lambda|^{-1}\indic_{\beta=2}+\lambda ^{\beta-2}\indic_{\beta\in (2,4)}+(\lambda|\log\lambda|)\indic_{\beta=4}+\lambda \indic_{\beta>4}\Bigr).
\end{multline*}
Inserting into \eqref{324} gives the results \eqref{lbz} and \eqref{eq:lbz2}.
\end{proof}

\medskip

\begin{proof}[Proof of Theorem \ref{theorem:th1}]
Combining the upper and lower bounds of Propositions \ref{prop:upper bound} and \ref{prop:lower bound} concludes the proof of \eqref{eq:devlogz}. 

Let $\omega_\lambda$ be as in \eqref{eq:defgammal}. Define
\begin{equation*}
    \tilde{\F}_\lambda:=-\frac{1}{2}\sum_{i=1}^N \g_\lambda(z_i-z_{\phi_1(i)})\indic_{\phi_1\circ \phi_1(i)=i,d_id_{\phi_1(i)}=-1}
\end{equation*}
and 
\begin{equation*}
    \K_{N,\beta}^\lambda:=\int e^{-\beta \tilde{\F}_\lambda(X_N,Y_N)}\dd X_N\dd Y_N.
\end{equation*}
One has
\begin{equation*}
  \dE_{\P}[\exp(\beta( \F_\lambda-\tilde{\F}_\lambda))]=\frac{\K_{N,\beta}^\lambda}{\Z}.
\end{equation*}
By Lemma \ref{lemma:part nn}, there holds
\begin{equation*}
    \log \K_{N,\beta}^\lambda=2N\log N+N((2-\beta)(\log \lambda )\indic_{\beta>2}+(\log|\log\lambda|)\indic_{\beta=2})+N(\log\mc{Z}_\beta-1)+O(N\omega_\lambda')
\end{equation*}
where
\begin{equation*}
    \omega_\lambda'=\begin{cases}
    |\log \lambda|^{-\nicefrac{1}{2}} & \text{if $\beta=2$}\\
    \lambda^{\beta-2} & \text{if $\beta>2$}.
    \end{cases}
\end{equation*}
Since $\omega_\lambda'\leq \omega_\lambda$ we get by \eqref{eq:devlogz} that
\begin{equation*}
  \log \frac{\K_{N,\beta}^\lambda}{\Z}=NO_\beta(\omega_\lambda),
\end{equation*}
which concludes the proof of \eqref{eq:momexp}. Moreover by Lemma \ref{lemma:part nn},
\begin{equation}\label{eq:lowerK}
    \log \K_{N,\beta}^\lambda \geq 2N\log N+N(\log \mc{Z}_\beta-1)+O_\beta(N\omega_\lambda).
\end{equation}
Together with Proposition \ref{prop:lower bound} this concludes the proof of \eqref{eq:momexp}.

Let us now prove item (3). First, the proof of \eqref{eq:b1} follows from \eqref{eq:neutral bound} and the lower bound of Proposition \ref{prop:lower bound}. Combining \eqref{eq:b1}, \eqref{eq:u bound} and the lower bound of Proposition \ref{prop:lower bound} proves \eqref{eq:b2} by recalling that $\mu_\beta([R,+\infty))=O_\beta((\frac{\lambda}{R})^{\beta-2})$.
\end{proof}

\section{Energetic control on linear statistics}\label{sec5}
In this section, we leverage on our ball-growth method introduced in Section \ref{sec2} to derive an energetic control on the fluctuations of linear statistics, similar to \cite[Prop 2.5]{LS2} for the one-component plasma. In the next proposition, we show that linear statistics are of order of a power of $\lambda$ times $N^{\frac{1}{2}}$, provided the test-function is smooth enough. Let us emphasize that linear statistics are in fact expected to fluctuate much less, specifically at a rate $o_N(\sqrt{N})$ for small but fixed $\lambda$. Proving such a rigidity statement would require more involved techniques.

\begin{proof}[Proof of Theorem \ref{prop:linear}]
{\bf Step 1: the electric energy bounds the fluctuations.} As in the one-component case \cite{ss1d,LS2} or in \cite{LSZ}, the fluctuations are well bounded by the electric energy $
\int |\nab h_{\vec{\alpha}}|^2$, where $h_{\vec{\alpha}}$ is as in  \eqref{defhalpha}, as soon as $\vec{\alpha}$ is small enough. We recall the elementary argument.

Let $\xi$ be a compactly supported Lipschitz test-function from $\R^2 $ to $\R$. 
Taking the Laplacian of \eqref{defhalpha}, using Green's formula and the Cauchy-Schwartz inequality, we have
\begin{equation*}
\left|\int_{\Lambda }\xi \, \dd\( 2\pi \sum_{i=1}^{2N} d_i \delta_{z_i}^{(\alpha_i)}\)\right|=
\left|\int_{\R^2} \xi \Delta h_{\vec{\alpha}}\right|\le \|\nab \xi\|_{L^2(\Lambda)}\( \int_{\R^2}|\nab h_{\vec{\alpha}}|^2\)^\hal.\end{equation*}
On the other hand, by definition of the smeared charges, we may write
\begin{equation*}
\left|\int_{\Lambda} \xi \, \dd\(  2\pi \sum_{i=1}^{2N} d_i (\delta_{z_i}- \delta_{z_i}^{(\alpha_i)})\)
\right|\le C \|\nab \xi\|_{L^\infty} \sum_{i=1}^{2N} \alpha_i.\end{equation*} 
Combining the two relations, we  deduce that 
\begin{equation*}
\left|\Fluct_N(\xi)\right|\le C \|\nab \xi\|_{L^\infty} \(\sqrt N\( \int_{\R^2}|\nab h_{\vec{\alpha}}|^2\)^\hal+ \sum_{i=1}^{2N} \alpha_i\),
\end{equation*}
hence 
\begin{equation*}
\left|\Fluct_N(\xi)\right|^2 \le C \|\nab \xi\|_{L^\infty}^2 \(N \int_{\R^2}|\nab h_{\vec{\alpha}}|^2+ \(\sum_{i=1}^{2N} \alpha_i\)^2\).\end{equation*}
If $\xi(x)= \xi_0( x/\sqrt{N})$ then $\|\nab \xi\|_{L^\infty} \le  \frac{\|\nab \xi_0\|_{L^\infty}}{\sqrt N}$. It follows that
\begin{equation}
\label{eq1}
\left|\Fluct_N(\xi)\right|^2 \le C \|\nab \xi_0\|_{L^\infty}^2 \(\int_{\R^2}|\nab h_{\vec{\alpha}}|^2+\frac{1}{N}\(\sum_{i=1}^{2N} \alpha_i\)^2\).\end{equation}
\smallskip

\noindent
{\bf Step 2: upper bound for the electric energy by the ball-growth method.}
The proof consists in repeating the steps of the proof of Proposition \ref{prop:mino2}. Recall the radii defined in \eqref{deftau},
 \begin{equation*}
 \tau_i= \begin{cases}
\rr_2(z_{i})\wedge  \rr_2(z_{\phi_1(i) })  & \text{if $i\in \mathcal C_k$}\\
 \rr_2(z_i) & \text{otherwise}.\end{cases}
 \end{equation*}
We let
\be\label{eq:def alphai}
\alpha_i:= (\tau_i \wedge \gamma) \vee \rr_1(z_i)\ee
for a parameter $\gamma\in (0,1)$ to be chosen later. 

We then bound $\int |\nabla h_{\lambda}|^2$ from below as in the proof of Proposition \ref{prop:mino2}  and find
  \begin{align*}
& \int_{\R^2} |\nab h_{\lambda}|^2 
\ge  \int_{\R^2} |\nab h_{\vec{\alpha}}|^2   \\
 & 
+ 2\pi\sum_{k=1}^K \Bigg( \sum_{i\in \mathcal C_k}  d_i d_{\phi_1(i)} \g_\lambda(z_i-z_{\phi_1(i)} ) -d_i \( d_i+ d_{\phi_1(i)} \)(  \g((\tau_i\wedge \gamma)\vee \rr_1(z_i)) +\kappa) \\
& + O\Bigr( \Bigr( \frac{\rr_1(z_i)}{(\rr_2(z_i)\wedge \gamma) \vee \rr_1(z_i) }\Bigr)^2\Bigr)
-\sum_{i\in \mathcal{I}_k \backslash \mathcal C_k} \g((\rr_2(z_i)\wedge\gamma)\vee \rr_1(z_i)) +\kappa \Bigg)+4N\pi  \g_\lambda(0).
 \end{align*}
   Arguing  as in the proof of Corollary \ref{coromino}, we may obtain 
\begin{align*}
   \frac{1}{4\pi}  \int_{\R^2} |\nabla h_\lambda|^2-N  \g_\lambda(0)
   & \ge\frac1{4\pi} \int_{\R^2}|\nabla h_{\vec{\alpha}}|^2 -\frac{1}{2}
   \sum_{i=1}^{2N}
 \g_\lambda(z_i-z_{\phi_1(i)}) \\ & + \sum_{\substack{i\in I^{\pair}\backslash I^{\dip} \phi_2(i) \in I^\pair\\ \phi_2(\phi_1(i)) \in I^\pair}} \( \log 
  \frac{\rr_2(z_i) \wedge \rr_2(z_{\phi_1(i)}\wedge \gamma)}{\rr_1(z_i)}\) \vee 0 -C\\ &
  -  C \sum_{\substack{i \in I^{\dip}, \phi_2(i)\in I^\pair\\ \phi_2(\phi_1(i)) \in I^\pair}}
   \( \frac{\rr_1(z_i)}{(\rr_2(z_i)\wedge \gamma)\vee \rr_1(z_i)}\)^2
- C (N-K)
  \end{align*}where $K$ stands for the number of connected components of $\gamma_{2N}$.
    This way, 
  \begin{multline*}
  \F_\lambda(X_N, Y_N) 
    \ge \frac1{4\pi} \int_{\R^2}|\nabla h_{\vec{\alpha}}|^2- \hal \sum_{ i=1}^{2N} \g_\lambda(z_i-z_{\phi_1(i)} )\\+  \sum_{\substack{i\in I^{\pair}\backslash I^{\dip}\\ \phi_2(i) \in I^\pair, \phi_2(\phi_1(i)) \in I^\pair}}\( \log 
  \frac{\rr_2(z_i) \wedge \rr_2(z_{\phi_1(i)})}{\rr_1(z_i)}-C\) 
  - C |I^\pair \backslash I^\dip|
 \\ -  C \sum_{\substack{i \in I^{\dip},\phi_2(i)\in I^\pair\\ \phi_2(\phi_1(i)) \in I^\pair}}
   \( \frac{\rr_1(z_i)}{ \rr_2(z_i)}\)^2 -  C \sum_{\substack{i \in I^{\dip},\phi_2(i)\in I^\pair\\ \phi_2(\phi_1(i)) \in I^\pair}}
   \( \frac{\rr_1(z_i)}{ \gamma}\)^2
- C (N-K).
\end{multline*}
Bounding from above
$ |I^\pair\backslash I^\dip| \le C (N-p_0)$, and using the fact that $p_0\le K$, we  may rewrite this as  
\begin{equation*}
  \frac1{4\pi}  \int_{\R^2} |\nabla h_{\vec{\alpha}}|^2\le A_1+C (A_2+A_3),
\end{equation*}
with
\begin{align*}
    A_1&:= \F_\lambda(Z_{2N})+ \hal \sum_{ i=1}^{2N} \g_\lambda(z_i-z_{\phi_1(i)} )-  \sum_{\substack{i\in I^{\pair}\backslash I^{\dip}\\ \phi_2(i) \in I^\pair, \phi_2(\phi_1(i)) \in I^\pair}}\( \log 
  \frac{\rr_2(z_i) \wedge \rr_2(z_{\phi_1(i)})}{\rr_1(z_i)}-C\),
\\
    A_2&:=\sum_{i\in I^{\dip},\phi_2(i)\in I^{\dip},\phi_2(\phi_1(i)) \in I^\pair } \Bigr(\frac{\rr_1(z_i)}{\rr_2(z_i) } \Bigr)^2 + (N-p_0),\\
    A_3&:=\sum_{i\in I^{\dip},\phi_2(i)\in I^{\dip} ,\phi_2(\phi_1(i)) \in I^\pair} \Bigr(\frac{\rr_1(z_i)}{{\gamma} } \Bigr)^2 \wedge 1.
\end{align*}
{\bf Step 3: bounding an exponential moment of $A_1$ and $A_2$.} Combining \eqref{eq:alt} with the lower bound of Proposition \ref{prop:lower bound} yields
\begin{equation}\label{eq:A1}
  \log \Esp_{\P}[\exp(\beta A_1)]\leq C_\beta N\omega_\lambda, 
\end{equation} with $\omega_\lambda $ as in \eqref{eq:defgammal}. Moreover by proof of the upper bound of Proposition \ref{prop:upper bound},
\be\label{eq:A2}
    \log\Esp_{\P}\Bigr[\exp\Bigr(\sum_{i\in I^{\dip},\phi_2(i)\in I^{\dip}, \phi_2(\phi_1(i)) \in I^\dip} \Bigr(\frac{\rr_1(z_i)}{\rr_2(z_i)} \Bigr)^2+C_\beta(N-K)\Bigr)\Bigr]\\ \leq C_\beta N\omega_\lambda.
\ee
{\bf Step 4: bounding an exponential moment of $A_3$.} Fix $t>0$. We proceed as in the proof of Proposition \ref{prop:upper bound} and write
\begin{equation*}
  \log  \int e^{tA_3}e^{-\beta \F_\lambda(Z_{2N})}\dd Z_{2N}\leq \max_{K,p,p_0}\Bigr(\log |D_{2N,K,p,p_0}|+\max_{\gamma\in D_{2N,K,p,p_0}} \log \int_{\gamma_{2N}=\gamma}e^{tA_3}e^{-\beta \F_\lambda(Z_{2N})}\dd Z_{2N}\Bigr).
\end{equation*}
Let $\gamma\in D_{2N,K,p,p_0}$. Let $I$ be the set of indices in a twice isolated $2$-cycle. Letting $\gamma^1$ the restriction of $\gamma$ to $I$ and $\gamma^2$ the restriction of $\gamma$ to $I^c$, we can write
\begin{multline*}
    \int_{\gamma_{2N}=\gamma} e^{tA_3}e^{-\beta \F_\lambda(Z_{2N})}\dd Z_{2N}\leq \log\int_{\Lambda^{2p}\cap\{\gamma_{2p}((z_i)_{i\in I} )=\gamma^1\} } e^{tA_3-\beta F^{(1)}}\prod_{i\in I}\dd z_i\\ + \log\int_{\Lambda^{2N-2p}\cap \{\gamma_{2N-2p}((z_i)_{i\in I^c} )=\gamma^2\}}e^{ -\beta F^{(2)}} \prod_{i\in I^c}\dd z_i+C_\beta(N-K). 
\end{multline*}
where $F^{(1)}$ and $F^{(2)}$ are as in \eqref{def:F(1)} and \eqref{def:F(2)}. One can easily check that  
\begin{multline*}
\log\int_{\Lambda^{2p}\cap\{\gamma_{2p}((z_i)_{i\in I} )=\gamma^1\} } e^{tA_3-\beta F^{(1)}}\prod_{i\in I}\dd z_i+p\log p-p_0\log p_0-(p-p_0)\log(p-p_0)+C_0(p-p_0) \\\leq p(\log N+(2-\beta)\log \lambda)\indic_{\beta>2}+(\log|\log \lambda|)\indic_{\beta=2}+\log \mc{Z}_\beta)+O_\beta(N (\varphi_\beta(\gamma)+\omega_\lambda))\\-c_\beta N \Bigr(\frac{p-p_0}{N}\Bigr)^2,
\end{multline*}
for some $O_\beta$ depending on $t$. Therefore, inserting \eqref{eq:F(2)} and proceeding as in Step 3 of the proof of Proposition \ref{prop:upper bound}, we arrive at 
\begin{multline*}
    \log \int e^{tA_3 }e^{-\beta \F_\lambda(Z_{2N})}\dd Z_{2N}\leq 2N\log N+N((2-\beta)(\log \lambda )\indic_{\beta>2}+(\log|\log\lambda|)\indic_{\beta=2}+\log \mc{Z}_\beta-1)\\+O_\beta(N(\varphi_\beta(\gamma)+\omega_\lambda))
\end{multline*}
which yields by inserting the lower bound of Proposition \ref{prop:lower bound},
\begin{equation}\label{eq:finalA3}
\log \dE_{\P}[e^{tA_3}]\leq NC_\beta(\varphi_\beta(\gamma)+\omega_\lambda),
\end{equation}
for some constant $C_\beta$ depending on $t$.

\smallskip
\noindent
{\bf Step 5: control on the radii.} We now control the sum of the $\alpha_i$'s appearing on the right-hand side of \eqref{eq1}.

Set $R_0:=\lambda \omega_\lambda^{-\frac{1}{2(\beta-2)}}$. Let $A:=\{1,\ldots,D_0\}$ where $D_0:=\log_2(\frac{R_0}{\lambda})$. For all $R\in A$, let $\mc{N}(R)$ stand for the set of isolated dipoles of length in $[R,2R)$. Fix a large constant $M>100$ and define the event
\begin{multline*}
    \mc{A}_M=\Bigr\{ N-|I^\dip|\leq NM \omega_\lambda^{\nicefrac{1}{2}}, |\mc{N}(R)|\leq NM\Bigr(\frac{\lambda}{R}\Bigr)^{\beta-2}\text{  $\forall R\in A$ }\Bigr\}\\
\cap\Bigr\{ \cup_{R\geq R_0}\mc{N}(R)\leq NM \omega_\lambda^{\nicefrac{1}{2}}\Bigr\}.
\end{multline*}
By Theorem \ref{theorem:th1}, item (3), and a union bound, we get that for $M$ large enough, there exist $c_\beta>0$ and $C_\beta>0$ independent of $M$ such that 
\begin{equation}\label{eq:PAM}
    \P(\mc{A}_M^c)\leq C_\beta e^{-c_\beta M N^{\nicefrac{1}{2}}}.
\end{equation}

Let $I^1=(I^\dip)^c$, $I^2$ be the set of isolated dipoles of length larger than $R_0$ and $I^3$ the set of isolated dipoles of length strictly smaller than $R_0$. One can write
\begin{equation*}
   \Bigr(\sum_{i=1}^{2N} \rr_1(z_i)\Bigr)^2\leq 2\Bigr(\Bigr(\sum_{i\in I^{1} }\rr_1(z_i)\Bigr)^2+\Bigr(\sum_{i\in I^{2}\cup I^3 }\rr_1(z_i)\Bigr)^2\Bigr).
\end{equation*}
On the event $\mc{A}_M$, we have $|I^2\cup I^3|\leq C_{M} N \omega_\lambda^{\nicefrac{1}{2}}$ for some constant $C_M$ depending on $M$. By the Cauchy-Schwarz inequality, one may write
\begin{equation*}
    \Bigr(\sum_{i\in I^2\cup I^{3}}\rr_1(z_i)\Bigr)^2\leq C_\beta N\omega_\lambda^{\nicefrac{1}{2}} \sum_{i\in I^2\cup I^3}(\rr_1(z_i))^2\leq C_\beta' N^2\omega_\lambda^{\nicefrac{1}{2}},
\end{equation*}
after using that the  balls $B(z_i,\frac14|z_i-z_{\phi_1(i)}|)$ are disjoint and the definition of $\rr_1$. Moreover, one can check that on the event $\mc{A}_M$, there exists $C_M>0$ such that
\begin{equation*}
    \sum_{i\in I^{1}}\rr_1 (z_i)\leq C_M N \lambda^{\beta-2}\int_\lambda^{R_0} r^{2-\beta}\dd r.
\end{equation*}
When $\beta\geq 3$, one can bound the above by $N O_{M,\beta}(\lambda)=NO_{M,\beta}(\omega_\lambda^{\nicefrac{1}{2}})$. When $\beta<3$, one may write 
\begin{equation*}
    \sum_{i\in I^{1}}\rr_1 (z_i)\leq C_M N \lambda^{\beta-2}  R_0^{4-\beta}=C_M N R_0 \omega_\lambda^{\nicefrac{1}{2}}=C_M N \lambda\omega_\lambda^{\nicefrac{1}{2}}\omega_\lambda^{-\frac{1}{2(\beta-2)}}\leq C_M N \omega_\lambda^{\frac{1}{2}}
\end{equation*}
since $\omega_\lambda \geq \lambda^{2(\beta-2)}$. Combining the above, we deduce that on $\mc{A}_M$, we have
\begin{equation}\label{eq:sumrr1}
    \Bigr(\sum_{i=1}^{2N}\rr_1(z_i) \Bigr)^{2}\leq C_M N^{2} \omega_\lambda.
\end{equation}

\smallskip
\noindent
{\bf Step 6: conclusion.} By combining \eqref{eq:finalA3} with (\ref{eq:A1}) and (\ref{eq:A2}), we find that there exists $c_\beta>0$ such that
\begin{equation*}
    \log\Esp_{\P}\left[\exp\(c_\beta \int_{\R^2} |\nabla h_{\vec{\alpha}}|^2 \)\right]\\  \leq C_\beta N(\varphi_\beta(\gamma)+\omega_\lambda).
\end{equation*}
Therefore there exists an event $\mc{B}_M$ such that for all $M$ large enough,
\begin{equation}\label{eq2}
 \int_{\R^2} |\nabla h_{\vec{\alpha}}|^2\leq C_\beta M N (\varphi_\beta(\gamma)+\omega_\lambda),
\end{equation}
\begin{equation*}
    \P(\mc{B}_M^c)\leq C_\beta e^{-c_\beta M N\omega_\lambda}.
\end{equation*}
Combining the last displays and \eqref{eq:sumrr1} and in view of the definition of $\alpha_i$, we deduce that provided $M>100$ is large enough, we have 
\begin{equation}\label{eq:f}
|\Fluct_N(\xi)|^2\leq C_{\beta,M}\Vert \nabla \xi_0 \Vert_{L^{\infty}}N({\gamma}^2+\omega_\lambda+\varphi_\beta(\gamma))\quad \text{on the event $\mc{A}_M\cap \mc{B}_M$.} 
\end{equation}
Optimizing over $\gamma$, we may then choose $\gamma$ as follows:
\begin{equation*}
    \gamma=\begin{cases}
    |\log\lambda|^{-1/2} &\text{if $\beta=2$}\\
    \lambda^{\frac{(\beta-2)}{\beta}}& \text{if $\beta\in (2,4)$}\\
   \lambda^{1/2}|\log\lambda|^{1/4}& \text{if $\beta=4$}\\
    \lambda^{1/2}& \text{if $\beta\in (4,\infty)$}.
    \end{cases}
\end{equation*}
Noting that for $\gamma$ defined as above, 
$\varphi_\beta(\gamma)+\gamma^2=O_\beta(\omega_\lambda)$, this concludes by (\ref{eq:f}) and \eqref{eq:PAM} the proof of the theorem.
\end{proof}

\bibliographystyle{alpha}
\bibliography{two-comp-2.bib}

\end{document}